\theoremstyle{definition}
\newtheorem{theorem}{Theorem}[section]
\newtheorem{corollary}{Corollary}[theorem]
\newtheorem{lemma}[theorem]{Lemma}
\newtheorem{definition}{Definition}[section]
\newcommand{\toto}{xxx}
\newcommandx{\ST}[2][1=]{\todo[linecolor=red,backgroundcolor=red!25,bordercolor=red,#1]{#2}}
\newcommandx{\AD}[2][1=]{\todo[linecolor=blue,backgroundcolor=blue!25,bordercolor=blue,#1]{#2}}
\newcommandx{\MP}[2][1=]{\todo[linecolor=green,backgroundcolor=green!25,bordercolor=green,#1]{#2}}
\newcommandx{\EA}[2][1=]{\todo[linecolor=orange,backgroundcolor=orange!25,bordercolor=orange,#1]{#2}}
\newcommandx{\RL}[2][1=]{\todo[linecolor=olive,backgroundcolor=olive!25,bordercolor=olive,#1]{#2}}
\newcommand\redout{\bgroup\markoverwith
	{\textcolor{red}{\rule[0.5ex]{2pt}{0.8pt}}}\ULon}
\let\latex@@line\line
\def\line{\@ifnextchar(\latex@@line{\hbox to\hsize}}
\newcommand{\settitle}{\@maketitle}
\title{Blockchain Abstract Data Type}
\author{Emmanuelle Anceaume$^\ddagger$, Antonella Del Pozzo$^{\star}$, Romaric Ludinard$^{\star\star} $, \\Maria Potop-Butucaru$^\dagger$, Sara Tucci-Piergiovanni$^\star$\\~\\
	$^\ddagger$CNRS, IRISA\\
	$^\star$CEA LIST, PC 174, Gif-sur-Yvette, 91191, France\\
	$^{\star\star}$ IMT Atlantique, IRISA \\
	$^\dagger$Sorbonne Université, CNRS, Laboratoire d'Informatique de Paris 6, LIP6, Paris, France \\
}
\date{}
\begin{document}
	
			\newcounter{linecounter}
			\newcommand{\linenumbering}{(\arabic{linecounter})}
			\renewcommand{\line}[1]{\refstepcounter{linecounter}
				\label{#1}
				\linenumbering}
			\newcommand{\resetline}{\setcounter{linecounter}{0}}
\maketitle

\begin{abstract}
 \normalsize
The presented work continues the line of recent distributed computing community efforts dedicated to the theoretical aspects of blockchains. This paper is the first to specify blockchains as a composition of \emph{abstract data types} all together with a hierarchy of \emph{consistency criteria} that formally characterizes the histories admissible for distributed  programs that use them. Our work is based on an original oracle-based construction that, along with new consistency definitions, captures the eventual convergence process in blockchain systems. The paper presents as well some results on implementability of the presented abstractions  and a mapping of representative existing blockchains from both academia and industry in our framework. 



\end{abstract}

\section{Introduction}\label{sec:intro}

The paper proposes a new data type to formally model blockchains and their behaviors. We aim at providing consistency criteria to capture the correct behavior of current blockchain proposals in a unified framework. It is already known that some blockchain implementations solve eventual consistency of an append-only queue using Consensus~\cite{AF2018, HyperLedger}. The question is about the consistency criterion of blockchains as Bitcoin~\cite{bitcoin} and Ethereum~\cite{Ethereum} that technically do not solve Consensus, and their relation with Consensus in general. 



We advocate that the key point to capture blockchain behaviors is to define consistency criteria allowing mutable operations to create forks and restricting the values read, i.e. modeling the data structure as \textit{an append-only tree} and not as an append-only queue.  This way we can easily define a semantics equivalent to  eventual consistent append-only queue but as well weaker semantics. More in detail, we define a semantic equivalent to  eventual consistent append-only queue by restricting any two reads to return two chains such that one is the prefix of the other. We call this consistency property Strong Prefix (already introduced in~\cite{GGGHS17}). Additionally, we define a weaker semantics restricting any two reads to return chains that have a divergent prefix for a finite interval of the history.	We call this consistency property Eventual Prefix.

%


Another peculiarity of blockchains lies in the notion of \textit{validity} of blocks, i.e. the blockchain must contain only blocks that satisfy a given predicate. Let us note that  validity can be achieved through proof-of-work (Dwork and Naor \cite{DworkN92}) or other agreement mechanisms.   We advocate that to abstract away implementation-specific validation mechanisms, the validation process must be encapsulated in an oracle model separated from the process of updating the data structure. Because the oracle is the only generator of valid blocks and only valid blocks can be appended, it follows that it is the oracle that grants the access to the data structure and it might also own a synchronization power to control the number of forks, in terms of branches of the tree from a given block. In this respect we define oracles models such that, depending on the model, the number of forks from a given block can be: unbounded, up to $k>1$, and $k=1$ (no fork) for the  strongest oracle model. 

The blockchain is then abstracted by an oracle-based construction in which the update and consistency of the tree data structure depends on the validation and synchronization power of the oracle.  

The main contribution of the paper is a formal unified framework providing blockchain consistency criteria that can be combined with  oracle models in a proper \textit{hierachy of abstract data types}~\cite{perrin:hal-01286755} independent of the  underlying communication and failure model. Thanks to the establishment of the formal framework the following implementability results are shown: 
 
	
	%
	%
	
	\begin{itemize}
	
	\item The strongest oracle, guaranteeing no fork, has Consensus number  $\infty$~ in the Consensus hierarchy of concurrent objects ~\cite{herlihy1991wait} (Theorem~\ref{t:consensusOracleReduction}). It must be noted that we considered Consensus  defined  in~\cite{redbelly17, gilad2017algorand,CachinKPS01}, in which the \emph{Validity} property states that  a valid block can be decided even if sent by a faulty process.
	
	\item The weakest oracle,  which validates a potential unbounded number of blocks to be appended to a given block, has Consensus number  $1$~  (Theorem~\ref{t:CTtoAS}). 
	

\item The impossibility to guarantee Strong Prefix in a message-passing system if forks are allowed (Theorem~\ref{t:noForkBeStrong}). 	This means that Strong Prefix needs the strongest oracle to be implemented, which is at least as strong as Consensus. 
	
\item  A necessary condition (Theorem~\ref{th:LRCNecessity}) for  Eventual Prefix in a message-passing system,  called \emph{Update Agreement} stating that  each update sent by a correct process must be eventually received by every correct process. The result implies that it is impossible to implement Eventual Prefix if even only one message sent by a correct process is dropped. 
	
	
\end{itemize}	
The proposed  framework along with the above-mentioned results helps in classifying existing blockchains in terms of their consistency and implementability. We used the framework to classify several blockchain proposals. We showed that Bitcoin~\cite{bitcoin} and Ethereum~\cite{Ethereum} have a validation mechanism that maps to our weakest oracle and then they only implement Eventual prefix, while other proposals maps to our strongest oracle, falling in the class of those that guarantee Strong Prefix (e.g. Hyperledger Fabric~\cite{HyperLedger}, PeerCensus~\cite{PeerCensus}, ByzCoin~~\cite{BizCoin}, see Section~\ref{sec:mapp-with-exist} for further details).



\noindent \textbf{Related Work.}
Formalisation of blockchains in the lens of distributed computing has been recognized as an extremely important topic \cite{Herlihy2017}. The topic is recent and to the best of our knowledge,  no other attempt proposed a unified  framework capturing both Consensus-based and proof-of-work blockchains, as the presented paper aims at proposing.

In \cite{AM2017}, the authors present a study about the relationship of  BFT consensus and blockchains. In order to abstract the proof-of-work mechanism the authors propose a specific oracle, in the same spirit of our oracle abstraction. While  their oracle is more specific then ours, since it makes a direct reference to proof-of-work properties, it offers as well a fairness property. Note that we do not formalize fairness properties in this paper, we only offer a generic merit parameter that can be used to define fairness. Let us note that apart from the fairness property, our oracle captures the semantics of \cite{AM2017}'s oracle. 


In parallel and independently of the work  in \cite{BTADT2018TR},  \cite{AF2018} proposes a  formalization of distributed ledgers modeled as  an ordered list of records. The authors propose in their formalization three consistency criteria: eventual consistency, sequential consistency and linearizability. They discuss how Hyperldger Fabric implements eventual consistency and propose implementations for sequential consistency and linearizability using a total order broadcast abstraction.  Interestingly, they show  that a distributed ledger that provides eventual consistency can be used to solve the consensus problem. These findings confirm our results about the necessity of Consensus to solve Strong Prefix and corroborate our mapping of Hyperledger Fabric. On the other hand the proposed formalization does not propose weaker consistency semantics more suitable for proof-of-work blockchains as BitCoin. Indeed, \cite{AF2018} continues and it is complementary to the work on the first formalisation of Bitcoin as a distributed ledger proposed in \cite{Anceaume17} where the distributed ledger is modelled as a simple register. These works suggest different abstractions  to model proof-of-work and Consensus-based blockchains, respectively.
The presented paper, on the other hand, thanks to our oracle-based construction  (not present in \cite{AF2018}, \cite{Anceaume17}) generalizes both  \cite{Anceaume17} and   \cite{AF2018} to encompass both kind of blockchains in a unified framework.

Finally,  \cite{ GGGHS17} presents an implementation of the  Monotonic Prefix Consistency (MPC) criterion  and showed that no criterion stronger than MPC can be implemented in a partition-prone message-passing system. Nicely, this result and more in general solvability results for eventual consistency \cite{Dubois2015} immediately apply to our Strong Prefix criterion.



\section{Preliminaries on shared object specifications based on Abstract Data Types}
\label{sec:paper-adt}
The basic idea underlying the use of abstract data types is to specify shared objects using two complementary facets~\cite{matthieu-book}:  a sequential specification that describes the semantics of the object, and a consistency criterion over concurrent histories, i.e. the set of admissible executions in a concurrent environment. In this work we are interested in consistency criteria achievable in a distributed environment  in which processes are sequential and communicate through message-passing. 

\subsection{Abstract Data Type (ADT)}
\label{sec:abstract-data-type}

The model used to specify an abstract data type is a form of transducer, as Mealy's machines, accepting an infinite but countable number of states. The values that can be taken by the data type are encoded in the abstract state, taken in a set $Z$. It is possible to access the object using the symbols of an input alphabet $A$. Unlike the methods of a class, the input symbols of the abstract data type do not have arguments. Indeed, as one authorizes a potentially infinite set of operations, the call of the same operation with different arguments is encoded by different symbols. An operation can have two types of effects. First, it can have a side-effect that changes the abstract state, the corresponding transition in the transition system being formalized by a transition function $\tau$. Second, operations can return values taken in an output alphabet $B$, which depend on the state in which they are called and an output function $\delta$. For example, the pop operation in a stack removes the element at the top of the stack (its side effect) and returns that element (its output).

The formal definition of abstract data types is as follows.


\begin{definition}{\bf (Abstract Data Type $T$)}
An abstract data type  is a 6-tuple $T=\langle A,B, Z, \xi_0, \tau, \delta \rangle $ where:
\begin{itemize} 
\item $A$ and $B$  are countable sets called input alphabet and output alphabet;
\item $Z$ is a countable set of abstract states  and $\xi_0$ is the initial abstract state; 
\item $\tau:Z \times A \rightarrow Z$ is the transition function;
\item $\delta:Z \times A \rightarrow B$ is the output function.
\end{itemize} 
\end{definition}

\begin{definition}{\bf (Operation)}
	Let $T=\langle A,B,Z, \xi_0, \tau, \delta \rangle$ be an abstract data type. An \emph{operation} of $T$ is an element of $\Sigma= A \cup (A \times B)$. We refer to a couple $(\alpha, \beta) \in A \times B$ as $\alpha / \beta$. We extend the transition function $\tau$ over the operations and apply $\tau$ on the operations input alphabet:

$$\tau_T : \begin{cases} Z \times \Sigma \rightarrow Z \\
						(\xi, \alpha) \mapsto \tau (\xi, \alpha) \text{ if } \alpha \in A \\
						(\xi, \alpha/ \beta) \mapsto \tau(\xi, \alpha) \text{ if } \alpha / \beta \in A\times B \\ 
						  \end{cases} 
$$	
	
\end{definition}

\subsection{Sequential specification of an ADT}
\label{sec:sequ-spec-an}

An abstract data type, by its transition system, defines the sequential specification of an object.
That is, if we consider a path that  traverses its system of transitions, then the word formed by the subsequent labels  on the path is part of the sequential specification of the abstract data type, i.e. it is a sequential history. 
The language recognized by an ADT is the set of all possible words. This language defines the sequential specification of the ADT. More formally, 

\begin{definition}{\bf(Sequential specification $L(T)$)}
A finite or infinite sequence 
$\sigma = {(\sigma_i)}_{i \in D} \in \Sigma^\infty$, $D = \mathbb{N}$ or  $D  = \{0, \dots, |\sigma|-1 \}$
is a \emph{sequential history} of an abstract data type $T$ if there exists a sequence of the same length $(\xi_{i+1})_{i \in D} \in Z ^\infty$ ($\xi_0$ has already been defined has the initial state) of states of $T$ such that, for any $i \in D$, 
	\begin{itemize}
		\item the output alphabet of $\sigma_i$ is compatible with $\xi_i$: $\xi_i \in \delta_T ^{-1}(\sigma_i)$;
		\item the execution of the operation $\sigma_i$ is such that the state changed from $\xi_i$ to $\xi_{i+1}$: $\tau_T(\xi_i,\sigma_i)=\xi_{i+1}$.
	\end{itemize}
	The \emph{sequential specification} of $T$ is the set of all its possible sequential histories $L(T)$. 
\end{definition}

\subsection{Concurrent  histories of an ADT}
\label{sec:conc-spec-adts}

Concurrent histories are defined considering asymmetric event structures, i.e., partial order relations among events executed by different processes~\cite{matthieu-book}.
\begin{definition}
\label{def:adthistory}{\bf(Concurrent history $H$)}
	The execution of a program that uses an abstract data type  T =$\langle$ A, B, Z, $\xi_0, \tau, \delta \rangle $ defines a concurrent history  $H=\langle \Sigma, E, \Lambda, \mapsto, \prec, \nearrow \rangle$,  where
	\begin{itemize}
		\item $\Sigma = A \cup ( A \times B)$ is a countable set of operations;
		\item $E$ is a countable set of events that contains all the ADT operations invocations  and all ADT operation response events;
		\item $\Lambda:E \rightarrow \Sigma$ is a function which associates events to the operations in $\Sigma$;
		\item $\mapsto$: is the process order relation over the events in $E$. Two events are ordered by $\mapsto$ if they are produced by the same process;
		\item  $\prec$: is the operation order, irreflexive order over the events of $E$. For each couple $(e, e') \in E^2$, if $e$ is an operation invocation and $e'$ is the response for the same operation then $e \prec e'$, if $e'$ is the invocation of an operation occurred at time $t'$ and $e$ is the response of another operation occurred at time $t$ with $t < t'$ then $e \prec e'$; 
		\item $\nearrow$: is the program order, irreflexive order over $E$, for each couple $(e, e') \in E^2$ with $e \neq e'$ if $e \mapsto e'$ or $e \prec e'$ then $e \nearrow e'$.	
\end{itemize}
\end{definition}

\subsection{Consistency criterion}
\label{sec:coher-crit-conc}

The consistency criterion characterizes which concurrent histories are admissible for a given abstract data type. It  can be viewed as a function that associates a concurrent specification to abstract data types. Specifically,

\begin{definition}{\bf(Consistency criterion $C$)}
	A consistency criterion is a function $$ C: \mathcal{T} \rightarrow \mathcal{P}(\mathcal{H})$$ where $\mathcal{T}$ is the set of abstract data types, $ \mathcal{H}$ is a set of histories and  $ \mathcal{P}(\mathcal{H})$ is the sets of parts of $ \mathcal{H}$. 
\end{definition}

 Let $\mathcal{C}$ be the set of all the consistency criteria.
An algorithm $A_T$ implementing the ADT $T \in \mathcal{T}$ is $C$-consistent  with respect to criterion $C \in \mathcal{C}$  if all the operations terminate and all the admissible executions are $C$-consistent, i.e. they belong to the set of histories $C(T)$.


\section{BlockTree and Token oracle ADTs}
\label{sec:blocktr-token-oracle}
In this section we present the BlockTree and the token Oracle ADTs  along with consistency criteria. 

\subsection{BlockTree ADT}
\label{subsec:blocktree}
We formalize the data structure implemented by blockchain-like systems as a \emph{directed rooted tree} $bt=(V_{bt},E_{bt})$ called \emph{BlockTree}. Each vertex of the BlockTree is a \emph{block} and any edge points backward to the root, called \emph{genesis block}. The height of a block refers to its distance to the root. We denote by \(b_k\) a block located at height \(k\). By convention, the root of the BlockTree is denoted by \(b_0\). Blocks are said valid if  they satisfy a predicate $P$ which is application dependent (for instance, in Bitcoin, a block is considered valid if it can be connected to the current blockchain and does not contain transactions that double spend a previous transaction). 
We represent by $\mathcal{B}$  a countable and non empty set of blocks and by $\mathcal{B'}\subseteq \mathcal{B}$ a countable and non empty set of valid blocks, i.e.,  $\forall b \in \mathcal{B'}$, $P(b)=\top$.  By assumption $b_0 \in \mathcal{B'}$; We also denote by $\mathcal{BC}$ a countable non empty set of blockchains, where a blockchain is a path from a leaf of $bt$ to $b_0$. A blockchain is denoted by $bc$. Finally, $\mathcal{F}$ is a countable non empty set of selection functions, $f \in \mathcal{F} : \mathcal{BT} \rightarrow \mathcal{BC}$;  $f(bt)$ selects a blockchain $bc$ from the BlockTree $bt$ (note that $b_0$ is not returned) and if $bt=b_0$ then $f(b_0)=b_0$.  This reflects for instance the longest chain or the heaviest chain used in some blockchain implementations. The selection function $f$ and the predicate $P$ are parameters of the ADT which are encoded in the state and do not change over the computation.

The following notations are also deeply used: $\{b_0\}^\frown {f(bt)}$ represents the concatenation of $b_0$ with the blockchain of $bt$; and  $\{b_0\}^\frown {f(bt)}^ \frown \{b\}$ represents  the concatenation of $b_0$ with the blockchain of $bt$  and a block $b$;

\subsubsection{Sequential specification of the BlockTree}
\label{sec:blocktr-sequ-spec}

The sequential specification of the BlockTree is defined as follows.

\begin{definition}[BlockTree ADT ($BT$-$ADT$)]

The BlockTree Abstract Data Type 
is the 6-tuple 
BT-ADT=$\langle A=\{ {\sf append}$($b$)$, {\sf read}$()$: 
b \in \mathcal{B}\}, B= \mathcal{BC}\cup\{ {\sf true},{\sf false}\}, Z= \mathcal{BT} \times \mathcal{F} \times (\mathcal{B} \rightarrow \{ {\sf true},{\sf false}\}) $, $\xi_0 = (bt^0, f), \tau, \delta \rangle$, where the transition function $\tau:Z \times A \rightarrow Z$ is defined by
\begin{itemize}
	\item  $\tau((bt,f, P), {\sf append}(b))= (\{b_0\}^\frown {f(bt)}^ \frown \{b\},f,P)$  if $b\in \mathcal{B'}$; $(bt,f, P)$ otherwise;
	\item  $\tau((bt,f, P), {\sf read}())= (bt,f, P)$, 
\end{itemize}
and the output function $\delta:Z \times A \rightarrow B$ is defined by
\begin{itemize}
	\item  $\delta((bt,f, P), {\sf append}(b))= {\sf true}$ if $b \in \mathcal{B'}$; ${\sf false}$ otherwise;
	\item  $\delta((bt,f, P), {\sf read}())= \{b_0\}^\frown f(bt)$;
	\item  $\delta((bt_0,f, P), {\sf read}())= b_0$.
\end{itemize}


%
%
%
%
%

\end{definition}

The semantic of the {\sf read} and the {\sf append} operations directly depend on the selection function $f \in \mathcal{F}$. In this work we let this function generic to suit the different blockchain implementations. In the same way, predicate $P$ is let unspecified. The predicate $P$ mainly abstracts the creation  process of a block, which may fail or successfully terminate. This process will be further specified in Section \ref{sec:oracle}. 

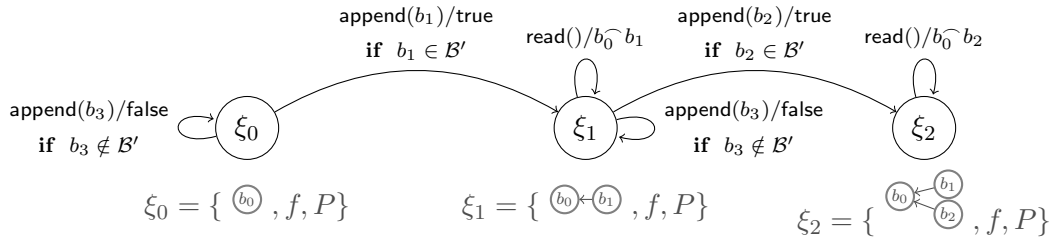
\begin{figure}
	\begin{tikzpicture}[->, auto]
	\tikzset{stato/.style={draw,circle, node distance=4.5cm}}
	\tikzset{dettaglio/.style={black!70}}
	\tikzset{blocco/.style={
			circle,
			inner sep=1pt,
			thick,
			align=center,
			draw=gray,
		}
	}
	
		\node[stato] (0) at (0,0) {$\xi_0 $};
		
		\node[dettaglio, below of=0] (s0) {$\xi_0 = \{$
			\begin{tikzpicture}[scale=0.8, every node/.style={scale=0.8}]	
				\node  [blocco] (i11) at (0.5,3)  {\scriptsize $b_0$};	
			\end{tikzpicture}
			$, f,P\}$};

		\node[stato, right of=0] (1) {$\xi_1$}; 
		\node[dettaglio, below of=1] (s1) {$\xi_1= \{$			
			\begin{tikzpicture}[scale=0.8, every node/.style={scale=0.8}]	
				\node  [blocco] (i11) at (0.5,3)  {\scriptsize $b_0$};
				\node  [blocco]  (i12) at (1.2,3) {\scriptsize $b_1$};
				\draw[<-] (i11) edge (i12)  ;
			\end{tikzpicture}
			$, f ,P\}$};

		\path (0) edge [bend left] node[align=center] {\scriptsize {\sf append}$ (b_1)/ {\sf true}$ \\ {\scriptsize{\bf if } $b_1 \in \mathcal{B'} $}} (1);
		\path (0) edge [loop left] node[align=center] {\scriptsize {\sf append}$( b_3)/ {\sf false}$ \\ {\scriptsize{\bf if } $b_3 \notin \mathcal{B'} $}} (0);

		\node[stato, right of=1] (2) {$\xi_2$}; 
		\node[dettaglio, below of=2] (21) {$\xi_2= \{$
			\begin{tikzpicture}[scale=0.8, every node/.style={scale=0.8}]
				\node  [blocco] (i11) at (0.5,3)  {\scriptsize $b_0$};
				\node  [blocco]  (i12) at (1.3,3.2) {\scriptsize $b_1$};
				\node [blocco] (i13) at (1.3,2.7) {\scriptsize $b_2$};
				\draw[<-] (i11) edge (i12) (i11) edge (i13) ;
			\end{tikzpicture}
			$, f ,P\}$};	
		
		\path (1) edge [bend left] node[align=center] {\scriptsize {\sf append}$(b_2)/ {\sf true}$ \\ {\scriptsize{\bf if } $b_2 \in \mathcal{B'} $}} (2);
		\path (1) edge [loop right] node[align=center] {\scriptsize {\sf append}$(b_3)/ {\sf false}$ \\ {\scriptsize{\bf if } $b_3 \notin \mathcal{B'} $}} (1);

		\path (1) edge [loop above] node[align=center] {\scriptsize {\sf read}$()/ {b_0 ^\frown b_1}$} (1);	
		\path (2) edge [loop above] node[align=center] {\scriptsize {\sf read}$()/ {b_0 ^\frown b_2}$} (2);		
	\end{tikzpicture}
\caption{A possible path of the transition system defined  by the BT-ADT. We use the following syntax on the edges: {\sf operation}/{\sf output}.}\label{fig:BT-ADT}
\end{figure}

\subsubsection{Concurrent specification of a BT-ADT and consistency criteria}
\label{sec:conc-spec-bt}
The concurrent specification of the BT-ADT is the set of concurrent histories. A $BT$-$ADT$ consistency  criterion is a function that returns the set of concurrent histories admissible for a BlockTree abstract data type.  We define two $BT$ consistency criteria:  \emph{BT Strong consistency} and \emph{BT Eventual consistency}. For ease of readability, we employ the following notations:
\begin{itemize}
	\item $E(a^*,r^*)$ is an infinite set containing an infinite number of {\sf append}$()$ and {\sf read}$()$ invocation and response events;
	\item $E(a,r^*)$ is an infinite set containing {\em (i)} a finite number of {\sf append}$()$ invocation and response events and {\em (ii)} an infinite number of {\sf read}$()$ invocation and response events;
	\item  $e_{inv}(o)$ and $e_{rsp}(o)$ indicate respectively the invocation and response event of an operation $o$; and $e_{rsp}(r):bc$ denotes the returned blockchain $bc$ associated with the  response event $e_{rsp}(r)$;
	\item ${\sf score}: \mathcal{BC}\rightarrow \mathbb{N}$ denotes a monotonic increasing deterministic function that takes as input a blockchain $bc$ and returns a natural number $s$ as score of $bc$, which can be the height, the weight, etc. Informally we refer to such value as the score of a blockchain; by convention we refer to the score of the blockchain uniquely composed by the genesis block as $s_0$, i.e. ${\sf score}(\{b_0\})=s_0$. Increasing monotonicity means that ${\sf score}(bc^\frown \{b\}) > {\sf score}(bc)$; 
	\item ${\sf mcps}: \mathcal{BC} \times \mathcal{BC} \rightarrow \mathbb{N}$ is a function that given two blockchains $bc$ and $bc'$ returns the score of the maximal common prefix between $bc$ and $bc'$;
	\item $bc \sqsubseteq bc^\prime$ iff $bc$ prefixes $bc^\prime$.
\end{itemize}

\paragraph{BT Strong consistency.}

The BT Strong Consistency criterion is the conjunction of the following four properties. The block validity property imposes that each block in a blockchain returned by a {\sf read}$()$ operation is \emph{valid} (i.e., satisfies predicate $P$) and has been inserted in the BlockTree with the {\sf append}$()$ operation. The Local monotonic read states  that, given the sequence of {\sf read}$()$ operations at the same process, the score of the returned blockchain never decreases.  The Strong prefix property states that for each couple of read operations, one of the returned blockchains is a prefix of the other returned one (i.e., the prefix never diverges).  Finally, the Ever growing tree states that scores of returned blockchains eventually grow. More precisely, let $s$ be the score of the blockchain returned by a read response event $r$ in $E(a^*,r^*)$, then for each {\sf read}$()$ operation $r$, the set of {\sf read}$()$ operations $r'$ such that $e_{rsp}(r)\nearrow e_{inv}(r')$ that do not return blockchains with a score greater than $s$ is finite.  More formally, the BT Strong consistency criterion is defined as follows:


\begin{definition}[BT Strong  Consistency criterion ($SC$)]
	A concurrent history $H=\langle \Sigma, E, \Lambda, \mapsto, \prec, \nearrow \rangle$ of the system that uses a BT-ADT verifies the BT Strong Consistency criterion if the following  properties hold:
	\begin{itemize}
		\item {\bf Block validity:} 
		 $\forall e_{rsp}(r) \in E, \forall b \in e_{rsp}(r):bc, b \in \mathcal B' \wedge \exists e_{inv}({\sf append}(b)) \in E,$ 
		 
		 $\;\;\;\;\;\;\;\;\;\;\;\;\;\;\;\;\;\;\;\;\;\;\;\;\;\;\;\;\;\;\;\;\ e_{inv}(append(b)) \nearrow e_{rsp}(r).$	
		\item{\bf Local monotonic read:} 
		$$\forall e_{rsp}(r), e_{rsp}(r') \in E^2, \text{ if } e_{rsp}(r) \mapsto e_{inv}(r'), \text{ then } {\sf score}(e_{rsp}(r):bc)\leq {\sf score}(e_{rsp}(r'):bc').$$
		\item {\bf Strong prefix: } 
		$$\forall e_{rsp}(r), e_{rsp}(r') \in E^2,  (e_{rsp}(r'):bc' \sqsubseteq e_{rsp}(r):bc) \vee (e_{rsp}(r):bc \sqsubseteq e_{rsp}(r'):bc').$$		
		\item {\bf Ever growing tree:} 
$\forall e_{rsp}(r)  \in E(a^*,r^*), s={\sf score}(e_{rsp}(r):bc)\text{ then} $

		  $\;\;\;\;\;\;\;\;\;\;\;\;\;\;\;\;\;\;\;\;\;\;\;\;\;\;\;\;\;\;\;\;\ |\{ e_{inv}(r') \in  E \mid e_{rsp}(r) \nearrow e_{inv}(r'), {\sf score}(e_{rsp}(r'):bc')\leq s \}| <\infty.$
	\end{itemize}
\end{definition}

Figure \ref{fig:StrongHistory} shows a concurrent history $H$ admissible by the BT Strong consistency criterion. In this example the score is the length $l$ of the blockchain and the selection function $f$ selects the longest blockchain, and in case of equality, selects the largest based on the lexicographical order. For ease of readability, 
we do not depict the {\sf append}$()$ operation. We assume the block validity property is satisfied. The Local monotonic read is easily verifiable as for each couple of read blockchains one prefixes the other. The first {\sf read}$()$ $r$ operation, enclosed in a black rectangle, is taken as reference to check the consistency criterion (the criterion has to be iteratively verified for each {\sf read}$()$ operation). Let $l$ be the score of the blockchain returned by $r$. We can identify two sets, enclosed in rectangles defined by different patterns: \emph{(i)} the finite sets of {\sf read}$()$ operations such that the score associated to each blockchain returned is smaller than or equal to $l$, and \emph{(ii)} the infinite set of {\sf read}$()$ operations such that the score is greater than $l$. We can iterate the same reasoning for each {\sf read}$()$ operation in $H$. Thus $H$ satisfies the Ever growing tree property.

\tikzset{
	mystyle/.style={
		circle,
		inner sep=1pt,
		thick,
		align=center,
		draw=black,
	}
}

\tikzset{
	mystyleLight/.style={
		circle,
		inner sep=1pt,
		thick,
		align=center,
		draw=gray!40,
	    text=gray!40
	}
}

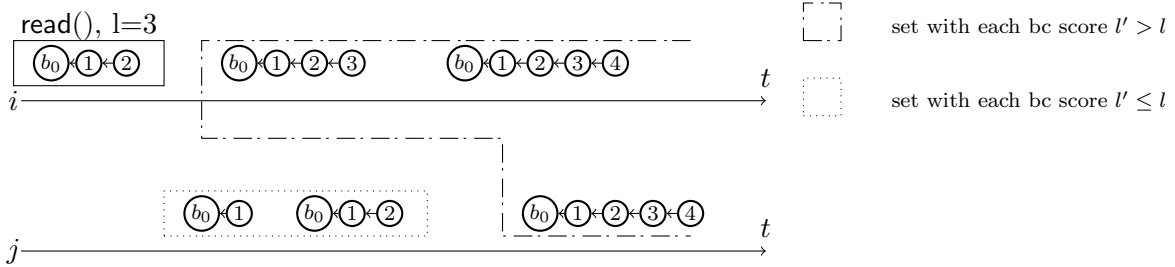
\begin{figure}[h]
	\begin{tikzpicture}
		\node  [mystyle] (i11) at (0.5,3)  {\scriptsize $b_0$};
		\node  [mystyle]  (i12) at (1,3) {\scriptsize $1$};
		\node [mystyle] (i13) at (1.5,3) {\scriptsize $2$};
			\draw[<-] (i11) edge (i12) (i12) edge (i13) ;

		\node [mystyle] (i21) at (3,3) {\scriptsize $b_0$};
		\node [mystyle] (i22) at (3.5,3) {\scriptsize $1$};
		\node [mystyle] (i23) at (4,3) {\scriptsize $2$};
		\node [mystyle] (i24) at (4.5,3) {\scriptsize $3$};
			\draw[<-] (i21) edge (i22) (i22) edge (i23) (i23) edge (i24) ;		
		
		\node [mystyle] (i31) at (6,3) {\scriptsize $b_0$};
		\node [mystyle] (i32) at (6.5,3) {\scriptsize $1$};
		\node [mystyle] (i33) at (7,3) {\scriptsize $2$};
		\node [mystyle] (i34) at (7.5,3) {\scriptsize $3$};
		\node [mystyle] (i35) at (8,3) {\scriptsize $4$};
			\draw[<-] (i31) edge (i32) (i32) edge (i33) (i33) edge (i34) (i34) edge (i35) ;

	\node[]() at (0,2.5){$i$}; \draw[->] (.1,2.5) -- (10,2.5);	\node[]() at (10,2.8){$t$};	

		\node [mystyle] (j11) at (2.5,1) {\scriptsize $b_0$};
		\node [mystyle] (j12) at (3,1) {\scriptsize $1$};
			\draw[<-] (j11) edge (j12) 
			;
			
		\node [mystyle] (j21) at (4,1) {\scriptsize $b_0$};
		\node [mystyle] (j22) at (4.5,1) {\scriptsize $1$};
		\node [mystyle] (j23) at (5,1) {\scriptsize $2$};
			\draw[<-] (j21) edge (j22) (j22) edge (j23) 
			;		
		
		\node [mystyle] (j31) at (7,1) {\scriptsize $b_0$};
		\node [mystyle] (j32) at (7.5,1) {\scriptsize $1$};
		\node [mystyle] (j33) at (8,1) {\scriptsize $2$};
		\node [mystyle] (j34) at (8.5,1) {\scriptsize $3$};
		\node [mystyle] (j35) at (9,1) {\scriptsize $4$};
			\draw[<-] (j31) edge (j32) (j32) edge (j33) (j33) edge (j34) (j34) edge (j35) ;

		\node[]() at (0,0.5){$j$}; \draw[->] (.1,0.5) -- (10,0.5); \node[]() at (10,.8){$t$};	
		
		\draw[draw=black] (0,2.7) rectangle (2,3.3); \node[]() at (1,3.5){{\sf read}$()$, l=3};
		
		\draw[ dash pattern={on 7pt off 2pt on 1pt off 3pt}] (9,3.3) -- (2.5,3.3) -- (2.5,2) -- (6.5,2) -- (6.5, 0.7) -- (9, 0.7) ;

		\draw[ dotted] (2,0.7) rectangle (5.5,1.3);  
		
		\draw[dotted] (10.5, 2.3) rectangle (11,2.8);
		\node[]() at (13.5,2.5){\scriptsize set with each bc score $l' \leq l$};	
		\draw[dash pattern={on 7pt off 2pt on 1pt off 3pt}] (10.5, 3.3) rectangle (11,3.8); 
		\node[]() at (13.5,3.5){\scriptsize set with each bc score $l'>l$};
	\end{tikzpicture}
	\caption{Concurrent history that satisfies the  BT Strong consistency criterion. In such scenario $f$ selects the longest blockchain and the blockchain score is length $l$.}\label{fig:StrongHistory}
\end{figure}

\paragraph{BT Eventual consistency.}
\label{sec:bt-event-cons}

The BT Eventual consistency criterion is the conjunction of the block validity, 
the Local monotonic read and the Ever growing tree of the BT Strong consistency criterion together with the Eventual prefix which states that for each blockchain returned by a {\sf read}$()$ operation with $s$ as score, then eventually all the {\sf read}$()$ operations will return blockchains sharing the same maximum common prefix at least up to $s$. Say differently, let $H$ be a history with an infinite number of {\sf read}$()$ operations, and let $s$ be the score of the blockchain returned by a read $r$, then the set of {\sf read}$()$ operations $r'$, such that $e_{rsp}(r)\nearrow e_{inv}(r')$, that do not return blockchains sharing the same prefix at least up to $s$ is finite.

\begin{definition}[Eventual prefix property]\label{def:eventual=preficx}
Given a concurrent history $H=\langle \Sigma, E(a,r^*), \Lambda, \mapsto, \prec, \nearrow \rangle$ of the system that uses a BT-ADT, we denote by \(s\), for any {\sf read} operation \(r \in \Sigma\) such that \(\exists e \in E(a,r^*), \Lambda(r) = e\), the score of the returned blockchain, \emph{i.e.}, \(s={\sf score}(e_{rsp}(r):bc)\).
We denote by \(E_{r}\) the set of response events of read operations that occurred after \(r\) response, \emph{i.e.} \(E_{r} = \{ e \in E \mid \exists r^\prime \in \Sigma, r^\prime={\sf read}, e=e_{rsp}(r^\prime) \wedge e_{rsp}(r) \nearrow e_{rsp}(r')  \}\). Then, H satisfies the Eventual prefix property if for all {\sf read()} operations \(r \in \Sigma\)  with score $s$,  \[
| \{ (e_{rsp}(r_h), e_{rsp}(r_k)) \in E_{r}^2 | h\neq k, {\sf mpcs}(e_{rsp}(r_h):bc_h, e_{rsp}(r_k):bc_k) < s \} | < \infty
\]
\end{definition}

 The Eventual prefix properties captures the fact that two or more concurrent blockchains can co-exist in a finite interval of time, but that  ly all the participants adopts a same branch for each cut of the history. This cut of the history is defined by a read that picks up a blockchain with a given score.  

Based on this definition, the BT Eventual consistency criterion is defined as follows:

\begin{definition}[BT Eventual consistency criterion  $EC$]
	A concurrent history $H=\langle \Sigma, E, \Lambda, \mapsto, \prec, \nearrow \rangle$ of the system that uses a BT-ADT verifies the \emph{BT Eventual consistency criterion} if it satisfies the Block validity, Local monotonic read, Ever growing tree, and the Eventual prefix properties.
\end{definition}


		

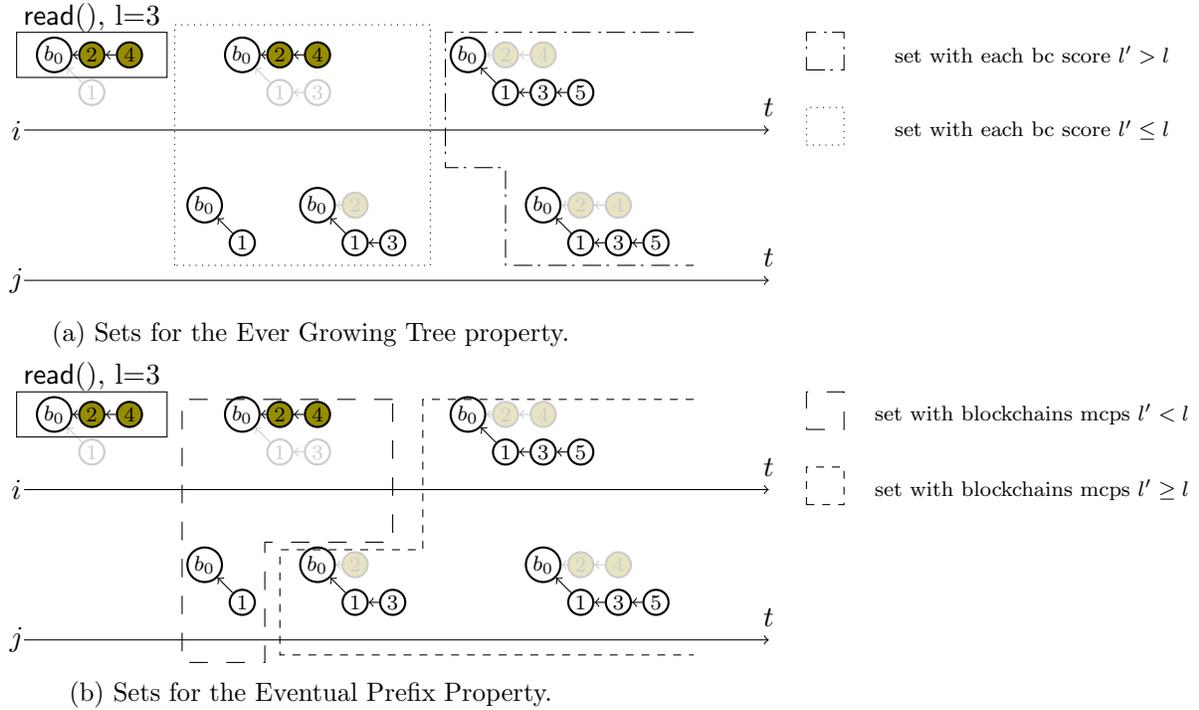
\begin{figure}[h]
	\begin{subfigure}{.5\textwidth}
	\begin{tikzpicture}
	\node [mystyle] (i11) at (0.5,3.5) {\scriptsize $b_0$};
	\node [mystyleLight] (i14) at (1,3) {\scriptsize $1$};
	\node [mystyle, fill=olive] (i12) at (1,3.5) {\scriptsize $2$};
	\node [mystyle, fill=olive] (i13) at (1.5,3.5) {\scriptsize $4$};
	\draw[<-] (i11) edge (i12) (i12) edge (i13) ;
	\draw[<-, gray!40] (i11)edge(i14);

	\node [mystyle] (i21) at (3,3.5) {\scriptsize $b_0$};
	\node [mystyle, fill=olive] (i22) at (3.5,3.5) {\scriptsize $2$};
	\node [mystyle, fill=olive] (i23) at (4,3.5) {\scriptsize $4$};
	\node [mystyleLight] (i24) at (3.5,3) {\scriptsize $1$};
	\node [mystyleLight] (i25) at (4,3) {\scriptsize $3$};
	\draw[<-] (i21) edge (i22) (i22) edge (i23);	
	\draw[<-, gray!40] (i21) edge  (i24) (i24)edge(i25);	
	
	\node [mystyle] (i31) at (6,3.5) {\scriptsize $b_0$};
	\node [mystyleLight, fill=olive!20] (i35) at (6.5,3.5) {\scriptsize $2$};
	\node [mystyleLight, fill=olive!20] (i36) at (7,3.5) {\scriptsize $4$};
	\node [mystyle] (i32) at (6.5,3) {\scriptsize $1$};
	\node [mystyle] (i33) at (7,3) {\scriptsize $3$};
	\node [mystyle] (i34) at (7.5,3) {\scriptsize $5$};
	\draw[<-] (i31) edge (i32) (i32) edge (i33) (i33) edge (i34) ;
	\draw[<-, gray!40] (i31) edge  (i35) (i35) edge(i36);
		
	\node[]() at (0,2.5){$i$}; \draw[->] (.1,2.5) -- (10,2.5);	\node[]() at (10,2.8){$t$};
	
	\node [mystyle] (j11) at (2.5,1.5) {\scriptsize $b_0$};
	\node [mystyle] (j12) at (3,1) {\scriptsize $1$};
	\draw[<-] (j11) edge (j12) 
	;
	
	\node [mystyle] (j21) at (4,1.5) {\scriptsize $b_0$};
	\node [mystyle] (j22) at (4.5,1) {\scriptsize $1$};
	\node [mystyle] (j23) at (5,1) {\scriptsize $3$};
	\node [mystyleLight, fill=olive!20] (j24) at (4.5,1.5) {\scriptsize $2$};
	\draw[<-] (j21) edge (j22) (j22) edge (j23);	
	\draw[<-, gray!40] (j21) edge (j24);	
	
	\node [mystyle] (j31) at (7,1.5) {\scriptsize $b_0$};
	\node [mystyle] (j32) at (7.5,1) {\scriptsize $1$};
	\node [mystyle] (j33) at (8,1) {\scriptsize $3$};
	\node [mystyle] (j34) at (8.5,1) {\scriptsize $5$};
	\node [mystyleLight, fill=olive!20] (j35) at (7.5,1.5) {\scriptsize $2$};
	\node [mystyleLight, fill=olive!20] (j36) at (8,1.5) {\scriptsize $4$};
	\draw[<-] (j31) edge (j32) (j32) edge (j33) (j33) edge (j34) ;	
	\draw[<-, gray!40] (j31) edge (j35) (j35) edge (j36);

		\node[]() at (0,0.5){$j$}; \draw[->] (.1,0.5) -- (10,0.5); \node[]() at (10,.8){$t$};	
	
	\draw[draw=black] (0,3.2) rectangle (2,3.8); \node[]() at (1,4){{\sf read}$()$, l=3};
	
	\draw[dash pattern={on 7pt off 2pt on 1pt off 3pt}] (9,3.8) -- (5.7,3.8) -- (5.7,2) -- (6.5,2) -- (6.5, 0.7) -- (9, 0.7) ; 
	
	\draw[dotted] (2.1, 0.7) -- (2.1,3.9) --(5.5,3.9) -- (5.5, 0.7) -- (2.1, 0.7);
	
%
	
	\draw[dotted] (10.5, 2.3) rectangle (11,2.8);
	\node[]() at (13.5,2.5){\scriptsize set with each bc score $l' \leq l$};	
	\draw[dash pattern={on 7pt off 2pt on 1pt off 3pt}] (10.5, 3.3) rectangle (11,3.8);
	\node[]() at (13.5,3.5){\scriptsize set with each bc score $l'>l$};
	\end{tikzpicture}
	\caption{Sets for the Ever Growing Tree property.} \label{fig:EventualHistoryEGT}
	\end{subfigure}
\\
	\begin{subfigure}[ASD]{.5\textwidth}
	\begin{tikzpicture}
	\node [mystyle] (i11) at (0.5,3.5) {\scriptsize $b_0$};
	\node [mystyleLight] (i14) at (1,3) {\scriptsize $1$};
	\node [mystyle, fill=olive] (i12) at (1,3.5) {\scriptsize $2$};
	\node [mystyle, fill=olive] (i13) at (1.5,3.5) {\scriptsize $4$};
	\draw[<-] (i11) edge (i12) (i12) edge (i13) ;
	\draw[<-, gray!40] (i11)edge(i14);

	\node [mystyle] (i21) at (3,3.5) {\scriptsize $b_0$};
	\node [mystyle, fill=olive] (i22) at (3.5,3.5) {\scriptsize $2$};
	\node [mystyle, fill=olive] (i23) at (4,3.5) {\scriptsize $4$};
	\node [mystyleLight] (i24) at (3.5,3) {\scriptsize $1$};
	\node [mystyleLight] (i25) at (4,3) {\scriptsize $3$};
	\draw[<-] (i21) edge (i22) (i22) edge (i23);	
	\draw[<-, gray!40] (i21) edge  (i24) (i24)edge(i25);	
	
	\node [mystyle] (i31) at (6,3.5) {\scriptsize $b_0$};
	\node [mystyleLight, fill=olive!20] (i35) at (6.5,3.5) {\scriptsize $2$};
	\node [mystyleLight, fill=olive!20] (i36) at (7,3.5) {\scriptsize $4$};
	\node [mystyle] (i32) at (6.5,3) {\scriptsize $1$};
	\node [mystyle] (i33) at (7,3) {\scriptsize $3$};
	\node [mystyle] (i34) at (7.5,3) {\scriptsize $5$};
	\draw[<-] (i31) edge (i32) (i32) edge (i33) (i33) edge (i34) ;
	\draw[<-, gray!40] (i31) edge  (i35) (i35) edge(i36);
	
	\node[]() at (0,2.5){$i$}; \draw[->] (.1,2.5) -- (10,2.5);	\node[]() at (10,2.8){$t$};
	
	\node [mystyle] (j11) at (2.5,1.5) {\scriptsize $b_0$};
	\node [mystyle] (j12) at (3,1) {\scriptsize $1$};
	\draw[<-] (j11) edge (j12) 
	;
	
	\node [mystyle] (j21) at (4,1.5) {\scriptsize $b_0$};
	\node [mystyle] (j22) at (4.5,1) {\scriptsize $1$};
	\node [mystyle] (j23) at (5,1) {\scriptsize $3$};
	\node [mystyleLight, fill=olive!20] (j24) at (4.5,1.5) {\scriptsize $2$};
	\draw[<-] (j21) edge (j22) (j22) edge (j23);	
	\draw[<-, gray!40] (j21) edge (j24);	
	
	\node [mystyle] (j31) at (7,1.5) {\scriptsize $b_0$};
	\node [mystyle] (j32) at (7.5,1) {\scriptsize $1$};
	\node [mystyle] (j33) at (8,1) {\scriptsize $3$};
	\node [mystyle] (j34) at (8.5,1) {\scriptsize $5$};
	\node [mystyleLight, fill=olive!20] (j35) at (7.5,1.5) {\scriptsize $2$};
	\node [mystyleLight, fill=olive!20] (j36) at (8,1.5) {\scriptsize $4$};
	\draw[<-] (j31) edge (j32) (j32) edge (j33) (j33) edge (j34) ;	
	\draw[<-, gray!40] (j31) edge (j35) (j35) edge (j36);

	\node[]() at (0,0.5){$j$}; \draw[->] (.1,0.5) -- (10,0.5); \node[]() at (10,.8){$t$};	
	
	\draw[draw=black] (0,3.2) rectangle (2,3.8); \node[]() at (1,4){{\sf read}$()$, l=3};
	
%
	
	\draw[dash pattern={on 7pt off 7pt}] (5,3.7) -- (2.2,3.7) -- (2.2, .2) -- (3.3, 0.2) -- (3.3, 1.8) --(5,1.8) -- (5,3.7); 
	
	\draw[dash pattern={on 3pt off 3pt}] (9,3.7) -- (5.4,3.7) -- (5.4, 1.7) -- (3.5, 1.7) -- (3.5,0.3) -- (9, 0.3); 
	
	\draw[dash pattern={on 7pt off 7pt}] (10.5, 3.3) rectangle (11,3.8);
	\node[]() at (13.5,3.5){\scriptsize set with blockchains mcps $l' < l$};
	\draw[dash pattern={on 3pt off 3pt}] (10.5, 2.3) rectangle (11,2.8);
	\node[]() at (13.5,2.5){\scriptsize set with blockchains mcps $l'\geq l$};	
	\end{tikzpicture}
	\caption{Sets for the Eventual Prefix Property.}\label{fig:EventualHistoryPrefix}
\end{subfigure}
	\caption{Concurrent history that satisfies the Eventual BT consistency criterion. In such scenario $f$ selects the longest blockchain and the blockchain score is the length $l$. In case (a) and case (b) the concurrent history is the same but different sets are outlined.}\label{fig:EventualHistory}
\end{figure}

Figure \ref{fig:EventualHistory} shows a concurrent history that satisfies the Eventual prefix property but not the Strong prefix one. Strong Prefix is not satisfied as blockchain\footnote{For ease of readability we extend the notation $b_i^ \frown b_j$ to represent concatenated blocks in a blockchain.} $b_0^\frown 1$ returned  from the first read() at process $j$ is not a prefix of  blockchain $b_0 ^\frown 2 ^\frown 4$ returned from the first read at process $i$. Note that we adopt the same conventions as for the example depicted in Figure~\ref{fig:StrongHistory} regarding the score, length and append() operations. 
We assume that the Block validity property is satisfied. The Local monotonic read property is easily verifiable. In both Figures~\ref{fig:EventualHistoryEGT} and~\ref{fig:EventualHistoryPrefix}, the first {\sf read}$()$ $r$ operation at $i$, enclosed in a black rectangle, is taken as reference to check the consistency criterion (the criterion has to be iteratively verified for each {\sf read}$()$ operation). Let $l$ be the score of the blockchain returned by $r$. In Figure \ref{fig:EventualHistoryPrefix} we can identify two sets, enclosed in rectangles defined by different patterns: \emph{(i)}  the finite set of {\sf read}$()$ operations sharing a maximum common prefix score (mcps) smaller than $l$ (the set to check for the satisfiability of the Eventual Prefix property), and \emph{(ii)}  the infinite set of {\sf read}$()$ operations such that for each couple of them $bc$, $bc'$, {\sf mcps}$(bc,bc') \geq l$. We can iterate the same reasoning for each {\sf read}$()$ operation in $H$. Thus $H$ satisfies the Eventual Prefix property.
Figure \ref{fig:UncocherentHistory}  shows a history that does not satisfy any consistency criteria defined so far.

\begin{figure}[h]
	\begin{subfigure}{.5\textwidth}
	\begin{tikzpicture}
	\node [mystyle] (i11) at (0.5,3.5) {\scriptsize $b_0$};
	\node [mystyle, gray!40] (i14) at (1,3) {\scriptsize $1$};
	\node [mystyle, fill=olive] (i12) at (1,3.5) {\scriptsize $2$};
	\node [mystyle, fill=olive] (i13) at (1.5,3.5) {\scriptsize $4$};
	\draw[<-] (i11) edge (i12) (i12) edge (i13) ;
	\draw[<-, gray!40] (i11)edge(i14);

	\node [mystyle] (i21) at (3,3.5) {\scriptsize $b_0$};
	\node [mystyle, fill=olive] (i22) at (3.5,3.5) {\scriptsize $2$};
	\node [mystyle, fill=olive] (i23) at (4,3.5) {\scriptsize $4$};
	\node [mystyle,gray!40] (i24) at (3.5,3) {\scriptsize $1$};
	\node [mystyle,gray!40] (i25) at (4,3) {\scriptsize $3$};
	\draw[<-] (i21) edge (i22) (i22) edge (i23);	
	\draw[<-, gray!40] (i21) edge  (i24) (i24)edge(i25);	
	
	\node [mystyle] (i31) at (6,3.5) {\scriptsize $b_0$};
	\node [mystyle, fill=olive] (i35) at (6.5,3.5) {\scriptsize $2$};
	\node [mystyle, fill=olive] (i36) at (7,3.5) {\scriptsize $4$};
	\node [mystyle, fill=olive] (i37) at (7.5,3.5) {\scriptsize $6$};
	\node [mystyleLight] (i32) at (6.5,3) {\scriptsize $1$};
	\node [mystyleLight] (i33) at (7,3) {\scriptsize $3$};
	\draw[<-, gray!40] (i31) edge (i32)  (i32) edge (i33);
	\draw[<-]  (i31) edge  (i35) (i35) edge(i36) (i36) edge (i37);
	
	\node[]() at (0,2.5){$i$}; \draw[->] (.1,2.5) -- (10,2.5);	\node[]() at (10,2.8){$t$};
	
	\node [mystyle] (j11) at (2.5,1.5) {\scriptsize $b_0$};
	\node [mystyle] (j12) at (3,1) {\scriptsize $1$};
	\draw[<-] (j11) edge (j12) 
	;
	
	\node [mystyle] (j21) at (4,1.5) {\scriptsize $b_0$};
	\node [mystyle] (j22) at (4.5,1) {\scriptsize $1$};
	\node [mystyle] (j23) at (5,1) {\scriptsize $3$};
	\node [mystyleLight, fill=olive!20] (j24) at (4.5,1.5) {\scriptsize $2$};
	\draw[<-] (j21) edge (j22) (j22) edge (j23);	
	\draw[<-, gray!40] (j21) edge (j24);	
	
	\node [mystyle] (j31) at (7,1.5) {\scriptsize $b_0$};
	\node [mystyle] (j32) at (7.5,1) {\scriptsize $1$};
	\node [mystyle] (j33) at (8,1) {\scriptsize $3$};
	\node [mystyle] (j34) at (8.5,1) {\scriptsize $5$};
	\node [mystyleLight, fill=olive!20] (j35) at (7.5,1.5) {\scriptsize $2$};
	\node [mystyleLight, fill=olive!20] (j36) at (8,1.5) {\scriptsize $4$};
	\draw[<-] (j31) edge (j32) (j32) edge (j33) (j33) edge (j34) ;	
	\draw[<-, gray!40] (j31) edge (j35) (j35) edge (j36);

	\node[]() at (0,0.5){$j$}; \draw[->] (.1,0.5) -- (10,0.5); \node[]() at (10,.8){$t$};	
	
	\draw[] (0,3.2) rectangle (2,3.8); \node[]() at (1,4){{\sf read}$()$, l=3};
	
	\draw[dash pattern={on 7pt off 2pt on 1pt off 3pt}] (9,3.8) -- (5.7,3.8) -- (5.7,2) -- (6.5,2) -- (6.5, 0.7) -- (9, 0.7) ; 
	
	\draw[dotted] (2.1, 0.7) -- (2.1,3.9) --(5.5,3.9) -- (5.5, 0.7) -- (2.1, 0.7);
	
	
	
\draw[dotted] (10.5, 2.3) rectangle (11,2.8);
\node[]() at (13.5,2.5){\scriptsize set with each bc score $l' \leq l$};	
\draw[dash pattern={on 7pt off 2pt on 1pt off 3pt}] (10.5, 3.3) rectangle (11,3.8);
\node[]() at (13.5,3.5){\scriptsize set with each bc score $l'>l$};

	\end{tikzpicture}
		\caption{Sets for the Ever Growing Tree property.}
\end{subfigure}
\\
\begin{subfigure}[ASD]{.5\textwidth}
		\begin{tikzpicture}
	\node [mystyle] (i11) at (0.5,3.5) {\scriptsize $b_0$};
	\node [mystyle, gray!40] (i14) at (1,3) {\scriptsize $1$};
	\node [mystyle, fill=olive] (i12) at (1,3.5) {\scriptsize $2$};
	\node [mystyle, fill=olive] (i13) at (1.5,3.5) {\scriptsize $4$};
	\draw[<-] (i11) edge (i12) (i12) edge (i13) ;
	\draw[<-, gray!40] (i11)edge(i14);

	\node [mystyle] (i21) at (3,3.5) {\scriptsize $b_0$};
	\node [mystyle, fill=olive] (i22) at (3.5,3.5) {\scriptsize $2$};
	\node [mystyle, fill=olive] (i23) at (4,3.5) {\scriptsize $4$};
	\node [mystyle,gray!40] (i24) at (3.5,3) {\scriptsize $1$};
	\node [mystyle,gray!40] (i25) at (4,3) {\scriptsize $3$};
	\draw[<-] (i21) edge (i22) (i22) edge (i23);	
	\draw[<-, gray!40] (i21) edge  (i24) (i24)edge(i25);	
	
	\node [mystyle] (i31) at (6,3.5) {\scriptsize $b_0$};
	\node [mystyle, fill=olive] (i35) at (6.5,3.5) {\scriptsize $2$};
	\node [mystyle, fill=olive] (i36) at (7,3.5) {\scriptsize $4$};
	\node [mystyle, fill=olive] (i37) at (7.5,3.5) {\scriptsize $6$};
	\node [mystyleLight] (i32) at (6.5,3) {\scriptsize $1$};
	\node [mystyleLight] (i33) at (7,3) {\scriptsize $3$};
	\draw[<-, gray!40] (i31) edge (i32)  (i32) edge (i33);
	\draw[<-]  (i31) edge  (i35) (i35) edge(i36) (i36) edge (i37);
	
	\node[]() at (0,2.5){$i$}; \draw[->] (.1,2.5) -- (10,2.5);	\node[]() at (10,2.8){$t$};
	
	\node [mystyle] (j11) at (2.5,1.5) {\scriptsize $b_0$};
	\node [mystyle] (j12) at (3,1) {\scriptsize $1$};
	\draw[<-] (j11) edge (j12) 
	;
	
	\node [mystyle] (j21) at (4,1.5) {\scriptsize $b_0$};
	\node [mystyle] (j22) at (4.5,1) {\scriptsize $1$};
	\node [mystyle] (j23) at (5,1) {\scriptsize $3$};
	\node [mystyleLight, fill=olive!20] (j24) at (4.5,1.5) {\scriptsize $2$};
	\draw[<-] (j21) edge (j22) (j22) edge (j23);	
	\draw[<-, gray!40] (j21) edge (j24);	
	
	\node [mystyle] (j31) at (7,1.5) {\scriptsize $b_0$};
	\node [mystyle] (j32) at (7.5,1) {\scriptsize $1$};
	\node [mystyle] (j33) at (8,1) {\scriptsize $3$};
	\node [mystyle] (j34) at (8.5,1) {\scriptsize $5$};
	\node [mystyleLight, fill=olive!20] (j35) at (7.5,1.5) {\scriptsize $2$};
	\node [mystyleLight, fill=olive!20] (j36) at (8,1.5) {\scriptsize $4$};
	\draw[<-] (j31) edge (j32) (j32) edge (j33) (j33) edge (j34) ;	
	\draw[<-, gray!40] (j31) edge (j35) (j35) edge (j36);

	\node[]() at (0,0.5){$j$}; \draw[->] (.1,0.5) -- (10,0.5); \node[]() at (10,.8){$t$};	
	
	\draw[draw=black] (0,3.2) rectangle (2,3.8); \node[]() at (1,4){{\sf read}$()$, l=3};
	
	
	
	\draw[dash pattern={on 7pt off 7pt}] (9,3.7) -- (2.2,3.7) -- (2.2, .2) --  (9,.2); 
	
	
%
	\draw[dash pattern={on 7pt off 7pt}] (10.5, 3.3) rectangle (11,3.8);
	\node[]() at (13.5,3.5){\scriptsize set with blockchains mcps $l' < l$};

	\end{tikzpicture}
		\caption{Sets for the Eventual Prefix Property.}
\end{subfigure}
	\caption{Concurrent history that does not satisfy any BT consistency criteria. In such scenario $f$ selects the longest blockchain and the blockchain score is the length $l$.}\label{fig:UncocherentHistory}
\end{figure}
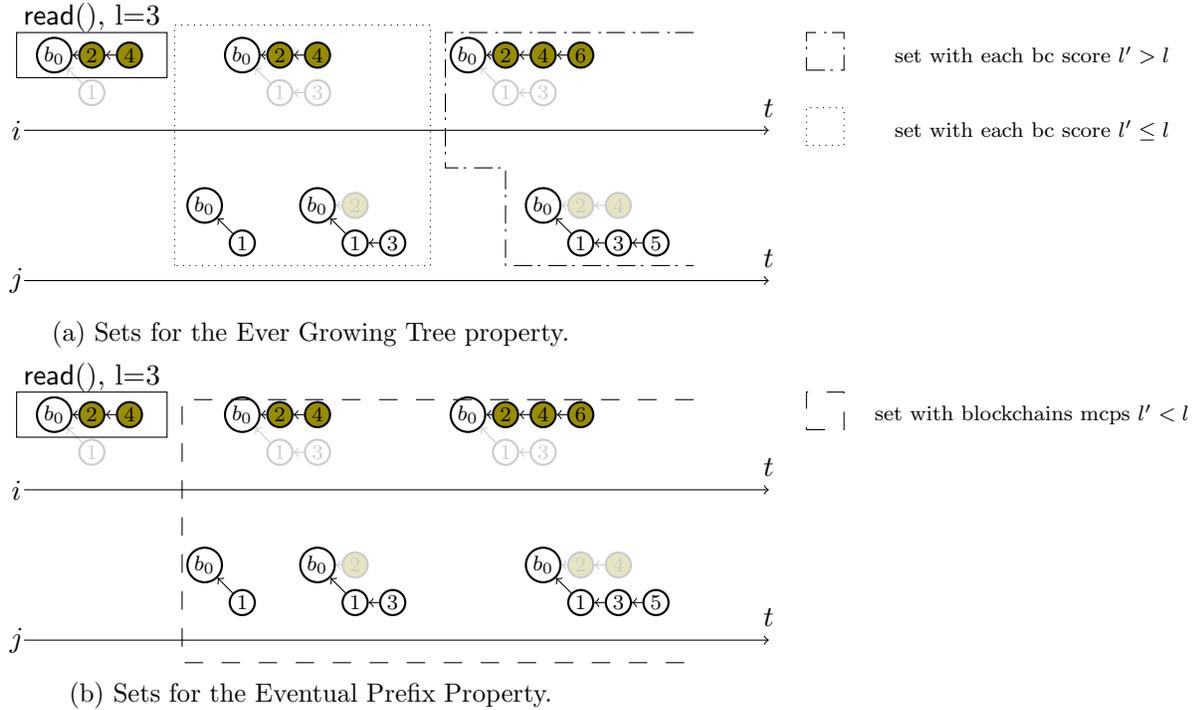

\paragraph{Relationships between $EC$ and $SC$.}

Let us denote by 
$\mathcal{H}_{EC} $ and by $\mathcal{H}_{SC} $ the set of histories satisfying respectively the $EC$ and the $SC$ consistency criteria.  

\begin{theorem}\label{th:strongImpliesEventual}
Any history $H$ satisfying $SC$ criterion satisfies $EC$  and $ \exists H$ satisfying  $EC$ that does not satisfy $SC$, i.e.,  
$\mathcal{H}_{SC} \subset \mathcal{H}_{EC}$. 

\end{theorem}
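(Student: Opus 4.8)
The proof has two parts: the inclusion $\mathcal{H}_{SC} \subseteq \mathcal{H}_{EC}$ and the strictness $\mathcal{H}_{SC} \neq \mathcal{H}_{EC}$.

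\textbf{Part 1: $SC$ implies $EC$.} Both criteria share Block validity, Local monotonic read, and Ever growing tree verbatim, so the only thing to prove is that Strong prefix (together with the other three properties, if needed) implies Eventual prefix. The plan is to show that under Strong prefix the ``bad set'' in the definition of Eventual prefix is in fact empty. Fix a {\sf read} operation $r$ with score $s$ and take any two read response events $e_{rsp}(r_h), e_{rsp}(r_k) \in E_r$. By Strong prefix, one of $bc_h = e_{rsp}(r_h):bc_h$ and $bc_k = e_{rsp}(r_k):bc_k$ is a prefix of the other; say $bc_h \sqsubseteq bc_k$. Then the maximal common prefix of $bc_h$ and $bc_k$ is $bc_h$ itself, so ${\sf mpcs}(bc_h, bc_k) = {\sf score}(bc_h)$. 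Since $e_{rsp}(r) \nearrow e_{rsp}(r_h)$ and $e_{rsp}(r) \nearrow e_{rsp}(r_k)$, I need ${\sf score}(bc_h) \geq s$; this follows from Strong prefix applied to $r$ and $r_h$ — one returned chain prefixes the other, and since ${\sf read}$ operations only append (the score never decreases across the history by monotonicity of ${\sf score}$ and the append-only structure of the tree), $bc_h$ cannot be a strict prefix of $e_{rsp}(r):bc$ without contradicting the growth guaranteed by Ever growing tree for all but finitely many reads; handling this cleanly is the one delicate point. The robust way is: apply Strong prefix to the pair $(r, r_h)$. Either $e_{rsp}(r):bc \sqsubseteq bc_h$, giving ${\sf mpcs} \geq s$ directly, or $bc_h \sqsubseteq e_{rsp}(r):bc$. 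In the latter case combine with Strong prefix on $(r,r_k)$ and the fact that $bc_h \sqsubseteq bc_k$ to conclude all three chains lie on a common path; then the mcps of $bc_h, bc_k$ is still $\geq s$ whenever at least one of them extends past $e_{rsp}(r):bc$, and the set of reads in $E_r$ returning a chain that is a (weak) prefix of $e_{rsp}(r):bc$ is finite by Ever growing tree. So the bad set is finite, which is exactly Eventual prefix. Conclude $H \in \mathcal{H}_{EC}$, hence $\mathcal{H}_{SC} \subseteq \mathcal{H}_{EC}$.

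\textbf{Part 2: strictness.} It suffices to exhibit one history in $\mathcal{H}_{EC} \setminus \mathcal{H}_{SC}$. The history drawn in Figure~\ref{fig:EventualHistory} is precisely such a witness: I would point to it and verify the four $EC$ properties (Block validity by assumption, Local monotonic read and Ever growing tree as already argued in the text accompanying the figure, and Eventual prefix as argued for Figure~\ref{fig:EventualHistoryPrefix}), then observe that Strong prefix fails because the chain $\{b_0\}^\frown 1$ returned by the first {\sf read}$()$ at process $j$ is neither a prefix of nor prefixed by $\{b_0\}^\frown 2^\frown 4$ returned by the first {\sf read}$()$ at process $i$. Hence $H \in \mathcal{H}_{EC}$ but $H \notin \mathcal{H}_{SC}$, and combined with Part 1 this gives $\mathcal{H}_{SC} \subset \mathcal{H}_{EC}$.

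\textbf{Main obstacle.} The genuinely non-trivial step is Part 1: showing that when one returned chain is a (weak) prefix of $e_{rsp}(r):bc$ rather than an extension, the mcps with a \emph{later} chain is still eventually $\geq s$. This is where Ever growing tree must be invoked to discard the finitely many ``lagging'' reads, and care is needed because Eventual prefix quantifies over pairs in $E_r^2$, so finiteness of the lagging set must be promoted to finiteness of the set of bad pairs (a finite set of lagging reads can only form finitely many pairs with the — possibly infinite — set of non-lagging reads, since in a bad pair at least one member must be lagging). Getting this counting argument stated precisely is the crux; everything else is bookkeeping.
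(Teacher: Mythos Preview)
Your decomposition matches the paper's: the three shared axioms carry over, so Part~1 reduces to deriving Eventual prefix from Strong prefix plus Ever growing tree, and Part~2 is witnessed by the forking history of Figure~\ref{fig:EventualHistory}.

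The gap is in your counting argument. You correctly argue that under Strong prefix every bad pair $(r_h,r_k)$ must contain a lagging read (one returning score $<s$), and that Ever growing tree makes the set $L$ of lagging reads in $E_r$ finite. But the step ``a finite set of lagging reads can only form finitely many pairs with the --- possibly infinite --- set of non-lagging reads'' is false: if $L\neq\emptyset$ and $E_r$ is infinite, then $L\times(E_r\setminus L)$ is infinite, and every such pair is bad since its ${\sf mpcs}$ equals the score of the lagging member, which is $<s$. So finiteness of $L$ does \emph{not} give finiteness of the bad-pair set required by Definition~\ref{def:eventual=preficx}. The paper does not go this route; it argues instead that among the infinitely many non-lagging reads any two share a common prefix of score at least $s+1$ (they all lie on a single chain by Strong prefix), hence the \emph{good}-pair set is infinite, and it takes that as establishing Eventual prefix. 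If you want to match the paper, prove the infinite-good-set statement; if you insist on the literal finite-bad-set formulation, you would need an additional argument ruling lagging reads out of $E_r$ altogether, and the stated axioms do not obviously supply one.
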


\begin{proof}
$E C \leq SC$ implies that $\mathcal{H}_{SC} \subset \mathcal{H}_{E C} $, and $\mathcal{H}_{SC} \subset \mathcal{H}_{E C} $ implies that $\forall H \in \mathcal{H}_{SC} \Rightarrow H\in \mathcal{H}_{E C}$. By hypothesis, $H$  verifies the Ever Growing Tree property, thus $\forall e_{rsp}(r) \in E(a^*,r^*)$ with $s=score(e_{rsp}(r):bc)$ then set $\{ e_{inv}(r') \in  E |e_{rsp}(r) \nearrow e_{inv}(r'), score(e_{rsp}(r''):bc)\leq s \}$ is finite, and thus, there is an infinite set  $\{ e_{inv}(r') \in  E |e_{rsp}(r) \nearrow e_{inv}(r'), score(e_{rsp}(r''):bc)> s \}$. The  Strong prefix property guarantees that $\forall e_{rsp}(r), e_{rsp}(r') \in H,  (e_{rsp}(r):bc \sqsubseteq e_{rsp}(r):bc') \vee (e_{rsp}(r):bc \sqsubseteq e_{rsp}(r'):bc')$, thus in this infinite set,  all the ${\sf read}()$ operations return blockchains sharing the same maximum prefix whose score is at least $s+1$, which satisfies the Eventual prefix property. The Eventual Prefix property demands that for each  $\forall e_{rsp}(r) \in E(a,r^*)$ with $s=score(e_{rsp}(r):bc)$ there is an infinite set defined as $\{ (e_{rsp}(r_h), e_{rsp}(r_k)) \in E_{r}^2 | h\neq k, {\sf mpcs}(e_{rsp}(r_h):bc_h, e_{rsp}(r_k):bc_k) \geq s \}$ where \(E_{r}\) denotes the set of response events of read operations that occurred after \(r\) response.
To conclude the proof we need to find a $H\in\mathcal{H}_{E C}$ and $H\not \in \mathcal{H}_{SC}$. 
Any $H$ in which at least two  {\sf read}$()$ operations return a blockchain sharing the same prefix but diverging in their suffix violate the Strong prefix property, which concludes the proof.
\end{proof}

Let us remark that the BlockTree allows at any time to create a new branch in the tree, which is called a \textit{fork} in the blockchain literature. Moreover, an append is successful only if the input block is valid with respect to a predicate. This means that histories with no append operations are trivially admitted. 
In the following we will introduce a new abstract data type called Token Oracle that when combined with the BlockTree will help in \emph{(i)} validating blocks and \emph{(ii)} controlling forks. We will first formally introduce the Token Oracle in Section \ref{sec:oracle} and then we will define the properties on the BlockTree augmented with the Token Oracle in Section \ref{sec:composition}.


\subsection{Token oracle $\Theta$-ADT}\label{sec:oracle}
In this section we formalize the Token Oracle $\Theta$ to capture the creation of blocks in the BlockTree structure. The block creation process requires that the new block must be closely related to an already existing valid block in the BlockTree structure. We abstract this implementation-dependent process by assuming that a process will obtain the right to chain a new block $b_{\ell}$ to $b_{h}$ if it successfully gains a token  $tkn_{h}$  from the token oracle $\Theta$. Once obtained, the proposed block $b_{\ell}$ is considered as valid, and will be denoted by $b_{\ell}^{tkn_{h}}$. By construction  $b_{\ell}^{tkn_{h}}\in \mathcal{B'}$. 
In the following, in order to be as much general as possible, we model blocks as objects.
More formally, when a process wants to access a generic object $obj_h$, it invokes the {\sf getToken}$(obj_h,obj_\ell)$ operation with object $obj_\ell$ from set $\mathcal{O}=\{obj_1,obj_2, \dots\}$.  If {\sf getToken}$(obj_h,obj_\ell)$ operation is successful, it returns an object $ obj_{\ell}^{tkn_{h}} \in \mathcal{O'}$, where \emph{(i)} ${tkn_{h}}$ is the token required to access object $obj_h$ and \emph{(ii)} each object $obj_k \in \mathcal{O'}$ is valid with respect to predicate $P$, i.e. $P(obj_k)=\top$.  We say that a token is {\em generated} each time it is provided to a process and it is {\em consumed} when the oracle grants the right to connect it to the previous object. Each token can be consumed at most once. To consume a token we define the token consumption {\sf consumeToken}$( obj_{\ell}^{tkn_{h}})$ operation, where the consumed token $tkn_{h}$ is the token required for the object $obj_h$.  A maximal number of tokens $k$ for an object $obj_h$ is managed by the oracle. 
The {\sf consumeToken}($obj_{\ell}^{tkn_{h}})$ side-effect on the state is the insertion of the object $obj_{\ell}^{tkn_{h}}$ in a set $K_h$ as long as the cardinality of such set is less than $k$.

In the following we specify two token oracles, which differ in the way tokens are managed. 
The first oracle, called  {\em prodigal} and denoted by $\Theta_P$, has no upper bound on the number of tokens consumed for an object, while the second oracle $\Theta_F$,  called {\em frugal}, and denoted by $\Theta_F$, assures controls that no more than $k$ token can be consumed for each object. 

$\Theta_P$ when combined with the BlockTree abstract data type will only help in validating blocks, while $\Theta_F$ manages tokens in a more controlled way to guarantee that no more than $k$ forks can occur on a given block. 

\subsubsection{$\Theta_P$-ADT and $\Theta_F$-ADT definitions}
\label{sec:theta_p-theta_f-adt}


For both oracles, when {\sf getToken}$(obj_k,obj_h)$ operation is invoked, the oracle provides a token with a certain probability $p_{\alpha_i}>0$ where $\alpha_i$ is a \say{merit} parameter characterizing  the invoking process $i$.~\footnote{The merit parameter can reflect for instance the hashing power of the invoking process.} 
Note that the oracle knows $\alpha_i$ of the invoking process $i$, which might be unknown to the process itself. For each merit $\alpha_i$, the state of the token oracle embeds an infinite tape where each cell of the tape contains either $tkn$ or $\bot$. Since each tape is identified by a specific $\alpha_i$ and $p_{\alpha_i}$,  we assume that each tape contains a pseudorandom sequence of values in $\{tkn,\bot\}$ depending on $\alpha_i$.~\footnote{We assume a pseudorandom sequence mostly indistinguishable from a Bernoulli sequence  consisting of a finite or infinite number of independent random variables $X1, X2 , X3, \dots $ such that \emph{(i)}  for each $k$, the value of $X_k$ is either $tkn$ or $\bot$; and \emph{(ii)} $\forall X_k$ the probability that $X_k=tkn$ is $p_{\alpha_i}$.} When a {\sf getToken}$(obj_k,obj_h)$ operation is invoked by a process with merit $\alpha_i$, the oracle pops the first cell from the tape associated to $\alpha_i$, and a token is provided to the process if that cell contains $tkn$.

Both oracles also enjoy  an infinite array of sets, one for each object, which is populated each time a token is consumed for a specific object. When the set cardinality reaches $k$ then no more tokens can be consumed for that object. For a sake of generality, $\Theta_P$ is defined as $\Theta_F$ with $k=\infty$ while for $\Theta_F$ a predetermined $k \in \mathbb{N}$ is specified.

\begin{figure}[h]
 	\centering
\begin{tikzpicture}
	\node[]() at (-9.8,2.5){
	\begin{tabular}{|c|c|c|c|c}
	\hline
	$\{\}_1$&$\{\}_2$&$\{\}_3$&$\{\}_4$&$\dots$\\
	\hline
	\end{tabular}
	};
	\node[]() at (-12.7,2.5) {$K$};
	\node[]() at (-11.6,2) {\scriptsize $obj_1$};
	\node[]() at (-10.8,2) {\scriptsize $obj_2$};
	\node[]() at (-9.8,2) {\scriptsize $obj_3$};
	\node[]() at (-8.9,2) {\scriptsize $obj_4$};
	\node[]() at (-7.9,2) {\scriptsize $\dots$};

%
	

	\node[]() at (-2,2){
	\begin{tabular}{|c|c|c|c|c|c|c|c}
	\hline
	{\scriptsize $tkn$}&$\bot$&$\bot$&{\scriptsize $tkn$}&$\bot$&$\bot$&$\bot$&$\dots$\\
	\hline
	\end{tabular}
	};
	\node[]() at (-6.2,2) {$tape_{\alpha_2}$};

	\node[]() at (-2,3){
	\begin{tabular}{|c|c|c|c|c|c|c|c}
	\hline
	$\bot$&$\bot$&$\bot$&$\bot$&$\bot$&$\bot$&{\scriptsize $tkn$}&$\dots$\\
	\hline
	\end{tabular}
	};
	\node[]() at (-6.2,3) {$tape_{\alpha_1}$};
	\node[]() at (-6,1) {$\vdots$};	
\end{tikzpicture}
\caption{The $\Theta_F$ abstract state. The infinite $K$ array, where at the beginning each set is initialized as empty and the infinite set of infinite tapes, one for each merit $\alpha_i$ in $\mathcal{A}$.}\label{fig:oracleState}
 \end{figure}
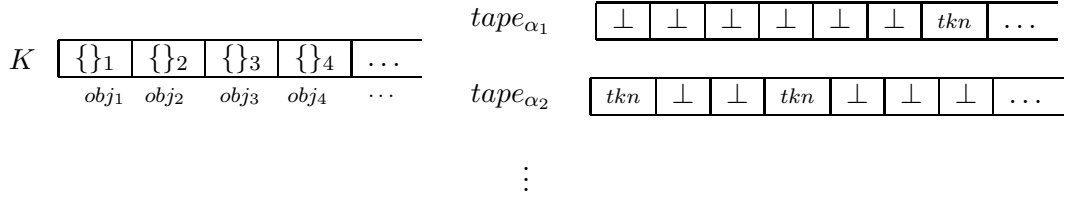
 
We  first introduce some definitions and notations.
\begin{itemize}
	\item $\mathcal{O}=\{obj_1,obj_2, \dots\}$, infinite set of generic objects uniquely identified by their index $i$;
	\item $\mathcal{O'} \subset \mathcal{O}$,  the subset of objects  valid with respect to predicate $P$, i.e. $ \forall obj'_i \in  \mathcal{O'}, P(obj'_i)=\top$.
	\item $\mathfrak{T}=\{tkn_1,tkn_2, \dots\}$  infinite set of tokens;
	\item $\mathcal{A}=\{\alpha_1, \alpha_2, \dots\}$ an infinite set of rational values;
	\item $\mathcal{M}$ is a countable not empty set of mapping functions $ m(\alpha_i)$ that generate an infinite pseudo random tape $tape_{\alpha_i}$ such that the probability to have in a cell the string $tkn$ is related to a specific $\alpha_i$, $m \in \mathcal{M}: \mathcal{A} \rightarrow \{tkn ,\bot\}^*$; 
	\item ${K}[\ ]$ is a infinite array of sets (one per object) of elements in $\mathcal{O'}$. All the sets are initialized as empty and can be fulfilled with at most $k$ elements, where $k\in \mathbb{N}$ is a parameter of the oracle ADT;
	\item ${pop}: \{tkn ,\bot\}^* \rightarrow \{tkn ,\bot\}^*$, ${pop}(a\cdot w)=w$;
	\item ${head}: \{tkn ,\bot\}^* \rightarrow \{tkn ,\bot\}^*$, ${head}(a\cdot w)=a$;	

	\item $add : \{K\} \times \mathbb{N} \times \mathcal{O'} \rightarrow \{K\}$, $add(K, i, obj_{\ell}^{tkn_{h}} )=K$ : $K[i]=K[i] \cup \{obj_{\ell}^{tkn_{h}}\}$ {\bf if} $|K[i]|<k$; {\bf else} $K[i]=K[i]$;
	\item $get : \{K\} \times  \mathbb{N} \rightarrow \mathbb{N}$, $get(K, i)=K[i]$;   
\end{itemize}

\definition{\textbf{($\Theta_F$-ADT Definition)}.}\label{def:btadt}
The $\Theta_F$ Abstract Data type is the $6$-tuple 
$\Theta_F$-ADT =$\langle$ A= \{{\sf getToken}$(obj_h,obj_\ell)$, {\sf consumeToken}$( obj_{\ell}^{tkn_{h}}):
obj_h, obj_{\ell}^{tkn_{h}} \in \mathcal{O'}, obj_\ell \in \mathcal{O}, tkn_h \in \mathfrak{T}\}$, B= $\mathcal{O'}  \cup Boolean$, Z= $m(\mathcal{A})^* \times \{K\} \times k \cup \{pop, head, dec, get\}$, $\xi_0, \tau, \delta \rangle$, where the transition function  $\tau:Z \times A \rightarrow Z$ is defined by 
\begin{itemize}
	\item  $\tau( (\{tape_{\alpha_1},\dots, tape_{\alpha_i}, \dots\}, K, k) ,{\sf getToken}(obj_h,obj_\ell))=(\{tape_{\alpha_1},\dots, pop(tape_{\alpha_i}), \dots\}, K, k)$ with $\alpha_i$ the merit of the invoking process;
	\item  $\tau( (\{tape_{\alpha_1},\dots, tape_{\alpha_i}, \dots\}, K, k), {\sf consumeToken}(obj_{\ell}^{tkn_{h}}))= (\{tape_{\alpha_1},\dots, tape_{\alpha_i}, \dots\},$ $ {add}(K, h,obj_{\ell}^{tkn_{h}}))$, if
	$tkn_{h}\in \mathfrak{T}$  ;  $\{ (\{tape_{\alpha_1},\dots, tape_{\alpha_i}, \dots\}, K, k)\}$ otherwise.
\end{itemize}
and the output function $\delta:Z \times A \rightarrow B$ is defined by 
\begin{itemize}
	\item  $\delta((\{tape_{\alpha_1},\dots, tape_{\alpha_i}, \dots\}, K,k) ,{\sf getToken}(obj_h,obj_\ell))= obj_{\ell}^{tkn_{h}}:  obj_{\ell}^{tkn_{h}} \in \mathcal{O'},tkn_h \in \mathfrak{T}$, if $head(tape_{\alpha_i})=tkn$ with $\alpha_i$ the merit of the invoking process;  $\bot$ otherwise;
	\item  $\delta( (\{tape_{\alpha_1},\dots, tape_{\alpha_i}, \dots\}, K,k) , {\sf consumeToken}(obj_{\ell}^{tkn_{h}}))= {get}(K, h)$.
	
\end{itemize}

\definition{\textbf{($\Theta_P$-ADT Definition)}.}
The $\Theta_P$ Abstract Data type is defined as the $\Theta_F$-ADT with $k=\infty$.\\

\noindent Figure \ref{fig:Theta-ADT} shows a possible path of the transition system defined  by the $\Theta_F$ and  $\Theta_P$-ADTs.

\begin{figure}
	\begin{tikzpicture}[->, auto]
	\tikzset{stato/.style={draw,circle, node distance=4.9cm}}
	\tikzset{dettaglio/.style={black!70}}
	\tikzset{blocco/.style={
			circle,
			inner sep=1pt,
			thick,
			align=center,
			draw=gray,
		}
	}
	
	\node[stato] (0) at (0,0) {$\xi_0 $};
	
	\node[dettaglio, below of=0] (s0) {$\xi_0 = \{$
		\begin{tikzpicture}[scale=0.5, every node/.style={scale=0.5}]	
		\node[] () at (0,7){};
		\node[]() at (-1,4){
			\begin{tabular}{|c|c|c|c}
			\hline
			$\{\}$&$\{\}$&$\{\}$&$\dots$\\
			\hline
			\end{tabular}
		};
		\node[]() at (-4,4) {$K$};
		
		\node[]() at (-1,3){
			\begin{tabular}{|c|c|c|c}
			\hline
			{\scriptsize $tkn$}&$\bot$&$\bot$&$\dots$\\
			\hline
			\end{tabular}
		};
		\node[]() at (-4,2) {$tape_{\alpha_2}$};
		
		\node[]() at (-1,2){
			\begin{tabular}{|c|c|c|c}
			\hline
			$\bot$&$\bot$&{\scriptsize $tkn$}&$\dots$\\
			\hline
			\end{tabular}
		};
		\node[]() at (-4,3) {$tape_{\alpha_1}$};
		\node[]() at (-4,1.5) {$\vdots$};	
		\end{tikzpicture}
		$, k\}$};

	\node[stato, right of=0] (1) {$\xi_1$}; 
	\node[dettaglio, below of=1] (s1) {$\xi_1= \{$			
		\begin{tikzpicture}[scale=0.5, every node/.style={scale=0.5}]	
		\node[] () at (0,7){};
		\node[]() at (-1,4){
			\begin{tabular}{|c|c|c|c}
			\hline
			$\{\}$&$\{\}$&$\{\}$&$\dots$\\
			\hline
			\end{tabular}
		};
		\node[]() at (-4,4) {$K$};
		
		\node[]() at (-1,3){
			\begin{tabular}{|c|c|c|c}
			\hline
			{$\bot$}&$\bot$&$\bot$&$\dots$\\
			\hline
			\end{tabular}
		};
		\node[]() at (-4,2) {$tape_{\alpha_2}$};
		
		\node[]() at (-1,2){
			\begin{tabular}{|c|c|c|c}
			\hline
			$\bot$&$\bot$&{\scriptsize $tkn$}&$\dots$\\
			\hline
			\end{tabular}
		};
		\node[]() at (-4,3) {$tape_{\alpha_1}$};
		\node[]() at (-4,1.5) {$\vdots$};	
		\end{tikzpicture}
		$,k\}$};
	
	\path (0) edge [bend left] node[align=center] {\scriptsize {\sf getToken}$(obj_1, obj_k)/ {obj_k^{tkn_1}}$ \\ {\scriptsize{\bf if }$ pop(tape_{\alpha_1})={tkn}$} } (1);
	
	
	\node[stato, right of=1] (2) {$\xi_2$}; 
	\node[dettaglio, below of=2] (21) {$\xi_2= \{$
		\begin{tikzpicture}[scale=0.5, every node/.style={scale=0.5}]	
		\node[] () at (0,7){};
		\node[]() at (-1,4){
			\begin{tabular}{|c|c|c|c}
			\hline
			$\{obj_k^{tkn_1}\}$&$\{\}$&$\{\}$&$\dots$\\
			\hline
			\end{tabular}
		};
		\node[]() at (-4,4) {$K$};
		
		\node[]() at (-1.5,3){
			\begin{tabular}{|c|c|c|c}
			\hline
			{$\bot$}&$\bot$&$\bot$&$\dots$\\
			\hline
			\end{tabular}
		};
		\node[]() at (-4,2) {$tape_{\alpha_2}$};
		
		\node[]() at (-1.5,2){
			\begin{tabular}{|c|c|c|c}
			\hline
			$\bot$&$\bot$&{\scriptsize $tkn$}&$\dots$\\
			\hline
			\end{tabular}
		};
		\node[]() at (-4,3) {$tape_{\alpha_1}$};
		\node[]() at (-4,1.5) {$\vdots$};	
		\end{tikzpicture}
		$,k \}$};	
	
	\path (1) edge [bend left] node[align=center] {\scriptsize {\sf consumeToken}$({obj_k^{tkn_1}})/ \{obj_k^{tkn_1}\}$ \\ \scriptsize {\bf if} $|K[1]|<k \wedge b_k^{tkn_1} \in \mathfrak{T}$} (2);
\end{tikzpicture}
\caption{A possible path of the transition system defined  by the $\Theta_F$ and  $\Theta_P$-ADTs. We use the following syntax on the edges: {\sf operation}/{\sf output}.}\label{fig:Theta-ADT}
\end{figure}
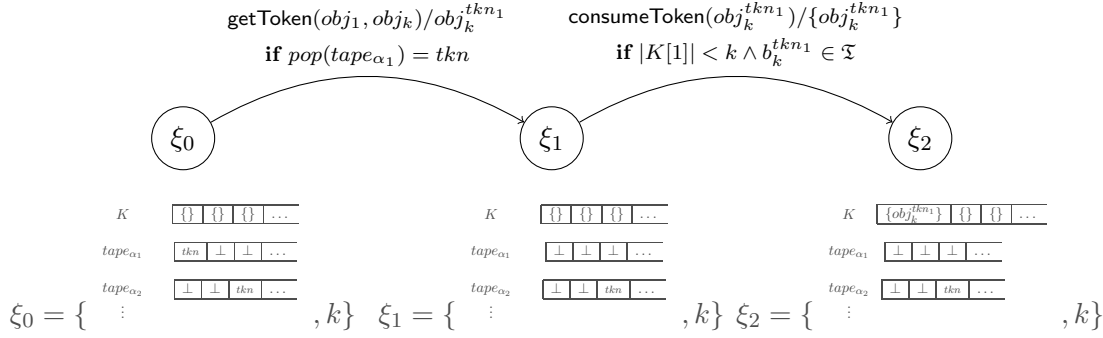

\subsection{BT-ADT augmented with $\Theta$ Oracles}
In this section we augment the BT-ADT  with $\Theta$ oracles and we analyze the histories generated by their combination.
Specifically, we define a refinement of the {\sf append}$(b_\ell)$ operation of the BT-ADT with the oracle operations which triggers the {\sf getToken}($b_h\leftarrow$last\_block$(f(bt)), b_\ell$) operation as long as it returns a token on $b_k$, i.e., ${b_\ell}^{tkn_h}$ which is a valid block in $\mathcal{B'}$. Once obtained, the token is consumed and the append terminates, i.e. the block ${b_\ell}^{tkn_h}$ is appended to the block $h$ in the blockchain $f(bt)$ ($\{b_0\}^\frown {f(bt)}|_h^ {\frown} \{b_\ell\}$). Notice that those two operations and the concatenation occur atomically.\\ 
We say that the $BT$-$ADT$ augmented with $\Theta_F$ or $\Theta_P$ oracle is a \textit{refinement}  $\mathfrak{R}(BT$-$ADT, \Theta_F)$ or $\mathfrak{R}(BT$-$ADT, \Theta_P)$ respectively.

Let us define the following auxiliary function:
\begin{itemize}
	\item $evaluate$: $\mathcal{B} \times B^{\Theta} \rightarrow bool$. $evaluate(b, \delta_b \circ \delta_a^*$ )= {\sf true} if $(\exists h: b^{tkn_h}\in  \delta_b \wedge (\exists X : b^{tkn_h} \in X \wedge X \in \delta_a^*))$; {\sf false} otherwise.
\end{itemize}

\begin{definition}\label{def:refinement}[$\mathfrak{R}(BT$-$ADT, \Theta_F)$  refinement]
Given the   BT-ADT=$\langle A, B, Z,\xi_0, \tau, \delta \rangle$, 
and the 
$\Theta_F$-ADT =($A^{\Theta}, B^{\Theta}, Z^{\Theta}$, $\xi_{0}^{\Theta}, \tau^{\Theta}, \delta^{\Theta}$), 
we have 
$\mathfrak{R}(BT-ADT, \Theta_F)$=$\langle A'= A \cup A^{\Theta}, B'=B \cup B^{\Theta}, Z'=Z \cup Z^{\Theta}$, $\xi_{0}'=\xi_0  \cup \xi_{0}^{\Theta},\tau', \delta' \rangle,$ where the transition function $\tau':Z' \times A' \rightarrow Z'$ is defined by
\begin{itemize}
	\item  $\tau_a=\tau^\prime (( \{tape_{\alpha_1},\dots, tape_{\alpha_i}, \dots\}, K,k, bt,f, P), {\sf getToken}(b_k \leftarrow \textnormal{last\_block}(bt),b_\ell))=$\\ $ (\{tape_{\alpha_1},\dots, pop(tape_{\alpha_i}), \dots\}, K,k, bt,f,P)$;
	\item  $\tau_b=\tau^\prime (( \{tape_{\alpha_1},\dots, tape_{\alpha_i}, \dots\}, K,k, bt,f,P), {\sf consumeToken}(b_{\ell}^{tkn_{h}}))=$\\ $ (\{tape_{\alpha_1},\dots, tape_{\alpha_i}, \dots\}, {add}(K, h, b_{\ell}^{tkn_{h}}),k, \{b_0\}^\frown {f(bt)}|_h^ \frown \{b_\ell\},f,P)$ if
	$tkn_{h}\in \mathfrak{T} \wedge b_{\ell}^{tkn_{h}} \in get(K,l)$ ;  $( \{tape_{\alpha_1},\dots, tape_{\alpha_i}, \dots\}, K,k, bt,f,P)$ otherwise;
	 \item $\tau'( (\{tape_{\alpha_1},\dots, tape_{\alpha_i}, \dots\}, K,k, bt,f,P), {\sf append}(b))=\tau_b \circ \tau_a^*$
	 
	 where $ \tau_b \circ \tau_a^*$ is the repeated application of $\tau_a$ until \\  $\delta_a((\{tape_{\alpha_1},\dots, tape_{\alpha_i}, \dots\}, K,k, bt,f,P), {\sf getToken}(b_k \leftarrow \textnormal{last\_block}(bt),b_\ell) )= b_{\ell}^{tkn_{h}}$ concatenated with the $\tau_b$ application;
	 \item  $\tau'(\{tape_{\alpha_1},\dots, tape_{\alpha_i}, \dots\}, K,k, bt,f,P), {\sf read}()= bt$.
\end{itemize}
and the output function $\delta':Z' \times A' \rightarrow B'$ is defined by:
\begin{itemize}
	\item  $\delta_a=\delta^\prime((\{tape_{\alpha_1},\dots, tape_{\alpha_i}, \dots\}, K,k, bt,f,P) , {\sf getToken}(b_k \leftarrow \textnormal{last\_block}(bt),b_\ell) )= b_{\ell}^{tkn_{h}}:  b_{\ell}^{tkn_{h}} \in \mathcal{B'},tkn_h \in \mathfrak{T}$, if $head(tape_{\alpha_i})=tkn$ with $\alpha_i$ the merit of the invoking process;  $\bot$ otherwise;
	\item  $\delta_b=\delta^\prime ( (\{tape_{\alpha_1},\dots, tape_{\alpha_i}, \dots\}, K,k,  bt,f,P) , {\sf consumeToken}(obj_{\ell}^{tkn_{h}}))= {get}(K, h)$; 
	 \item $\delta'( (\{tape_{\alpha_1},\dots, tape_{\alpha_i}, \dots\}, K,k, bt,f,P), {\sf append}(b))=evaluate(b,\delta_b \circ \delta_a^*$), where $ \delta_b \circ \delta_a^*$ is the repeated application of $\delta_a$ until  $\delta_a((\{tape_{\alpha_1},\dots, tape_{\alpha_i}, \dots\}, K,k, bt,f,P) ,$ $ {\sf getToken}(\textnormal{last\_block}(bt),b) )= b_{\ell}^{tkn_{h}}$ concatenated with the $\delta_b$ application;
	 \item  $\delta'((\{tape_{\alpha_1},\dots, tape_{\alpha_i}, \dots\}, K,k, bt,f,P), {\sf read}())= \{b_0\}^\frown f(bt)$;
	\item  $\delta'((\{tape_{\alpha_1},\dots, tape_{\alpha_i}, \dots\}, K,k bt_0,f,P), {\sf read}())= b_0$.
	\end{itemize}
\end{definition}

\begin{definition}[$\mathfrak{R}(BT$-$ADT, \Theta_P)$  refinement] Same definition as the $\mathfrak{R}(BT$-$ADT, \Theta_F)$  refinement.
\end{definition}

%

\begin{definition}[k-Fork Coherence]
	A concurrent history $H=\langle \Sigma, E, \Lambda, \mapsto, \prec, \nearrow \rangle$ of the BT-ADT  composed with $\Theta_F$-ADT satisfies the \emph{k-Fork Coherence} if there are at most $k$ {\sf append}$()$ operations that return $\top$ for the same token. 
\end{definition}

\begin{theorem}[k-Fork Coherence]\label{t:kFork}
	Each concurrent history $H=\langle \Sigma, E, \Lambda, \mapsto, \prec, \nearrow \rangle$ of the BT-ADT composed with  a $\Theta_F$-ADT satisfies the \emph{k-Fork Coherence}.
\end{theorem}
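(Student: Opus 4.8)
The plan is to unwind the definitions of the refinement $\mathfrak{R}(BT\textrm{-}ADT,\Theta_F)$ and of the oracle's $K[\,]$ array, and to argue by a direct counting/invariant argument that no more than $k$ successful appends can be attributed to a single token. Recall that an {\sf append}$(b)$ operation returns $\top$ precisely when $\delta'$ evaluates $evaluate(b,\delta_b\circ\delta_a^*)$ to {\sf true}, i.e. when the internal {\sf getToken} sub-invocation eventually yields a valid block $b_\ell^{tkn_h}$ \emph{and} the subsequent {\sf consumeToken}$(b_\ell^{tkn_h})$ places $b_\ell^{tkn_h}$ into the oracle set $K[h]$ (the condition $b_\ell^{tkn_h}\in get(K,\ell)$ together with $tkn_h\in\mathfrak T$ in $\tau_b$, and $evaluate$ checking $b^{tkn_h}\in X$ for some $X\in\delta_a^*$, i.e. $X=K[h]$ after the insertion). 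So a return value of $\top$ is witnessed by an actual membership $b^{tkn_h}\in K[h]$ after the atomic execution of the append.

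The key steps, in order, are as follows. First, I would fix an arbitrary token $tkn_h$ and consider the sub-multiset of append operations in $H$ that return $\top$ and whose witnessing token is $tkn_h$; by the refinement definition each such append contributed a block of the form $b^{tkn_h}$ to $K[h]$ via its embedded {\sf consumeToken} step, which by atomicity of the composed append operation executes as an indivisible transition on the shared state. Second, I would invoke the definition of the $add$ function: $add(K,h,obj_\ell^{tkn_h})$ actually modifies $K[h]$ only when $|K[h]|<k$, and otherwise leaves $K[h]$ unchanged. Hence each successful insertion strictly increases $|K[h]|$ by one (the blocks being distinct objects), and once $|K[h]|$ reaches $k$ no further insertion ever changes $K[h]$. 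Third — and this is the crucial subtlety — I need the fact that an append can return $\top$ \emph{only if its own} {\sf consumeToken} \emph{caused an insertion}; if $|K[h]|$ is already $k$ at the moment the atomic append executes, then $\tau_b$ takes the "otherwise" branch, $K$ is unchanged, and $evaluate$ returns {\sf false} because $b^{tkn_h}$ was never placed in $K[h]$ by this operation (and it cannot have been placed there by another append, since the block $b$ is this append's argument and distinct appends append distinct blocks). Therefore the number of appends returning $\top$ for $tkn_h$ is bounded by the number of genuine insertions into $K[h]$, which is at most $k$.

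Putting these together gives the theorem: for every token, at most $k$ append operations return $\top$, which is exactly the $k$-Fork Coherence property by definition. The main obstacle I anticipate is the third step — carefully justifying that a $\top$ return value is equivalent to "this append performed an insertion into $K[h]$", rather than merely "the token was valid". This requires reading $evaluate$ precisely: it demands $\exists X: b^{tkn_h}\in X \wedge X\in\delta_a^*$, where $\delta_a^*$ is the sequence of oracle-output values produced during the append, so the witnessing set $X$ must be the post-insertion value of $K[h]$ returned by $\delta_b=get(K,h)$; if the insertion did not happen (set already full), $get(K,h)$ returns the old $K[h]$ which does not contain this append's block $b$. A secondary point to nail down is the atomicity clause stated just before Definition~\ref{def:refinement} ("those two operations and the concatenation occur atomically"), which lets us treat each append as a single transition and thus speak unambiguously of "the value of $|K[h]|$ at the moment this append executes"; without atomicity one would have to reason about interleavings of {\sf getToken}/{\sf consumeToken} steps, but the stated semantics rules that out. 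Everything else is a routine induction on the length of the history maintaining the invariant $|K[h]|\le k$ and "$\#\{\text{appends returning }\top\text{ for }tkn_h\}=|K[h]|$".
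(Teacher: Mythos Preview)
Your proposal is correct and follows essentially the same approach as the paper: a direct argument from the refinement definition, observing that the $add$ function inserts into $K[h]$ only while $|K[h]|<k$, so at most $k$ {\sf consumeToken} calls (and hence at most $k$ {\sf append}s) can succeed for a given token. Your write-up is considerably more careful than the paper's brief proof sketch --- in particular your explicit unpacking of $evaluate$ to tie a $\top$ return value to an actual insertion, and your invocation of the atomicity clause, address subtleties the paper leaves implicit.
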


\begin{proof}
	We prove the theorem by considering the defined refinement (Definition \ref{def:refinement}) where  \emph{(i)} there are a infinite number of ${\sf getToken}()$ invocations for object $obj$ and \emph{(ii)} given a valid block as input parameter, the {\sf consumeToken}$()$ operation successfully terminates if it has been invoked less than $k$ times for the same token. 
	From the properties of the pseudo random sequences of tapes, if there are an infinite number of ${\sf getToken}()$ invocations for object $obj$ then there exists at least one response for which ${\sf getToken}()$ operation returns a token $t$, which, when passed as input of the {\sf consumeToken}$()$ operation it successfully terminates if at most $k-1$ tokens $t$ have been already consumed. 	
\end{proof} 

{Let us notice, the $\Theta_F$-ADT guarantees by construction the safety property (Theorem \ref{t:kFork}). Liveness properties (i.e., the Termination) for $\Theta_F$-ADT and $\Theta_P$-ADT depend on the communication model and failure model in which those are implemented.
}
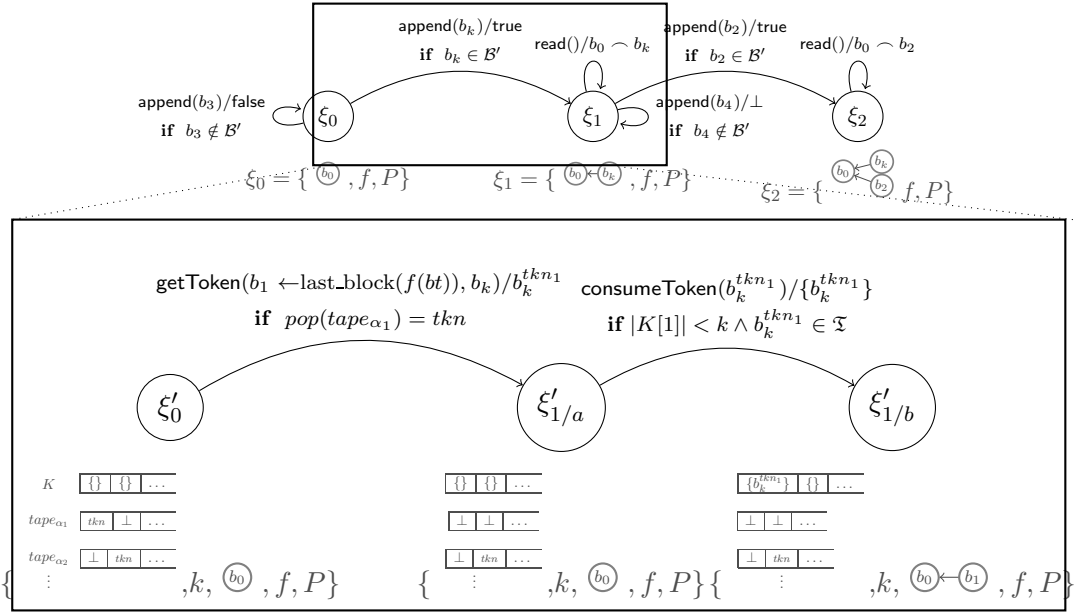
\begin{figure}
	\centering
	\begin{tikzpicture}[->, auto]
	\tikzset{stato/.style={draw,circle, node distance=4.4cm}}
	\tikzset{dettaglio/.style={black!70},node/.style={scale=0.6}}
	\tikzset{blocco/.style={
			circle,
			inner sep=1pt,
			thick,
			align=center,
			draw=gray,
		}
	}	
	\node[] () at (5,4) {
		\begin{tikzpicture}[->, auto, scale=0.5, every node/.style={scale=0.8}]

		\node[stato] (0) at (0,0) {$\xi_0 $};
		
		\node[dettaglio, below of=0] (s0) {$\xi_0 = \{$
			\begin{tikzpicture}[scale=0.8, every node/.style={scale=0.8}]	
			\node  [blocco] (i11) at (0.5,3)  {\scriptsize $b_0$};	
			\end{tikzpicture}
			$,f,P\}$};

		\node[stato, right of=0] (1) {$\xi_1$}; 
		\node[dettaglio, below of=1] (s1) {$\xi_1= \{$			
			\begin{tikzpicture}[scale=0.8, every node/.style={scale=0.8}]	
			\node  [blocco] (i11) at (0.5,3)  {\scriptsize $b_0$};
			\node  [blocco]  (i12) at (1.2,3) {\scriptsize $b_k$};
			\draw[<-] (i11) edge (i12)  ;
			\end{tikzpicture}
			$ ,f,P \}$};
		
		\path (0) edge [bend left] node[align=center] {\scriptsize {\sf append}$( b_k)/ {\sf true}$ \\ {\scriptsize{\bf if } $b_k \in \mathcal{B'} $}} (1);
		\path (0) edge [loop left] node[align=center] {\scriptsize {\sf append}$(b_3)/ {\sf false}$ \\ {\scriptsize{\bf if } $b_3 \notin \mathcal{B'} $}} (0);

		\node[stato, right of=1] (2) {$\xi_2$}; 
		\node[dettaglio, below of=2] (21) {$\xi_2= \{$
			\begin{tikzpicture}[scale=0.8, every node/.style={scale=0.8}]
			\node  [blocco] (i11) at (0.5,3)  {\scriptsize $b_0$};
			\node  [blocco]  (i12) at (1.3,3.2) {\scriptsize $b_k$};
			\node [blocco] (i13) at (1.3,2.7) {\scriptsize $b_2$};
			\draw[<-] (i11) edge (i12) (i11) edge (i13) ;
			\end{tikzpicture}
			$ f,P \}$};	
		
		\path (1) edge [bend left] node[align=center] {\scriptsize {\sf append}$( b_2)/ {\sf true}$ \\ {\scriptsize{\bf if } $b_2 \in \mathcal{B'} $}} (2);
		\path (1) edge [loop right] node[align=center] {\scriptsize {\sf append}$(b_4)/ {\bot}$ \\ {\scriptsize{\bf if } $b_4 \notin \mathcal{B'} $}} (1);

		\path (1) edge [loop above] node[align=center] {\scriptsize {\sf read}$()/ {b_0 \frown b_k}$} (1);	
		\path (2) edge [loop above] node[align=center] {\scriptsize {\sf read}$()/ {b_0 \frown b_2}$} (2);	
		
		\draw[thick] (-.4,-1.3) rectangle (9,3);	
	\end{tikzpicture}
};

\node[stato] (0) at (0,0) {$\xi_0^\prime $};

\node[dettaglio, below of=0] (s0) {$\{$
	\begin{tikzpicture}[scale=0.5, every node/.style={scale=0.5}]	
	\node[] () at (0,7){};
	\node[]() at (-.9,4){
		\begin{tabular}{|c|c|c}
		\hline
		$\{\}$&$\{\}$&$\dots$\\
		\hline
		\end{tabular}
	};
	\node[]() at (-3,4) {$K$};
	
	\node[]() at (-.9,3){
		\begin{tabular}{|c|c|c}
		\hline
		{\scriptsize $tkn$}&$\bot$&$\dots$\\
		\hline
		\end{tabular}
	};
	\node[]() at (-3,2) {$tape_{\alpha_2}$};
	
	\node[]() at (-.9,2){
		\begin{tabular}{|c|c|c}
		\hline
		$\bot$&{\scriptsize $tkn$}&$\dots$\\
		\hline
		\end{tabular}
	};
	\node[]() at (-3,3) {$tape_{\alpha_1}$};
	\node[]() at (-3,1.5) {$\vdots$};	
	\end{tikzpicture},$k$,
	\begin{tikzpicture}[scale=0.8, every node/.style={scale=0.8}]
	\node  [blocco] (i11) at (0.5,3)  {\scriptsize $b_0$};
	\end{tikzpicture}
	$,f,P\}$};

\node[stato] (1) at (5.2,0) {$\xi_{1/a}^\prime$}; 
\node[dettaglio, below of=1] (s1) {$ \{$			
	\begin{tikzpicture}[scale=0.5, every node/.style={scale=0.5}]	
	\node[] () at (-2,7){};
	\node[]() at (-2.5,4){
		\begin{tabular}{|c|c|c}
		\hline
		$\{\}$&$\{\}$&$\dots$\\
		\hline
		\end{tabular}
	};
	
	\node[]() at (-2.5,3){
		\begin{tabular}{|c|c|c}
		\hline
		{$\bot$}&$\bot$&$\dots$\\
		\hline
		\end{tabular}
	};
	
	\node[]() at (-2.5,2){
		\begin{tabular}{|c|c|c}
		\hline
		$\bot$&{\scriptsize $tkn$}&$\dots$\\
		\hline
		\end{tabular}
	};
	\node[]() at (-3,1.5) {$\vdots$};	
	\end{tikzpicture},$k$,
	\begin{tikzpicture}[scale=0.8, every node/.style={scale=0.8}]
	\node  [blocco] (i11) at (0.5,3)  {\scriptsize $b_0$};
	\end{tikzpicture}
	$,f,P\}$};
\path (0) edge [bend left] node[align=center] {\scriptsize {\sf getToken}$(b_1\leftarrow$last\_block$(f(bt)), b_k)/{b_k^{tkn_1}}$ \\ {\scriptsize{\bf if } $pop(tape_{\alpha_1})={tkn}$} } (1);


\node[stato, right of=1] (2) {$\xi_{1/b}^\prime$}; 
\node[dettaglio, below of=2] (21) {$ \{$
	\begin{tikzpicture}[scale=0.5, every node/.style={scale=0.5}]	
	\node[] () at (-4,7){};
	\node[]() at (-2.5,4){
		\begin{tabular}{|c|c|c}
		\hline
		$\{b_k^{tkn_1}\}$&$\{\}$&$\dots$\\
		\hline
		\end{tabular}
	};
	
	\node[]() at (-3,3){
		\begin{tabular}{|c|c|c}
		\hline
		{$\bot$}&$\bot$&$\dots$\\
		\hline
		\end{tabular}
	};
	
	\node[]() at (-2.9,2){
		\begin{tabular}{|c|c|c}
		\hline
		$\bot$&{\scriptsize $tkn$}&$\dots$\\
		\hline
		\end{tabular}
	};
	\node[]() at (-3,1.5) {$\vdots$};	
	\end{tikzpicture},$k$,
	\begin{tikzpicture}[scale=0.8, every node/.style={scale=0.8}]
	\node  [blocco] (i11) at (0.5,3)  {\scriptsize $b_0$};
	\node  [blocco]  (i12) at (1.3,3) {\scriptsize $b_1$};
	\draw[<-] (i11) edge (i12)  ;
	\end{tikzpicture}
	$,f,P \}$};	

\path (1) edge [bend left] node[align=center] {\scriptsize {\sf consumeToken}$(b_k^{tkn_1})/\{b_k^{tkn_1}\}$ \\ {\scriptsize {\bf if} $|K[1]|<k \wedge b_k^{tkn_1} \in \mathfrak{T}$}  }  (2);

\draw[thick] (-2.1,-2.7) rectangle (11.9,2.5); 
\draw[-, dotted] (2,3.2)-- (-2.1,2.5);
\draw[-, dotted] (12,2.5)-- (6,3.2);
\end{tikzpicture}

\caption{Refinement of the {\sf append}$()$ operation. We use the following syntax on the edges: {\sf operation}/{\sf output}.}\label{fig:Composition}
\end{figure}

%

\subsection{Hierarchy}
In this section we define a hierarchy between different BT-ADT satisfying different consistency criteria when augmented with different oracle ADT. We use the following notation:  BT-ADT$_{SC}$ and BT-ADT$_{E C}$ to refer respectively to BT-ADT generating concurrent histories that satisfies the $SC$ and the $E C$ consistency criteria. When augmented with the oracles we have the following four typologies, where for the \emph{frugal} oracle we explicit the value of $k$: $\mathfrak{R}(\text{BT-ADT}_{SC}, \Theta_{F,k})$, $\mathfrak{R}(\text{BT-ADT}_{SC}, \Theta_P)$, $\mathfrak{R}(\text{BT-ADT}_{ E C}, \Theta_{P})$, $\mathfrak{R}(\text{BT-ADT}_{E C}, \Theta_{F,k})$.\\

In the following we want study the relationship among the different refinements. Without loss of generality, let us consider only the set of histories $\hat{\mathcal{H}}^{\mathfrak{R}(\text{BT-ADT},\Theta)}$ such that each history $\hat{H}^{\mathfrak{R}(\text{BT-ADT},\Theta)} \in \hat{\mathcal{H}}^{\mathfrak{R}(BT-ADT,\Theta)}$  is purged from the unsuccessful {\sf append}$()$ response events (i.e., such that the returned value is $\bot$). Let $\hat{\mathcal{H}}^{\mathfrak{R}(\text{BT-ADT},\Theta_{F,k})}$ be the concurrent set of histories generated by a BT-ADT refined with $\Theta_{F,k}$-ADT and let $\hat{\mathcal{H}}^{\mathfrak{R}(\text{BT-ADT},\Theta_P)}$ be the concurrent set of histories generated by a BT-ADT refined with $\Theta_P$-ADT.

\begin{theorem}\label{t:oracolsHierarchy}
	$\hat{\mathcal{H}}^{\mathfrak{R}(\text{BT-ADT},\Theta_F)} \subseteq \hat{\mathcal{H}}^{\mathfrak{R}(\text{BT-ADT},\Theta_P)}$. 
\end{theorem}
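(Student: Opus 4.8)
The plan is to prove the inclusion by a simulation argument that exploits the fact that, by definition, $\Theta_P$-ADT is exactly $\Theta_F$-ADT instantiated with $k=\infty$. I would fix an arbitrary $\hat H \in \hat{\mathcal H}^{\mathfrak{R}(\text{BT-ADT},\Theta_F)}$ and let $\mathcal E$ be an execution of $\mathfrak{R}(\text{BT-ADT},\Theta_F)$ whose concurrent history, once every unsuccessful (i.e.\ $\bot$-returning) {\sf append}$()$ operation is deleted, equals $\hat H$. The goal is to build an execution $\mathcal E'$ of $\mathfrak{R}(\text{BT-ADT},\Theta_P)$ generating the very same $\hat H$. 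I take $\mathcal E'$ to be the execution in which each process invokes exactly the operations whose invocation/response events occur in $\hat H$ --- that is, all {\sf read}$()$ operations and all the {\sf append}$()$ operations of $\mathcal E$ that returned {\sf true} --- scheduled so as to respect $\mapsto$, $\prec$ and $\nearrow$ of $\hat H$, with the same merit assigned to each process and the same pseudo-random tapes (these depend only on the merits and on the mapping $m\in\mathcal M$, not on $k$).

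The core step is an induction along a linearization of $\mathcal E'$ showing that the BlockTree component $bt$ of the global state takes, in $\mathcal E'$, exactly the same sequence of values it takes in $\mathcal E$ after the unsuccessful {\sf append}s are removed. This rests on two observations about the refinement of Definition~\ref{def:refinement}: a {\sf read}$()$ never modifies $bt$ and returns $\{b_0\}^\frown f(bt)$, so its output is a function of $bt$ alone; and a successful {\sf append}$(b)$ transforms $bt$ into the BlockTree obtained by chaining $b$ to the last block of the selected chain $f(bt)$, which again depends only on the current value of $bt$, not on the tapes nor on the arrays $K[\,]$. In $\mathcal E$ an unsuccessful {\sf append} leaves $bt$ untouched (its {\sf consumeToken} is a no-op because the relevant set already has $k$ elements, and the {\sf getToken} repetitions only pop tape cells), so deleting such operations does not change the trajectory of $bt$. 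In $\mathcal E'$, since $k=\infty$, the guard $|K[i]|<k$ of the $add$ function is vacuously true, hence every {\sf consumeToken} invoked inside an {\sf append} succeeds and every {\sf append} of $\mathcal E'$ returns {\sf true}; combined with the induction hypothesis that $bt$ agrees with the corresponding point of $\mathcal E$, each {\sf append}$(b)$ yields the same updated $bt$ and each {\sf read}$()$ returns the same blockchain as in $\hat H$.

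To finish, I note that the concurrent history generated by $\mathcal E'$ has exactly the events of $\hat H$, with the same labels and the same orders; since $\mathcal E'$ contains no $\bot$-returning {\sf append}, purging it removes nothing, so its purged history is $\hat H$ itself, which witnesses $\hat H\in\hat{\mathcal H}^{\mathfrak{R}(\text{BT-ADT},\Theta_P)}$. As $\hat H$ was arbitrary, $\hat{\mathcal H}^{\mathfrak{R}(\text{BT-ADT},\Theta_F)}\subseteq\hat{\mathcal H}^{\mathfrak{R}(\text{BT-ADT},\Theta_P)}$. It is worth stressing that the restriction to purged histories is what makes the statement hold: without it, $\Theta_F$ could produce histories containing $\bot$-appends that $\Theta_P$ can never generate.

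The step I expect to be the main obstacle is making rigorous that the trajectory of $bt$ is insensitive to the tapes and to the $K[\,]$ arrays, and that the deleted unsuccessful {\sf append}s can be dropped without breaking the realizability of the remaining partial order. This is where the precise reading of the $\tau_b\circ\tau_a^*$ construction of Definition~\ref{def:refinement}, and of exactly which events ``purging'' eliminates, matters; once those are pinned down, the rest follows routinely from $\Theta_P$ being the $k=\infty$ instance of $\Theta_F$.
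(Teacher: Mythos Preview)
Your argument is correct and rests on the same idea as the paper's: since $\Theta_P$ is by definition $\Theta_F$ instantiated with $k=\infty$, any (purged) history realizable under the tighter bound is realizable under no bound at all. The paper's proof is a one-line appeal to Theorem~\ref{t:kFork} to this effect, whereas you spell out the simulation carefully and argue that the BlockTree trajectory is insensitive to the tape and $K[\,]$ components of the state---considerably more rigorous than the paper's sketch, but the approach is the same.
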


\begin{proof}
	The proof follows from Theorem \ref{t:kFork} considering that $\mathfrak{R}(BT,\Theta_P)$ can generate histories with an infinite number of {\sf append}$()$ operations that successfully terminate while $\mathfrak{R}(BT,\Theta_F)$ can generate history with at most $k$ {\sf append}$()$ operations that successfully terminate.
\end{proof}

\begin{theorem}\label{t:oracolsHierarchyF}
	If $k_1 \leq k_2$ then $\hat{\mathcal{H}}^{\mathfrak{R}(\text{BT-ADT},\Theta_{F,k_1})} \subseteq \hat{\mathcal{H}}^{\mathfrak{R}(\text{BT-ADT},\Theta_{F,k_2})}$. 
\end{theorem}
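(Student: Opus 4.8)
The plan is to show that every concurrent history produced by the refinement $\mathfrak{R}(\text{BT-ADT}, \Theta_{F,k_1})$ is also produced by $\mathfrak{R}(\text{BT-ADT}, \Theta_{F,k_2})$ whenever $k_1 \le k_2$, by exhibiting a lock-step correspondence between runs of the two refined transition systems. The only place where the parameter $k$ enters the dynamics is the guard $|K[h]|<k$ inside the $add$ function used by ${\sf consumeToken}()$, and hence inside the refined ${\sf append}()$ transition $\tau_b \circ \tau_a^*$. Every other component of the state — the pseudorandom tapes $tape_{\alpha_i}$, the selection function $f$, the predicate $P$, the BlockTree $bt$ — and every other transition (${\sf getToken}()$, ${\sf read}()$) is literally the same in both refinements. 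Moreover the initial state $\xi_0' = \xi_0 \cup \xi_0^{\Theta}$ differs between the two only in the stored value of $k$. So a larger $k$ yields a strictly weaker guard, and intuitively can only enable \emph{more} behaviours.

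First I would fix a history $\hat H \in \hat{\mathcal{H}}^{\mathfrak{R}(\text{BT-ADT},\Theta_{F,k_1})}$ together with a witnessing run of $\mathfrak{R}(\text{BT-ADT}, \Theta_{F,k_1})$ that generates it: a linearization of the events of $\hat H$ compatible with $\nearrow$, together with the associated sequence of abstract states. I would then replay the identical sequence of operations, with the identical tapes, against $\mathfrak{R}(\text{BT-ADT}, \Theta_{F,k_2})$ from the (otherwise identical) initial state. The proof is an induction on the length of the run establishing the invariant: \emph{the state reached in the $\Theta_{F,k_1}$ run and the state reached in the $\Theta_{F,k_2}$ run agree on every component except the fixed parameter $k$} (in particular the arrays $K[\cdot]$ coincide). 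For the inductive step, ${\sf getToken}()$ and ${\sf read}()$ do not inspect $K$, so they behave identically; and whenever ${\sf consumeToken}(b_\ell^{tkn_h})$ (inside some ${\sf append}()$) succeeds in the $\Theta_{F,k_1}$ run, the invariant gives $|K[h]| = |K[h]^{(k_2\text{-run})}|$ at that point, and this value is $< k_1 \le k_2$, so the transition is enabled in the $\Theta_{F,k_2}$ run as well, producing the same output and the same BlockTree update. Hence the sequences of invocation/response events, the labelling $\Lambda$, and the orders $\mapsto$, $\prec$, $\nearrow$ coincide, so the $\Theta_{F,k_2}$ run generates exactly $\hat H$ (after purging unsuccessful ${\sf append}()$ responses, which are absent from $\hat{\mathcal{H}}$ by definition). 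Therefore $\hat H \in \hat{\mathcal{H}}^{\mathfrak{R}(\text{BT-ADT},\Theta_{F,k_2})}$, which is the claimed inclusion. As a sanity check consistent with the style of the surrounding results, one may also note via Theorem~\ref{t:kFork} that $\hat H$ satisfies $k_1$-Fork Coherence, hence a fortiori $k_2$-Fork Coherence, i.e. it satisfies the only extra safety constraint $\Theta_{F,k_2}$ imposes.

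The main obstacle I anticipate is purely a bookkeeping one: making precise the invariant that the two runs never diverge. A priori, from the larger $k_2$ one could fear that some ${\sf consumeToken}()$ becomes enabled in the $\Theta_{F,k_2}$ system that is blocked in the $\Theta_{F,k_1}$ system, which would let the runs drift apart and break the state-agreement invariant. The resolution is that we are not quantifying over \emph{all} runs of $\mathfrak{R}(\text{BT-ADT}, \Theta_{F,k_2})$ but replaying one \emph{fixed} run of $\mathfrak{R}(\text{BT-ADT}, \Theta_{F,k_1})$; since that run never attempts a $(k_1{+}1)$-st consumption for any token, the extra transitions available under $k_2$ are simply never exercised, and the lock-step invariant is preserved throughout. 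Everything else (the genesis-block base case, the identical handling of the tapes, and the fact that purging $\bot$-responses commutes with the replay) is routine.
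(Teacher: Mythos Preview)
Your argument is correct and actually more rigorous than the paper's. The paper dispatches the theorem in one line, simply invoking Theorem~\ref{t:kFork} (that is, $k$-Fork Coherence) ``applying the same reasoning as for the proof of Theorem~\ref{t:oracolsHierarchy} with $k_1\le k_2$'': since every history of $\mathfrak{R}(\text{BT-ADT},\Theta_{F,k_1})$ has at most $k_1\le k_2$ successful appends per token, it is admissible for $\Theta_{F,k_2}$. You instead give an explicit lock-step simulation, replaying a fixed $\Theta_{F,k_1}$-run inside the $\Theta_{F,k_2}$ transition system and maintaining the invariant that all state components except the stored parameter $k$ coincide; the guard $|K[h]|<k$ is then the only place the two systems differ, and it is monotone in $k$.

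What each buys: the paper's argument is terse and leans on $k$-Fork Coherence as if it were a complete characterization of the admissible histories, which is morally right here (the $add$ guard is the sole $k$-dependent constraint) but is not formally established as such. Your simulation makes this explicit and shows \emph{why} the inclusion holds at the level of the operational semantics, not just at the level of a derived safety property. Your final ``sanity check'' paragraph is in fact the entirety of the paper's proof.
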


\begin{proof}
	The proof follows from Theorem \ref{t:kFork} applying the same reasoning as for the proof of Theorem \ref{t:oracolsHierarchy} with $k_1 \leq k_2$.
\end{proof}

Finally, from Theorem \ref{th:strongImpliesEventual} the next corollary follows.

\begin{corollary}\label{c:oracolsHierarchySameOracle}
	$\hat{\mathcal{H}}^{(\mathfrak{R}(\text{BT-ADT}_{SC},\Theta)} \subseteq \hat{\mathcal{H}}^{\mathfrak{R}(\text{BT-ADT}_{E C},\Theta)}$. 
\end{corollary}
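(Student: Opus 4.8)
$\hat{\mathcal{H}}^{\mathfrak{R}(\text{BT-ADT}_{SC},\Theta)} \subseteq \hat{\mathcal{H}}^{\mathfrak{R}(\text{BT-ADT}_{EC},\Theta)}$.

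The plan is to reduce the corollary directly to Theorem~\ref{th:strongImpliesEventual}, together with the observation that refining a BT-ADT with an oracle $\Theta$ acts orthogonally to the consistency criterion imposed on the underlying tree. First I would fix an arbitrary oracle $\Theta \in \{\Theta_P,\Theta_{F,k}\}$ and take any purged history $\hat{H} \in \hat{\mathcal{H}}^{\mathfrak{R}(\text{BT-ADT}_{SC},\Theta)}$. By definition of this set, $\hat{H}$ is the purge (i.e.\ the removal of the unsuccessful {\sf append}$()$ response events) of a concurrent history of $\mathfrak{R}(\text{BT-ADT},\Theta)$ whose projection onto the BT-ADT operations $\{{\sf append}(),{\sf read}()\}$ belongs to $\mathcal{H}_{SC}$, hence satisfies Block validity, Local monotonic read, Strong prefix and Ever growing tree.

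Next I would invoke Theorem~\ref{th:strongImpliesEventual}: since $\mathcal{H}_{SC}\subset\mathcal{H}_{EC}$, that same projected history also satisfies the $EC$ criterion, i.e.\ Block validity, Local monotonic read, Ever growing tree and Eventual prefix. Because the oracle component $\Theta$ and the refinement operator $\mathfrak{R}(\cdot,\Theta)$ are literally the same on both sides of the claimed inclusion, the identical history $\hat{H}$ is a legal purged history of $\mathfrak{R}(\text{BT-ADT}_{EC},\Theta)$ as well, which yields $\hat{H}\in\hat{\mathcal{H}}^{\mathfrak{R}(\text{BT-ADT}_{EC},\Theta)}$ and hence the inclusion.

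The only step that requires a (small) argument rather than being purely definitional is the orthogonality claim: that a history admissible for $\mathfrak{R}(\text{BT-ADT},\Theta)$ with the underlying tree constrained by $SC$ is admissible for $\mathfrak{R}(\text{BT-ADT},\Theta)$ with the tree constrained by $EC$. This follows from Definition~\ref{def:refinement}, where the refined transition and output functions treat the tree $bt$, the selection function $f$ and the predicate $P$ exactly as the bare BT-ADT does --- the refinement only wraps {\sf append}$()$ with the {\sf getToken}$()/${\sf consumeToken}$()$ calls and adds the oracle's side effects (token-based validity, $k$-fork coherence), none of which depends on whether the underlying structure satisfies $SC$ or $EC$; moreover $SC$ and $EC$ are stated exclusively over the events $e_{inv}({\sf append}(b))$ and $e_{rsp}({\sf read}())$, whose semantics the refinement leaves unchanged. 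I therefore expect the main obstacle to be merely stating this orthogonality cleanly; once it is in place, the corollary is an immediate consequence of Theorem~\ref{th:strongImpliesEventual}.
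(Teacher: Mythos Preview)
Your proposal is correct and follows the same route as the paper, which simply states that the corollary follows from Theorem~\ref{th:strongImpliesEventual}. You are in fact more careful than the paper: the orthogonality argument you spell out (that the refinement only wraps {\sf append}$()$ and leaves the semantics of the events on which $SC$ and $EC$ are defined untouched) is exactly the implicit step the paper skips, so your write-up is a strict elaboration of the paper's one-line justification.
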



Combining Theorem \ref{th:strongImpliesEventual} and Theorem \ref{t:oracolsHierarchy} we obtain the hierarchy depicted in Figure \ref{fig:BTHierarchy}.

%
%
%
%
%
%

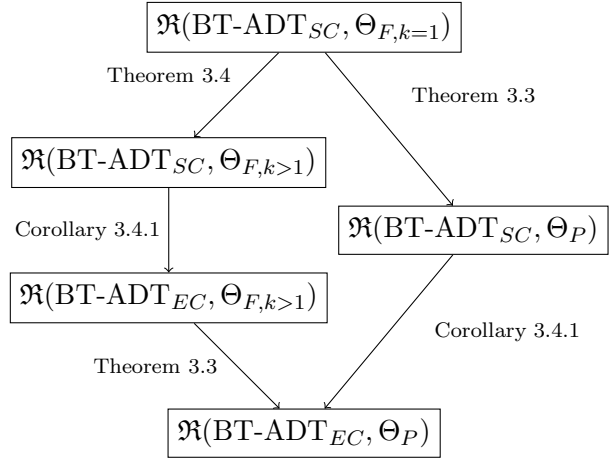
\begin{figure}
	\centering
	\begin{tikzpicture}[scale=0.9]
	
	\node[draw](1) at (0,2) {$\mathfrak{R}(\text{BT-ADT}_{SC}, \Theta_{F,k=1})$};
	\node[draw](2) at (-2,-2) {$\mathfrak{R}(\text{BT-ADT}_{ E C}, \Theta_{F,k>1})$};
	
	\node[draw](3) at (2.5,-1) {$\mathfrak{R}(\text{BT-ADT}_{ SC}, \Theta_{P})$};
	\node[draw](5) at (-2,0) { $\mathfrak{R}(\text{BT-ADT}_{ SC}, \Theta_{F,k>1})$};
	\node[draw](4) at (0,-4) {$\mathfrak{R}(\text{BT-ADT}_{ E C}, \Theta_{P})$};
	
	
	\draw[->]  (2) edge (4) (5) edge (2) 	(1) edge (3)
	(1) edge (5)	(3) edge (4) ;
	
	
	\node[] () at (-2, 1.3) {\scriptsize Theorem \ref{t:oracolsHierarchyF}};
	\node[] () at (2.5, 1) {\scriptsize Theorem \ref{t:oracolsHierarchy}};
	\node[] () at (-2.2, -3) {\scriptsize Theorem \ref{t:oracolsHierarchy}};
	\node[] () at (-3.2, -1) {\scriptsize Corollary \ref{c:oracolsHierarchySameOracle}};
	\node[] () at (3, -2.5) {\scriptsize Corollary \ref{c:oracolsHierarchySameOracle}};			
	\end{tikzpicture}
	\caption{$\mathfrak{R}(\text{BT-ADT}, \Theta)$ Hierarchy.}\label{fig:BTHierarchy}
\end{figure}




\section{
	Implementing BT-ADTs}

\subsection{Implementability in a concurrent model}
In this Section we show that $\Theta_{F,k=1}$ has consensus number $\infty$ and that $\Theta_{P}$ 
has consensus number 1.

We consider a concurrent system composed by $n$ processes such that up to $f$ processes are faulty (stop prematurely by crashing), $f<n$. Moreover, processes can communicate through atomic registers.



\subsubsection{Frugal with $k=1$ at least as strong as Consensus}


In the following we prove that there exists a wait-free implementation of the Consensus \cite{lamport1982byzantine} by the $\Theta_{F,k=1}$ Oracle object.
In particular, in this case $\Theta_{F,k=1}$ = $\langle$ A= \{{\sf getToken}$(b_h,b_\ell)$, {\sf consumeToken}$( b_{\ell}^{tkn_{h}}):
b_h, b_{\ell}^{tkn_{h}} \in \mathcal{B}^\prime, b_\ell \in \mathcal{B}, tkn_h \in \mathfrak{T}\}$, B= $\mathcal{B}^\prime  \cup Boolean$, Z= $m(\mathcal{A})^* \times \{K\} \times k \cup \{pop, head, dec, get\}$, $\xi_0, \tau, \delta \rangle$. We explicit consider blocks and valid blocks ($\mathcal{B}$ and $\mathcal{B}^\prime$) rather than objects and valid objects ($\mathcal{O}$ and $\mathcal{O}^\prime$). Moreover, we consider a version of the Consensus problem for the blockchain. Thus, we consider the Validity property as in \cite{redbelly17} such that the decided block $b$ satisfies the predicate $P$. 
\begin{definition}
	Consensus $\mathcal{C}$:
	\begin{itemize}
		\item \textbf{Termination.} Every correct process eventually decides some value.
		\item \textbf{Integrity.} No correct process decides twice.
		\item \textbf{Agreement.} If there is a correct process that decides a value $b$, then eventually all the correct processes decide $b$.
		\item \textbf{Validity\cite{redbelly17}.} A decided value is valid, it satisfies the predefined predicate denoted $P$.
	\end{itemize}
\end{definition}

To this aim, we first prove that there exists a wait-free implementation of {\sf Compare\&Swap}$()$ object by {\sf consumeToken}$()$ object in the case of $\Theta_{F,k=1}$, implying that {\sf consumeToken}$()$ has the same Consensus number as {\sf Compare\&Swap}$()$ which is $\infty$ (see \cite{herlihy1991wait}). Finally we compose the {\sf consumeToken}$()$ with the {\sf getToken}$()$ object proving that there exist a wait-free implementation of $\mathcal{C}$ by $\Theta_{F,k=1}$.

Figure \ref{fig:CAS} describes {\sf consumeToken}$()$ (CT), as specified by the $\Theta$-ADT, along with the {\sf Compare\&Swap}$()$ (CAS). {\sf Compare\&Swap}$()$ takes three parameters as input, the $register$, the $old\_value$ and the $new\_value$. If the value in $register$ is the same as  $old\_value$ then the $new\_value$ is stored in $register$ and in any case the operation returns the value that was in $register$ at the beginning of the operation. 
In comparison with {\sf consumeToken}$(b_\ell^{tkn_h})$ we have that $b_\ell^{tkn_h}$ is the $new\_value$, $register$ is $K[h]$ and the implicit $old\_value$ is $\{\}$. That is, $add(K,h,b)$ stores $b$ in $K[h]$ if $|K[h]|<k=1$, then if $K[h]=\{\}$. In any case the operation returns the content of $K[h]$ at the end of the operation itself. Figure \ref{fig:CAStoCT} describes and algorithm that reduces CAS to {\sf consumeToken}$()$.

\begin{figure*}[t]
	\centering
	\fbox{
		\begin{minipage}{0.4\textwidth}
			\scriptsize
			\resetline
			\begin{tabbing}
				aaaA\=aA\=aA\=aaaA\kill
				\line{} {\sf consumeToken}$(b_\ell^{tkn_h}):$\\
				\line{} \> $previous\_value \leftarrow K[h];$\\
				\line{} \> {\bf if} $(previous\_value == \{\} \wedge tkn_h \in \mathfrak{T})${\bf then};\\
				\line{} \>\>$K[h] \leftarrow K[h] \cup \{b_\ell^{tkn_h}\}$;\\
				\line{} \>\> {\bf endIf}\\
				\line{} \> {\bf return} $K[h]$\\
			\end{tabbing}
			\normalsize
		\end{minipage}%
		\vspace{.5cm}
		\begin{minipage}{0.4\textwidth}
			\scriptsize
			\resetline
			\begin{tabbing}
				aaaA\=aA\=aA\=aaaA\kill
				\line{} {\sf compare\&swap}$(register, old\_value, new\_value):$\\
				\line{} \> $previous\_value \leftarrow register;$\\
				\line{} \> {\bf if} $(previous\_value == old\_value)${\bf then};\\
				\line{} \>\>$register \leftarrow new\_value$;\\
				\line{} \>\> {\bf endIf}\\
				\line{} \> {\bf return} $previous\_value$\\
			\end{tabbing}
			\normalsize
		\end{minipage}
		
	}
	\caption{{\sf Compare\&Swap}$()$ and {\sf consumeToken}$()$ in the case of $\Theta_{F,k=1}$.}
	\label{fig:CAS}   
\end{figure*}

\begin{figure*}[t]
	\centering
	\fbox{
		\begin{minipage}{0.4\textwidth}
			\scriptsize
			\resetline
			\begin{tabbing}
				aaaA\=aA\=aA\=aaaA\kill
				\line{} {\sf compare\&swap}$(K[h], \{\}, b_\ell^{tkn_h}):$\\
				\line{}\> $returned\_value\leftarrow${\sf consumeToken}$(b_\ell^{tkn_h});$\\
				\line{CAStoCT-03} \> {\bf if} $(returned\_value == b_\ell^{tkn_h})${\bf then};\\
				\line{} \>\>{\bf return} $\{\}$;\\
				\line{} \>\> {\bf else} {\bf return} $returned\_value$;\\
				\line{} \>\> {\bf endIf} \\
			\end{tabbing}
			\normalsize
		\end{minipage}	
	}
	\caption{An implementation of CAS by CT in the case of $\Theta_{F,k=1}$.}
	\label{fig:CAStoCT}   
\end{figure*}

\begin{theorem}\label{t:CAStoCT}
	If input values are in $\mathcal{B'}$ then there exists an implementation of CAS by CT in the case of $\Theta_{F,k=1}$.
\end{theorem}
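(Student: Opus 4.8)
The plan is to show that the algorithm of Figure~\ref{fig:CAStoCT} is a wait‑free, linearizable implementation of the restricted operation {\sf compare\&swap}$(K[h],\{\},\cdot)$ built on top of the single ADT operation {\sf consumeToken}$()$ of $\Theta_{F,k=1}$, whose semantics are recalled in Figure~\ref{fig:CAS}. Wait‑freedom would be dispatched first and is immediate: a call to the implementation issues exactly one {\sf consumeToken}$()$ invocation --- which, being a single transition of the $\Theta_F$‑ADT (Definition~\ref{def:btadt}), is atomic --- followed only by one local equality test and a return, so it terminates in a bounded number of its own steps regardless of the other processes. The substantive part is safety, which I would establish by linearizing each simulated {\sf compare\&swap} at the (atomic) point at which its {\sf consumeToken} call takes effect, and then checking that the returned values match the sequential specification of {\sf compare\&swap}.

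The core of the argument is an invariant on the state component $K[h]$, for a fixed token index $h$, along the sequence of {\sf consumeToken} operations ordered by their linearization points. Initially $K[h]=\{\}$. Using the hypothesis that inputs lie in $\mathcal{B'}$, the first invocation {\sf consumeToken}$(b_\ell^{tkn_h})$ carries a genuine token, so $tkn_h\in\mathfrak{T}$ and the guard of the conditional in {\sf consumeToken} holds; since $|K[h]|<k=1$, the side‑effect $add(K,h,b_\ell^{tkn_h})$ of Definition~\ref{def:btadt} installs $K[h]=\{b_\ell^{tkn_h}\}$. From then on every later {\sf consumeToken} call finds $K[h]\neq\{\}$ and leaves it untouched, since by Definition~\ref{def:btadt} the map $add$ is a no‑op once $|K[h]|=k=1$. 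In all cases {\sf consumeToken} returns the value of $K[h]$ after the call.

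With this invariant the case analysis is short. The unique simulated {\sf compare\&swap} whose {\sf consumeToken} is the first one obtains $returned\_value=\{b_\ell^{tkn_h}\}$, so the equality test of line~\ref{CAStoCT-03} succeeds and the implementation returns $\{\}$; this agrees with Figure~\ref{fig:CAS}, where the register equalled the expected $old\_value=\{\}$, the swap installs the new content (mirroring $K[h]=\{b_\ell^{tkn_h}\}$), and the previous value $\{\}$ is returned. Every later simulated {\sf compare\&swap}$(K[h],\{\},b_{\ell^\prime}^{tkn_h})$ obtains $returned\_value=\{b_\ell^{tkn_h}\}$, the already stored singleton; the equality test fails and the implementation returns $\{b_\ell^{tkn_h}\}$, which again agrees with the specification, since the register no longer holds $old\_value=\{\}$, so no swap occurs and the current content $\{b_\ell^{tkn_h}\}$ is returned. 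Hence the concurrent histories produced by the algorithm are linearizable with respect to {\sf compare\&swap} restricted to $old\_value=\{\}$ and $new\_value\in\mathcal{B'}$, which is the claim.

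The step I expect to require the most care is exactly the equality test of line~\ref{CAStoCT-03}: it is the only means by which the implementation tells ``this call performed the swap'' apart from ``the register was already set'', and this distinction is faithful only when a later call never receives, as $returned\_value$, the singleton equal to its own argument. I would close this point by noting that by Definition~\ref{def:btadt} (with $k=1$) the token $tkn_h$ is effectively consumed at most once, and that in the intended use the blocks offered with a given token by distinct invocations are pairwise distinct, so a later {\sf consumeToken}$(b_{\ell^\prime}^{tkn_h})$ with $b_{\ell^\prime}^{tkn_h}\neq b_\ell^{tkn_h}$ returns $\{b_\ell^{tkn_h}\}$ and the test fails as required. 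Everything else --- wait‑freedom, the $K[h]$ invariant, and the two output cases --- is routine bookkeeping.
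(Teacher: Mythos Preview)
Your proposal is correct and follows essentially the same construction-based approach as the paper's own proof: a case split on whether the underlying {\sf consumeToken} call finds $K[h]$ empty, verifying in each branch that the value returned by the wrapper of Figure~\ref{fig:CAStoCT} matches the CAS specification. Your argument is more careful than the paper's brief sketch---in particular you spell out wait-freedom, linearization points, and the subtlety of the equality test at line~\ref{CAStoCT-03} when two callers might offer the same block, which the paper simply glosses over---but the structure is the same.
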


\begin{proof}
	The proof simply follows by construction. Let us consider the algorithm in Figure \ref{fig:CAStoCT}. When the {\sf Compare\&Swap}$()$ operation is invoked, if $K[h]$ is empty, then when {\sf consumeToken}$()$ is invoked with $b_\ell^{tkn_h}$ (valid by hypothesis) $K[h]$ is populated with $b_\ell^{tkn_h}$. Such value is later returned by the {\sf consumeToken}$()$ operation in $retuend\_value$. Since it is the same value as $b_\ell^{tkn_h}$ (line \ref{CAStoCT-03}) then the {\sf Compare\&Swap} returns the value of $K[h]$ at the beginning of the operation, $\{\}$. If the condition at line (line \ref{CAStoCT-03}) does not hold, then this means that $K[h]$ did not change during the operation and its value, in $returned\_value$ is returned.
\end{proof}

Figure \ref{fig:reduction} describes a simple implementation of Consensus by $\Theta_{F,k=1}$. When a correct process $p_i$ invokes the {\sf propose}$(b)$ operation it loops invoking the {\sf getToken}$(b_0, b)$ operation as long as a valid block is returned (lines \ref{r:while}-\ref{r:gettoken}). In this case the {\sf getToken}$()$ operation takes as input some block $b_0$ and the proposed block $b$. Afterwards, when the valid block has been obtained $p_i$ invokes the {\sf consumeToken}$(validBlock)$ operation whose result in stored in the $tokenSet$ variable (line \ref{r:consumetoken}). Notice, the first process that invokes such operation is able to successfully consume the token, i.e., the valid block is in the Oracle set corresponding to $b_0$, which cardinality is $k=1$, and such set is returned each time the {\sf consumeToken}$()$ operation is invoked for a block related to $b_0$. Finally, (line \ref{r:decision}) the decision is triggered on such set (with contains one element).

\begin{figure*}[t]
	\centering
	\fbox{
		\begin{minipage}{0.4\textwidth}
			\scriptsize
			\resetline
			\begin{tabbing}
				aaaA\=aA\=aA\=aaaA\kill
				
%
%
				
				{\bf upon event} ${\sf propose}(b)$:\\
				\line{} \> $validBlock \leftarrow \bot;$\\
				\line{} \> $validBlockSet \leftarrow \emptyset;$ \% since $k=1$ then it contains only one element. \\
				\line{r:while} \> {\bf while} $(validBlock = \bot)$: \\
				\line{r:gettoken} \>\>$validBlock \leftarrow \sf{getToken}(b_0, b)$;\\
				\line{r:consumetoken} \>$validBlockSet \leftarrow {\sf consumeToken}(validBlock)$; \% it can be different from validBlock\\
				\line{r:decision} \> {\bf trigger} {\sf decide}$(validBlockSet)$;\\

			\end{tabbing}
			\normalsize
		\end{minipage}%
	}
	\caption{The Protocol $\mathcal{A}$ that reduces the Consensus problem to the Frugal Oracle with $k=1$.}
	\label{fig:reduction}   
\end{figure*}

\begin{theorem}\label{t:consensusOracleReduction}
	$\Theta_{F,k=1}$ Oracle has Consensus number $\infty$.
\end{theorem}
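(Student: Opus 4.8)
The plan is to combine Theorem~\ref{t:CAStoCT} with the classical result of Herlihy~\cite{herlihy1991wait} that {\sf Compare\&Swap}$()$ has Consensus number $\infty$, and then to package the reduction into the explicit protocol $\mathcal{A}$ of Figure~\ref{fig:reduction}. First I would observe that Theorem~\ref{t:CAStoCT} already gives a wait-free implementation of $\textsf{CAS}$ from the {\sf consumeToken}$()$ operation of $\Theta_{F,k=1}$ whenever the inputs are valid blocks; hence {\sf consumeToken}$()$ inherits Consensus number $\infty$, which shows the lower bound $\geq \infty$ (and, since no object can have Consensus number exceeding $\infty$, this already pins it down once we exhibit any correct wait-free consensus protocol). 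The only gap is that {\sf consumeToken}$()$ requires a \emph{valid} input block, and a process proposing an arbitrary block $b$ does not a priori hold a valid block $b^{tkn_h}$; this is exactly what the {\sf getToken}$()$ component supplies.

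The key steps, in order, are: \emph{(i)} argue \textbf{Termination} — a correct process $p_i$ executing {\sf propose}$(b)$ loops on {\sf getToken}$(b_0,b)$ (lines~\ref{r:while}--\ref{r:gettoken}); by the pseudorandom-tape property of $\Theta$ (each tape contains $tkn$ with probability $p_{\alpha_i}>0$ infinitely often, as already used in the proof of Theorem~\ref{t:kFork}), some invocation eventually returns a valid block $b^{tkn_0}\in\mathcal{B'}$, after which $p_i$ invokes {\sf consumeToken}$()$ once and decides; \emph{(ii)} argue \textbf{Validity} — the decided set is the value of $K[0]$, whose elements are only ever inserted by {\sf consumeToken}$()$ on valid blocks, so the decided block satisfies $P$ (this is the $k=1$ analogue of Lemma~\ref{l:oracleValidity}); \emph{(iii)} argue \textbf{Agreement} — since $k=1$, $\textnormal{add}(K,0,\cdot)$ writes into $K[0]$ only while $|K[0]|<1$, so the very first successful {\sf consumeToken}$()$ fixes $K[0]$ forever, and every subsequent {\sf consumeToken}$()$ call returns that same singleton; hence all correct processes decide the same block; \emph{(iv)} argue \textbf{Integrity} — protocol $\mathcal{A}$ triggers {\sf decide} exactly once and halts. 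Together these give a wait-free (indeed bounded wait-free, modulo the probabilistic termination of {\sf getToken}$()$) $n$-process implementation of Consensus from $\Theta_{F,k=1}$ for every $n$, so $\Theta_{F,k=1}$ has Consensus number at least $\infty$; combined with the trivial upper bound, it is exactly $\infty$.

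I expect the main obstacle to be the \textbf{Termination} argument, because it is the one step that departs from the purely structural reasoning of Theorem~\ref{t:CAStoCT}: it relies on the probabilistic semantics of the tape rather than on a deterministic wait-free guarantee, so one must be careful about what ``wait-free'' means here — the {\sf getToken}$()$ loop terminates with probability $1$ (and the rest of the protocol is genuinely wait-free), which is the sense in which the reduction of~\cite{redbelly17}-style Consensus goes through. A secondary subtlety worth a sentence is why using $b_0$ itself as the first argument to {\sf getToken}$()$ is legitimate: $b_0\in\mathcal{B'}$ by assumption, so $tkn_0$ is a well-defined token slot, and all proposals compete for the single token of the same object $b_0$, which is what forces agreement.
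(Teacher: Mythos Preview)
Your proposal is correct and follows essentially the same approach as the paper: use the protocol of Figure~\ref{fig:reduction}, argue that the {\sf getToken}$(b_0,b)$ loop terminates by the pseudorandom-tape property so that every correct process eventually holds a valid block, and then invoke Theorem~\ref{t:CAStoCT} together with Herlihy's result that {\sf Compare\&Swap} has Consensus number $\infty$. Your explicit verification of the four Consensus properties and your remarks on probabilistic termination and on $b_0\in\mathcal{B'}$ are more detailed than the paper's own proof, but the backbone of the argument is identical.
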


\begin{proof}
	The proof proceeds by construction, let us consider the implementation in Figure \ref{fig:reduction}. All correct processes performing the Consensus are looping on the {\sf getToken}$(b_0,b)$ operation. From the properties of the pseudo random sequences of tapes, if there are an infinite number of ${\sf getToken}()$ invocations for an block $b_0$ then there exists at least one response for which ${\sf getToken}()$ operation returns a valid block $b^{tkn_0}$.
	Thus, all correct process $i$ can invoke the {\sf consumeToken}$(b^{tkn_0})$ operation with valid values. Since all the processes invoke such operation with valid values with can apply Theorem \ref{t:CAStoCT} which concludes the proof considering that CAS has Consensus number $\infty$ (\cite{herlihy1991wait}).
\end{proof}

 \label{ssec:reductionConsensus}

\subsubsection{Prodigal not stronger than an Atomic Register}


In order to show that the Prodigal oracle $\Theta_{P}$ has consensus number 1, it suffices to find a wait-free implementation of the oracle by an object with consensus number 1. To this end we present a straightforward implementation of the Prodigal oracle by Atomic Snapshot\cite{AH1990waitfree}.

Let us firstly simplify the notation of the consume token operation. Let us consider a consume token invoked for a given block $b_h$, denoted as ${\sf consumeToken_h}(tkn_{m})$, which simply writes a token from the set $\mathfrak{T}=\{tkn_1,tkn_2, \dots, tkn_m,\dots\}$ in the set $K[h]$. Without loss of generality let us assume that: (i) tokens are  uniquely identified , (ii)  cardinality of $\mathfrak{T}$ is $n$ finite but not known and (iii) the set $K[h]$ is  represented by a collection of $n$ atomic registers $\mathfrak{K[h}=\{R_{h,1},R_{h,2}, \dots, R_{h,m},\dots R_{h,n}\}$, where $R_{h,m}$ is assigned to the $tkn_m$ token, i.e. $R_{h,m}$ can contain either $\bot$ or $tkn_m$. 

It can be observed that the  ${\sf consumeToken_h}(tkn_{m})$ in the case of $k$ infinite, always allows to write the token $tkn_{m}$ in $R_{h,m}$, i.e. there always exists a register $R_{h,m}$ for the proposed token $tkn_{m}$. By the oracle definition, moreover, the ${\sf consumeToken_h}(tkn_{m})$ returns a read of the $n$ registers that includes the last written token. Figure \ref{fig:CTtoAS} shows a  trivial implementation of commit token $CT$ using Atomic Snapshot that offers  ${\sf update}(R_i, value)$,  ${\sf scan}(R_1, R_2, \dots, R_n)$ operation to update a particular register and perform an atomic read of input registers, respectively. 

\begin{figure*}[t]
	\centering
	\fbox{
		\begin{minipage}{0.4\textwidth}
			\scriptsize
			\resetline
			\begin{tabbing}
				aaaA\=aA\=aA\=aaaA\kill
				\line{} {\sf consumeToken$_k$}$(tkn):$\\
				\line{}\> \> $R_{h,m}$ $\leftarrow${\sf update}$(R_{h,m}, tkn_m)$\\
				\line{} \>\>  $returned\_value \leftarrow {\sf scan(R_{h,1},R_{h,2}, \dots, R_{h,m},\dots R_{h,n})}$\\
				\line{}  \>\>{\bf return} $returned\_value$;\\
			\end{tabbing}
			\normalsize
		\end{minipage}	
	}
	\caption{An implementation of CT by Atomic Snapshot in the case of $\Theta_{P}$.}
	\label{fig:CTtoAS}   
\end{figure*}

\begin{theorem}\label{t:CTtoAS}
	$\Theta_{P}$ Oracle has Consensus number $1$.
\end{theorem}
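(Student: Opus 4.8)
The plan is to prove both inequalities hiding behind the equality ``consensus number $=1$''. The lower bound is free: like any shared object, $\Theta_P$ trivially solves consensus among a single process, so its consensus number is at least $1$. For the upper bound, the strategy is to exhibit a wait-free linearizable implementation of $\Theta_P$ out of objects whose consensus number is $1$ (namely atomic registers / Atomic Snapshot); by Herlihy's impossibility of solving $2$-process consensus from consensus-number-$1$ objects \cite{herlihy1991wait}, this forces the consensus number of $\Theta_P$ to be at most $1$, and we are done.

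Concretely, I would analyze the two operations of $\Theta_P$ separately. For ${\sf consumeToken}()$ I would use the simplification set up just before the statement: $K[h]$ is represented by the collection $\mathfrak{K}[h]=\{R_{h,1},\dots,R_{h,n}\}$ of atomic registers, one per token, and Figure~\ref{fig:CTtoAS} realizes ${\sf consumeToken}_h(tkn_m)$ by an ${\sf update}(R_{h,m},tkn_m)$ followed by a ${\sf scan}$ of $\mathfrak{K}[h]$. Correctness follows from linearizability of Atomic Snapshot \cite{AH1990waitfree}: distinct tokens land in distinct registers and re-writing a token is idempotent, so, since $k=\infty$, every invocation writes its token successfully; and because the ${\sf scan}$ is program-ordered after the process's own ${\sf update}$, the returned snapshot necessarily contains $tkn_m$, matching the $\Theta_P$ specification that ${\sf consumeToken}()$ returns a view of $K[h]$ that includes the just-consumed token. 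Placing the linearization point of each ${\sf consumeToken}()$ at that of its internal ${\sf scan}$ yields a legal sequential history of $\Theta_P$'s ${\sf consumeToken}()$, and the implementation is wait-free because ${\sf update}$ and ${\sf scan}$ are. For ${\sf getToken}()$, the key observation is that its externally visible contract is purely probabilistic: an invocation by a process of merit $\alpha_i$ returns $tkn$ with probability $p_{\alpha_i}$, independently of every other invocation, and it is never required to expose a value consistent with any other process's invocation. Hence every concurrent history of ${\sf getToken}()$ invocations is admissible provided each response has the correct marginal, which is achieved by each process sampling locally; ${\sf getToken}()$ therefore has a wait-free implementation that uses \emph{no} shared object. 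Composing the two gives a wait-free implementation of $\Theta_P$ from Atomic Snapshot, an object of consensus number $1$, and \cite{herlihy1991wait} then bounds the consensus number of $\Theta_P$ by $1$.

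The step I expect to be the main obstacle is not the Atomic-Snapshot reduction for ${\sf consumeToken}()$ (routine linearizability bookkeeping) but the argument that ${\sf getToken}()$ carries no synchronization power. One must be explicit that the oracle's ``tape'' is only a modeling device for a sequence of independent Bernoulli trials, so that the guarantee is on the distribution of each \emph{individual} response rather than on a shared sequence observed consistently by all processes; in particular, two processes of equal merit need not consume the same tape, and any interleaving of their local samples is a correct behavior of the ADT. Once this is granted, the remainder is a direct appeal to the wait-free implementability of Atomic Snapshot from registers and to the consensus hierarchy of \cite{herlihy1991wait}. $\Box$
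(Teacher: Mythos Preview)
Your proposal is correct and follows the same approach as the paper: exhibit a wait-free implementation of $\Theta_P$ from Atomic Snapshot (Figure~\ref{fig:CTtoAS}) and invoke \cite{AH1990waitfree,herlihy1991wait} to conclude consensus number $1$. The paper's own proof is a one-liner that only points to the {\sf consumeToken} implementation and the Atomic Snapshot reference; your version is strictly more complete in that you also dispose of {\sf getToken} by arguing it requires no shared state, which the paper leaves implicit.
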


\begin{proof}
	The proof trivially follows from  implementation  in Figure \ref{fig:CTtoAS}  of the consensus token operation of the Prodigal oracle by the Atomic Snaposhot object and from \cite{AH1990waitfree}. 
\end{proof} \label{ssec:reductionRegister}

\subsection{Implementability in a message-passing system model}
We consider a message-passing system composed of an arbitrary large but finite set of $n$ processes, $\Pi=\{p_1, \dots, p_n\}$.
The passage of time is measured by a fictional global clock (\emph{e.g.}, that spans the set of natural integers). Processes in the system do not have access to the fictional global time.
Each process of the distributed system executes a single instance of a distributed protocol $\mathcal{P}$ composed of a set of algorithms, i.e., each process is running an algorithm. Processes can exhibit a Byzantine behavior  (i.e., they can arbitrarily deviate from the protocol $\mathcal{P}$ they are supposed to run). A process affected by a Byzantine behavior is said to be faulty, otherwise we refer to such process as non-faulty or correct. We make no assumption on the number of failures that can occur during the system execution. Processes communicate by exchanging messages via communication channels. We say that a communication channels are asynchronous if the is no upper bound on the message delivery delay. Contrarily, communication channels are synchronous if messages sent by correct processes at time $t$ are delivered by correct processes by time $t+\delta$. Finally, communication channels are weakly synchronous if there exist an unknown a priori time $\tau$ after which the communication channels behave as synchronous.  We specify time to time the channels synchrony assumption considered, when left untold we consider asynchronous channels.

The BlockTree being now a shared object replicated at each process, we note by  $bt_i$ the local copy of the BlockTree maintained at  process $i$.
To maintain the replicated object we consider histories made of events related to the {\sf read}  and {\sf append}  operations on the shared object,  i.e. the {\sf send} and {\sf receive} operations for process communications and  the {\sf update} operation for BlockTree updates.  We also use subscript  $i$ to indicate that the operation occurred at process $i$: {\sf update}$_i(b_g,b_i$) indicates that $i$ inserts its locally generated valid block \emph{b}$_i$ in $bt_i$ with \emph{$b_g$} as a predecessor. Updates are communicated through {\sf send} and {\sf receive} operations. An update related to a block $b_i$ generated on a process $p_i$, sent through {\sf send}$_i(b_g,b_i)$, and received through a {\sf receive}$_j(b_g,b_i)$, takes effect on the local replica $bt_j$ of $p_j$ with the operation {\sf update}$_j(b_g,b_i)$. 

We assume a generic implementation of the {\sf update} operation: when process $i$ locally updates its BlockTree $bt_i$  with the valid block $b_i$ (returned from the {\sf consumeToken()} operation), we write   {\sf update}$_i(b,b_i)$. When a process $j$ execute the {\sf receive}$_j(b,b_i)$ operation, it locally updates its BlockTree $bt_j$ by invoking the {\sf update}$_j(b,b_i)$ operation.  

In the remaining part of the work we consider implementations of BT-ADT  in a Byzantine failure model where the set of events is restricted as follows.
\begin{definition}
	The execution of the system that uses the BT-ADT =(A, B, Z, $\xi_0, \tau, \delta$) in a Byzantine failure model defines the concurrent history  $H=\langle \Sigma, E, \Lambda, \mapsto, \prec, \nearrow \rangle$ (see Definition \ref{def:adthistory}) where we restrict $E$ to a countable set of events that contains \emph{(i)} all the BT-ADT  {\sf read}$()$ operations invocation events by the \emph{correct} processes,  \emph{(ii)} all BT-ADT {\sf read}$()$ operations response events at the \emph{correct} processes,  \emph{(iii)} all {\sf append}$(b)$ operations invocation events such that $b$ satisfies the predicate $P$ and,  \emph{(iv)} $send$, $receive$ and $update$ events generated at correct processes.
\end{definition}

In this Section we consider a message passing system model and we show the

(i) impossibility to achieve Strong Prefix without Consensus and impossibility to achieve Eventual Prefix if at least one message sent by a correct process is lost. 

TBC: 
(ii)Eventual Prefix is impossible in an asynchronous system
(iii)Eventual Prefix is impossible if the interval between the generation of two successive blocks is less than the upper bound on the message delay. 
(iv) Impossible to solve Strong Prefix without the Frugal oracle with $k=1$.

\subsection{Communication Abstractions}
\label{sec:comm-abstr-bt}

We now define the properties that each history $H$ generated by a BT-ADT satisfying the Eventual Prefix Property has to satisfy and then we prove their necessity.

\begin{definition}[Update Agreement]\label{def:historyUpdate}
	A concurrent history $H=\langle \Sigma, E, \Lambda, \mapsto, \prec, \nearrow \rangle$ of the system that uses a BT-ADT satisfies the Update Agreement if satisfies the following  properties:
	\begin{itemize}
		\item{R1.} $\forall {\sf update}_i(b_g,b_i) \in H$,$\exists {\sf send}_i(b_g,b_i)\in H$;
		\item{R2.} $\forall {\sf update}_i(b_g,b_j) \in H, \exists {\sf receive}_i(b_g,b_j) \in H$ such that ${\sf receive}_i(b_g,b_j) \mapsto {\sf update}_i(b_g,b_j)$;
		\item{R3.} $\forall {\sf update}_i(b_g,b_j) \in H$, $\exists {\sf receive}_k(b_g,b_j)\in H, \forall k$.
	\end{itemize}
\end{definition}
Figure \ref{fig:run} depicts a concurrent history that satisfies the Update Agreement properties. 
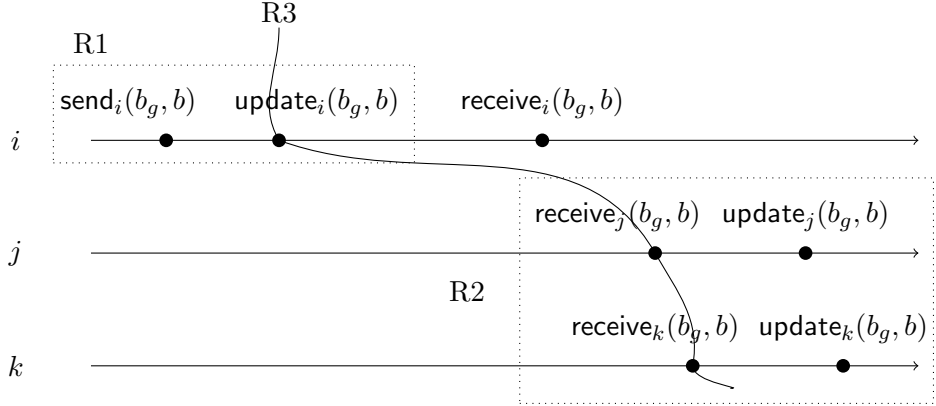
\begin{figure}
	\centering
	\begin{tikzpicture}
	\tikzset{evento/.style={circle, scale=0.5, fill=black}}
	\draw[->] (0,0) -- (11,0); \node[] () at (-1,0) {$k$};
	\draw[->] (0,1.5) -- (11,1.5); \node[] () at (-1,1.5) {$j$};
	\draw[->] (0,3) -- (11,3);\node[] () at (-1,3) {$i$};
	
	\node[]() at (.5,3.5) {{\sf send}$_i(b_g,b)$};
	\node[evento]() at (1,3) {};
	
	\node[]() at (3,3.5) {{\sf update}$_i(b_g,b)$};
	\node[evento]() at (2.5,3) {}; 
	
	\node[]() at (6,3.5) {{\sf receive}$_i(b_g,b)$};
	\node[evento]() at (6,3) {};
	
	\node[]() at (7,2) {{\sf receive}$_j(b_g,b)$};
	\node[evento]() at (7.5,1.5) {}; 
	
	\node[]() at (7.5,.5) {{\sf receive}$_k(b_g,b)$};
	\node[evento]() at (8,0) {}; 
	
	
	\node[]() at (9.5,2) {{\sf update}$_j(b_g,b)$};
	\node[evento]() at (9.5,1.5) {}; 
	
	\node[]() at (10,.5) {{\sf update}$_k(b_g,b)$};
	\node[evento]() at (10,0) {};
	
	\draw   (2.5,4.5) to[out=270,in=120] (2.5,3) to[out=-20,in=120] (7.5,1.5)  to[out=300,in=80] (8,0) to[out=270,in=0] (8.5,-.3); 
	\node[] () at (2.5,4.7) {R3};
	\draw[dotted] (-.5,2.7) rectangle (4.3,4); \node[] () at (0,4.3) {R1};
	\draw[dotted] (5.7,2.5) rectangle (11.2,-.5); \node[] () at (5,1) {R2};
	
	\end{tikzpicture}
	\caption{Example of concurrent history that satisfies R1,R2 and R3, the Update Agreement properties. }\label{fig:run}
\end{figure}

In the following, for ease of notation we consider that the selection function $f \in \mathcal{F}$ returns directly also the genesis block.

\begin{lemma}\label{l:historyConditions12}
	Property R1 or Property R2 are necessary conditions for any protocol $\mathcal{P}$ to implement a BT-ADT generating histories $H$ satisfying the Eventual Prefix property.
\end{lemma}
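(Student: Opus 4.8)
## Proof Plan for Lemma~\ref{l:historyConditions12}

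The plan is to argue by contradiction: suppose a protocol $\mathcal{P}$ implements a BT-ADT whose histories satisfy the Eventual Prefix property, yet there exists an execution in which both R1 and R2 fail — that is, some update event occurs at a correct process without a corresponding \textsf{send} (R1 fails) \emph{and} some update event occurs at a correct process without a preceding matching \textsf{receive} (R2 fails). Since both R1 and R2 concern how an \textsf{update} event can legitimately enter a history, the key observation is that without either of them a correct process can insert a block into its local BlockTree $bt_i$ that no other correct process is ever obliged to learn about. The technical heart of the argument is to leverage this to build two correct processes whose \textsf{read} operations diverge forever.

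First I would set up the contradiction carefully. The claim to prove is that R1 \emph{or} R2 is necessary; so I assume a protocol $\mathcal{P}$ for which \emph{neither} holds in general, i.e. there is an admissible history $H$ of $\mathcal{P}$ with an \textsf{update}$_i(b_g,b_i)$ event that is not justified by R1 (no $\textsf{send}_i(b_g,b_i)$) nor, for the case of updates carrying blocks generated elsewhere, by R2. I would then isolate the offending \textsf{update} event at a correct process $p_i$ and observe that, because there is no \textsf{send} backing it (R1) and no \textsf{receive} backing it (R2), the information that $b_i$ was appended is confined to $p_i$'s local state: no message obligation has been created that forces this fact to propagate. Then I would construct (or extract) a sibling execution in which a different correct process $p_j$ performs its own \textsf{append}, producing a block $b_j$ that forks from $b_g$ as well, again with no communication obligation. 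Since neither process is required to ever hear of the other's block, both $bt_i$ and $bt_j$ keep growing on their respective incompatible branches.

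The decisive step is to turn this into a violation of Eventual Prefix. I would let $p_i$ and $p_j$ each perform infinitely many \textsf{read} operations. By the \textbf{Ever growing tree} property (which Eventual Prefix inherits), the scores of the returned blockchains at each process grow without bound, so for any fixed read $r$ at $p_i$ with score $s$, infinitely many later reads at $p_j$ return blockchains of score exceeding $s$. But because the two branches diverge at $b_g$ and neither process is forced to adopt the other's branch (no R1, no R2), the maximal common prefix score ${\sf mcps}$ between $p_i$'s and $p_j$'s returned blockchains stays bounded by ${\sf score}(\{b_0\}^\frown\ldots\frown\{b_g\})< s$. Hence the set of pairs $(e_{rsp}(r_h),e_{rsp}(r_k))$ with ${\sf mpcs}(\cdot,\cdot)<s$ is infinite, contradicting Definition~\ref{def:eventual=preficx}. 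Therefore no such $\mathcal{P}$ exists, and R1 or R2 must hold.

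The main obstacle I anticipate is making the ``neither process is ever obliged to learn of the other's block'' step fully rigorous: it requires an indistinguishability / execution-surgery argument showing that if $p_i$'s append can happen with no \textsf{send} and no \textsf{receive} justifying the \textsf{update}, then there is a legitimate execution of $\mathcal{P}$ in which $p_j$ is in a symmetric situation and the two never reconcile — one must check that splicing these behaviors together yields an admissible history of $\mathcal{P}$ in the Byzantine-model restricted event set, and that all correct processes' reads still terminate. I would handle this by exhibiting an explicit execution: two correct processes each locally append a forking block, all other processes crash at the outset (the model places no bound on failures), so no further communication occurs, and then both survivors read forever. This sidesteps subtle indistinguishability bookkeeping while still delivering the divergence needed to contradict Eventual Prefix.
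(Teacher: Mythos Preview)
Your approach diverges from the paper's and introduces an unnecessary complication that creates the very gap you flag at the end. The paper does not set up two growing forking branches; it uses only a \emph{single} process $i$ whose local chain grows while every other correct process stays at $b_0$. Concretely: $i$ issues {\sf update}$_i(b_0, b_i')$ with no {\sf send}$_i(b_0, b_i')$ (the R1 violation), so no {\sf receive}$_j(b_0, b_i')$ ever occurs at any $j\neq i$, hence no {\sf update}$_j(b_0, b_i')$; the R2 case is dispatched parenthetically by noting that even if {\sf receive}$_j$ did occur, the absence of the subsequent {\sf update}$_j$ still leaves $bt_j = b_0$. Iterating, $i$'s reads return chains of score strictly greater than $s_0 = {\sf score}(\{b_0\})$, while every other process's reads forever return $b_0$; the maximal common prefix score between any read at $i$ and any read at $j\neq i$ is therefore $s_0$, which is below the score $s$ of any read at $i$, giving infinitely many violating pairs and contradicting Eventual Prefix.

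Your two-process fork, by contrast, requires that \emph{both} $p_i$ and $p_j$ can indefinitely keep their updates hidden from one another. But the hypothesis only hands you one offending update event at $p_i$; nothing says the protocol permits $p_j$ to do the same, and your proposed fix of crashing all other processes does nothing to suppress communication between the two surviving correct processes $p_i$ and $p_j$. If the protocol at $p_j$ does issue {\sf send}$_j$ after its own update and the protocol at $p_i$ does apply {\sf update}$_i$ upon the corresponding {\sf receive}$_i$, your fork collapses and the reads may reconcile. The paper's one-sided construction needs the assumed violation only at $i$, so this obstacle simply does not arise.
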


\begin{proof}
	Let us assume that there exists a protocol $\mathcal{P}$ implementing a BT-ADT that generates histories $H$ satisfying Eventual Prefix property but not Property R1 or Property R2. Thus, in $H$ there is some {\sf update} $u$ that is not sent to the other processes (R1) or once received, $u$ is not locally applied (R2). Let us consider the following history where R1 is not verified and process $i$ issues the first {\sf update} event in $H$. \\
	Let us construct the following execution history $H$. $i$ issues the {\sf update}$_i(b_0,b_i^\prime)$ (thus $bt_i=b_0 ^\frown b_i^\prime$) but not the {\sf send}$_i(b_0,b_i^\prime)$ event. It follows that if there is no {\sf send}$_i(b_0,b_i^\prime)$ event in $H$ then in $H$ are no present any {\sf receive}$_j(b_0,b)$ events, $j \neq i$ and thus not process $j \neq i$ can issue {\sf update}$_j(b_0,b_i^\prime)$ (on the other side, if R2 is not satisfied, even if the the {\sf receive}$_j(b_0,b)$ event occur then {\sf update}$_j(b_0,b_i^\prime)$ may not occur), thus $\forall j\neq i, bt_j=b_0$. Let us assume that $i$ performs a {\sf read}$()$ operation, the selection function $f\in  \mathcal{F}$ is applied on $bt_i=b_0 ^\frown b_i^\prime$. 
	By the {\sf score} function definition it follows that ${\sf score}(b_0 ^\frown b_i^\prime)>{\sf score}(b_0)$. Thus if $i$ issues a {\sf read}$()$ operation after {\sf update}$_i(b_0,b_i^\prime)$ it returns a blockchain such that ${\sf score}(b_0 ^\frown b)$ and the possible infinite {\sf read}$()$ operations issued by other processes always return blockchain such that ${\sf score}(b_0)$, violating the Eventual Prefix property. The construction of $H$ can be completed iterating the same reasoning for an infinite number of {\sf append}$()$ operation issued by $i$, thus $H$ violates the Eventual Prefix Property leading to a contradiction. 
\end{proof}

\begin{lemma}\label{l:historyCondition3}
	Property R3 is a necessary condition for any protocol $\mathcal{P}$ to implement a BT-ADT generating histories $H$ satisfying the Eventual Prefix property.
\end{lemma}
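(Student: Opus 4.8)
The plan is to argue by contradiction, reusing the indistinguishability technique of the proof of Lemma~\ref{l:historyConditions12}, but now applied to an update that reaches \emph{some} correct process while missing another. Assume a protocol $\mathcal{P}$ implements a BT-ADT all of whose histories satisfy the Eventual Prefix property, yet some history of $\mathcal{P}$ violates R3: it contains an event ${\sf update}_i(b_g,b_j)$ (so $i$ holds locally a chain ending in $b_j$ on top of $b_g$) while, for some correct process $k$, no event ${\sf receive}_k(b_g,b_j)$ ever occurs. First I would observe that, from $k$'s standpoint, an execution in which this update happens but its notification to $k$ is lost is indistinguishable from one in which the block $b_j$ was never created; moreover, by well-formedness of BlockTree updates, $k$ can never apply an ${\sf update}$ for any descendant of $b_j$ without first holding $b_j$, so $k$'s local BlockTree never contains $b_j$ nor any block of the branch below it. Hence $\mathcal{P}$ admits an (infinite) history $H$ in which $i$ performs a single ${\sf append}(b_j)$ --- keeping $H$ in $E(a,r^*)$, the exact domain of the Eventual Prefix property --- while $i$ and $k$ each issue infinitely many ${\sf read}()$ operations and $k$ never receives $(b_g,b_j)$.

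Next I would examine the returned blockchains. Let $r$ be the first ${\sf read}()$ of $i$ following ${\sf update}_i(b_g,b_j)$; by the BlockTree output function $e_{rsp}(r):bc$ is the chain through $b_g$ extended by $b_j$, so its score is $s={\sf score}(bc_g^\frown\{b_j\})$, where $bc_g$ denotes the chain up to $b_g$, and by increasing monotonicity of ${\sf score}$ we have $s>{\sf score}(bc_g)$. Any later ${\sf read}()$ of $i$ returns, by Local monotonic read and by the fact that in $H$ no branch competing with the $b_j$-branch is ever visible to $i$, a chain of score at least $s$ that still extends $bc_g^\frown\{b_j\}$. Any ${\sf read}()$ of $k$, on the contrary, returns a chain that does not contain $b_j$; since such a chain shares with every chain returned by $i$ at most the prefix $bc_g$, the maximal common prefix of the two has score at most ${\sf score}(bc_g)<s$.

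It follows that for the read $r$ with score $s$, every pair $(e_{rsp}(r_h),e_{rsp}(r_k))$ formed by a ${\sf read}()$ $r_h$ of $i$ and a ${\sf read}()$ $r_k$ of $k$ occurring after $r$ satisfies ${\sf mpcs}(e_{rsp}(r_h):bc_h,e_{rsp}(r_k):bc_k)<s$; as both $i$ and $k$ read infinitely often, this set of pairs is infinite, contradicting the Eventual Prefix property, so R3 is necessary (one may, if desired, iterate over an unbounded family of appends by $i$ exactly as in Lemma~\ref{l:historyConditions12}). I expect the main obstacle to be the indistinguishability step: one has to make precise that $\mathcal{P}$ genuinely admits the constructed execution --- i.e.\ that $k$'s algorithm cannot use hidden information to eventually adopt the $b_j$-branch when the relevant message never arrives --- and that no alternative message can let $k$ reconstruct $b_j$'s branch, which is where the restriction of the event set to events generated at correct processes and the well-formedness of ${\sf update}$ are used.
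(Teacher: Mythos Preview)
Your proposal is correct and follows essentially the same contradiction argument as the paper: construct a history in which one correct process $k$ never receives the update, so $k$'s local BlockTree permanently lacks the appended block and the reads of $i$ and $k$ diverge forever, violating Eventual Prefix. The only notable differences are cosmetic: the paper uses three processes $i,j,k$ (with $j$ receiving the update to stress that R3, unlike R1/R2, concerns \emph{partial} reception) and iterates over infinitely many appends, whereas you keep a single append to stay explicitly in $E(a,r^*)$ and argue directly via ${\sf mpcs}$---which is actually tighter with respect to Definition~\ref{def:eventual=preficx}.
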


\begin{proof}
	Let us assume that there exists a protocol $\mathcal{P}$ implementing a BT-ADT that generates histories $H$ satisfying Eventual Prefix property but not Property R3. Thus, in $H$ there is some {\sf update}$_i(b,b_i^\prime)$ $u$ at some process $i$ such that the {\sf receive}$_j(b,b_i^\prime)$ events do not occur at all processes $j \neq i$. \\Let us consider a system composed by three processes, $i,j$ and $k$. The system execution generates the following history $H$ where R3 is not verified. In particular, in $H$ are present the {\sf update}$_i(b_0,b_i ^\prime)$, {\sf receive}$_j(b_0,b_i ^\prime)$ events but there is no any {\sf receive}$_k(b_0,b_i ^\prime)$ event. It follows that $bt_i=bt_j=b_0 \frown b_i^\prime$ and $bt_z=b_0$.
	We apply the same argument as for Lemma \ref{l:historyConditions12}. Let us assume that $j$ and $k$ perform {\sf read}$()$ operations. Such operation returns the result of $f(bt_j)$ and $f(bt_k)$ respectively. By the {\sf score} function definition it follows that ${\sf score}(b_0 ^\frown b_i^\prime)>{\sf score}(b_0)$. If $j$ issues a {\sf read}$()$ operation after {\sf update}$_j(b_0, b_i^\prime)$ it returns a blockchain with ${\sf score}(b_0 ^\frown b)$ and the other {\sf read}$()$ operations issued by $k$ will always return blockchain with ${\sf score}(b_0)$. The construction of $H$ can be completed iterating the same reasoning for an infinite number of {\sf append}$()$ operation issued by $i$, thus $H$ violates the Eventual Prefix Property leading to a contradiction. 
\end{proof}

\begin{theorem}\label{th:updateAgreementNecessityEventual}
	The update agreement property is necessary to construct concurrent histories $H=\langle \Sigma, E, \Lambda, \mapsto, \prec, \nearrow \rangle$ generated by a BT-ADT that satisfy the BT Eventual Consistency criterion.
\end{theorem}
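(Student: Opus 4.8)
The plan is to derive Theorem~\ref{th:updateAgreementNecessityEventual} as a direct composition of the two preceding lemmas together with the definition of the BT Eventual Consistency criterion. Recall that this criterion is, by definition, the conjunction of Block validity, Local monotonic read, Ever growing tree, and the Eventual prefix property (Definition~\ref{def:eventual=preficx}). Hence any protocol $\mathcal{P}$ that implements a BT-ADT all of whose admissible concurrent histories satisfy the BT Eventual Consistency criterion in particular generates only histories that satisfy the Eventual prefix property. This reduces the theorem to showing that the Update Agreement property (Definition~\ref{def:historyUpdate}), i.e. the conjunction of R1, R2 and R3, is necessary for the Eventual prefix property alone.

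First I would invoke Lemma~\ref{l:historyConditions12}: in every history generated by a protocol satisfying Eventual prefix, properties R1 and R2 must hold. Here I would make explicit that, although the lemma is phrased with ``R1 or R2'', its proof in fact exhibits separately an execution violating Eventual prefix whenever R1 is dropped, and an execution violating Eventual prefix whenever R2 is dropped (the parenthetical case analysis in that proof); consequently each of R1 and R2 is individually necessary. Next I would invoke Lemma~\ref{l:historyCondition3} to obtain the necessity of R3. Putting the three together, every history produced by $\mathcal{P}$ satisfies R1 $\wedge$ R2 $\wedge$ R3, which is precisely the Update Agreement property; since $\mathcal{P}$ was an arbitrary implementation generating histories satisfying the BT Eventual Consistency criterion, Update Agreement is necessary, which proves the theorem.

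This argument has no genuine obstacle: it is essentially a corollary of the two lemmas. The only point that requires care is the bookkeeping between the logical ``or'' in the statement of Lemma~\ref{l:historyConditions12} and the conjunction demanded by Definition~\ref{def:historyUpdate}; I would resolve this by appealing to what the lemma's proof actually establishes (necessity of each of R1 and R2 individually), rather than to its loosely phrased statement. Everything else follows by unfolding the definition of the BT Eventual Consistency criterion and chaining the two necessity results.
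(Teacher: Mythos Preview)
Your proposal is correct and follows essentially the same approach as the paper, which simply states that the result follows directly from Lemma~\ref{l:historyConditions12}, Lemma~\ref{l:historyCondition3}, and the definition of the BT Eventual Consistency criterion. Your additional care in disambiguating the ``or'' in the statement of Lemma~\ref{l:historyConditions12} versus the conjunction required by Definition~\ref{def:historyUpdate} is a valid clarification that the paper glosses over.
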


\begin{proof}
	The proof follows directly from Lemma \ref{l:historyConditions12}, Lemma \ref{l:historyCondition3} and the definition of Eventual BT consistency criterion. 
\end{proof}

Considering Theorem \ref{th:updateAgreementNecessityEventual} and Theorem \ref{th:strongImpliesEventual} the next Corollary follows.

\begin{corollary}\label{c:updateAgreementNecessityEventual}
	There not exists a concurrent history $H=\langle \Sigma, E, \Lambda, \mapsto, \prec, \nearrow \rangle$ of the system that uses a BT-ADT that satisfies the Strong BT consistency criterion but not the Update Agreement.
\end{corollary}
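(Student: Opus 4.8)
The plan is to derive the statement by contradiction, chaining the two results already established: the inclusion $\mathcal{H}_{SC} \subset \mathcal{H}_{EC}$ from Theorem~\ref{th:strongImpliesEventual}, and the necessity of the Update Agreement property for Eventual Consistency from Theorem~\ref{th:updateAgreementNecessityEventual}.

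First I would assume, toward a contradiction, that there exists a concurrent history $H=\langle \Sigma, E, \Lambda, \mapsto, \prec, \nearrow \rangle$ generated by a BT-ADT that satisfies the BT Strong Consistency criterion but fails to satisfy the Update Agreement (i.e., at least one of the properties R1, R2, R3 is violated in $H$). Since $H \in \mathcal{H}_{SC}$, Theorem~\ref{th:strongImpliesEventual} gives $H \in \mathcal{H}_{EC}$, so $H$ also satisfies the BT Eventual Consistency criterion, and in particular the Eventual prefix property.

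Next I would invoke Theorem~\ref{th:updateAgreementNecessityEventual}: any concurrent history generated by a BT-ADT that satisfies the BT Eventual Consistency criterion must satisfy the Update Agreement. Applying this to $H$ contradicts the assumption that $H$ violates Update Agreement, which completes the argument.

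I do not expect a genuine obstacle here, since the statement is a direct corollary of two previously proved theorems; the only care needed is to make explicit that "satisfies $SC$" is being used through the set inclusion $\mathcal{H}_{SC}\subseteq\mathcal{H}_{EC}$ so that the necessity result for $EC$ transfers verbatim to $SC$. One could alternatively phrase the whole thing contrapositively — if $H$ does not satisfy Update Agreement then by Theorem~\ref{th:updateAgreementNecessityEventual} it is not in $\mathcal{H}_{EC}$, hence not in $\mathcal{H}_{SC}$ by Theorem~\ref{th:strongImpliesEventual} — which avoids the explicit contradiction setup and is perhaps cleaner.
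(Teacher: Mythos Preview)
Your proposal is correct and matches the paper's approach exactly: the paper simply states that the corollary follows from Theorem~\ref{th:updateAgreementNecessityEventual} and Theorem~\ref{th:strongImpliesEventual}, and you have spelled out precisely how those two results combine (via either contradiction or the equivalent contrapositive chain). There is nothing to add.
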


%
%
%

In the following we consider a communication primitive that is inspired by the Liveness properties of the reliable broadcast \cite{SDbook2011}. We will prove that this abstraction is necessary to implement Eventual BT Consistency.
%

\begin{definition}[Light Reliable Communication (LRC)]
	A concurrent history $H$ satisfies the properties of the LRC abstraction if and only if:
	\begin{itemize}
		\item (Validity): 
		$\forall {\sf send}_i(b,b_i) \in H, \exists {\sf receive}_i(b,b_i) \in H$;
		\item (Agreement): 
		$\forall {\sf receive}_i(b,b_j) \in H, \forall k \exists {\sf receive}_k(b,b_i) \in H$
	\end{itemize}
\end{definition}
In other words, if a correct process $i$ sends a message $m$ then $i$ eventually receives $m$ and if a message $m$ is received by some correct process (e.g., $i$ itself), them $m$ is eventually received by every correct process.

\begin{theorem}\label{th:LRCNecessity}
	The LRC abstraction is necessary to for any BT-ADT implementation that generates concurrent histories that satisfies the BT Eventual Consistency criterion.
\end{theorem}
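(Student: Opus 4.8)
The plan is to reduce to the already-established necessity of the Update Agreement property (Theorem~\ref{th:updateAgreementNecessityEventual}, via Lemmas~\ref{l:historyConditions12} and~\ref{l:historyCondition3}). First I would observe that the LRC abstraction, once combined with the generic implementation of the {\sf update} operation described above (a process applies {\sf update} to its local copy upon {\sf receive}, and upon locally generating a valid block it invokes {\sf send}), yields exactly the three properties R1, R2, R3 of Update Agreement. Conversely — and this is the direction needed for necessity — I would argue that any protocol $\mathcal{P}$ generating histories that satisfy the BT Eventual Consistency criterion must, on its {\sf send}/{\sf receive}/{\sf update} events, realize the Validity and Agreement guarantees of LRC. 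So the core of the argument is contrapositive: assume a protocol whose communication layer violates LRC Validity (some {\sf send}$_i(b,b_i)$ has no matching {\sf receive}$_i$) or LRC Agreement (some block is received by one correct process but not by all), and exhibit an admissible execution of $\mathcal{P}$ whose history violates Eventual Prefix.

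The key steps, in order: (1) Spell out the correspondence between LRC events and the history events $send$, $receive$, $update$ already fixed in the Byzantine-restricted history definition. (2) Suppose LRC Validity fails: there is a {\sf send}$_i(b,b_i)$ with no {\sf receive}$_i(b,b_i)$. Since a correct process applies {\sf update}$_i$ only after the corresponding {\sf receive}$_i$ (by the generic {\sf update} semantics, property R2-style reasoning), and since {\sf send} is triggered precisely by a locally generated valid block that must also be locally updated, the missing self-delivery means either $i$ never installs its own block — so it is effectively as if R1 is violated — or some other process never learns of it. In either case I invoke the construction inside the proof of Lemma~\ref{l:historyConditions12}: $i$ reaches a local BlockTree with strictly larger {\sf score} (by increasing monotonicity of {\sf score}) than the other processes, who forever read $\{b_0\}$; iterating over infinitely many {\sf append}$()$ operations by $i$ produces infinitely many read pairs with maximal common prefix score below $s$, contradicting Eventual Prefix. (3) Suppose instead LRC Agreement fails: some {\sf receive}$_j(b,b_i)$ occurs but {\sf receive}$_k(b,b_i)$ is absent for some correct $k$; this is exactly the hypothesis of Lemma~\ref{l:historyCondition3}, whose three-process construction again yields an Eventual-Prefix violation. (4) Conclude that LRC Validity and Agreement are both necessary, hence the LRC abstraction is necessary for any BT-ADT implementation generating BT-Eventual-Consistent histories.

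I expect the main obstacle to be making precise step~(2): LRC Validity is about a process receiving \emph{its own} message, which is not literally one of R1--R3, so I must argue that without self-delivery the protocol cannot in general guarantee that a locally generated block is ever installed (or ever propagated), and hence that the Lemma~\ref{l:historyConditions12} adversarial schedule is realizable. The cleanest way is to note that the {\sf update} operation is only ever invoked in response to a {\sf receive} (including the sender's own loop-back), so a dropped self-message is indistinguishable, from the standpoint of the local state, from a message that was never sent — reducing step~(2) to the already-proved R1 case. The Agreement direction in step~(3) is essentially a verbatim appeal to Lemma~\ref{l:historyCondition3} and should require no new work. A minor point to handle carefully is that these constructions assume the existence of at least one {\sf append}$()$/{\sf update} event, which is legitimate since histories with no appends trivially satisfy every criterion and are not the ones under consideration.
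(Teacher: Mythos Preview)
Your overall strategy---contrapositive, reduce an LRC violation to an Update Agreement violation, then invoke Theorem~\ref{th:updateAgreementNecessityEventual}---is exactly the paper's approach. The paper's proof in fact argues only the Agreement case (your step~(3)) and leaves Validity implicit, so your decomposition into two cases is more careful than the original.

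There is, however, a small mismatch in your step~(2). Your reduction of a Validity failure to the R1 case rests on the claim that ``the {\sf update} operation is only ever invoked in response to a {\sf receive} (including the sender's own loop-back)''. That is not the paper's model: in the generic implementation described just before Definition~\ref{def:historyUpdate}, the originating process performs {\sf update}$_i(b,b_i)$ directly upon obtaining a valid block, and Figure~\ref{fig:run} explicitly shows {\sf update}$_i$ preceding {\sf receive}$_i$. So a missing self-delivery does \emph{not} prevent $i$ from installing its own block. The right reduction is instead to R3: if LRC Validity fails for {\sf send}$_i(b,b_i)$, then possibly no {\sf receive} event for $(b,b_i)$ occurs at any process (LRC Agreement being vacuously satisfied), while {\sf update}$_i(b,b_i)$ does occur; hence R3 is violated and you invoke Lemma~\ref{l:historyCondition3}, not Lemma~\ref{l:historyConditions12}. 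With that fix, your argument goes through and is in fact more complete than the paper's.
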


\begin{proof}
	The proof done by generating a concurrent history $H$ that violates the LRC properties and showing that $H$ also violate the Update Agreement properties.  For Theorem \ref{th:updateAgreementNecessityEventual} the Update Agreement properties are necessary condition to implement BT-ADT that generates concurrent histories that satisfies the BT Eventual Consistency criterion. \\
	Let us consider $H$ where at process $n$ occurs the event {\sf update}$_n(b,b_n)$ and {\sf send}$_n(b,b_n)$ and where the LRC2 property is not satisfied. 
	If LRC2 is violated then in $H$ we can have that there exist some process $i$ at which occurs the  ${\sf receive}_i(b,b_n)$ event and some process $j$ at which never occurs the ${\sf receive}_j(b,b_n)$ event. Since at process $n$ occurred the event {\sf update}$_n(b,b_n)$, then, for the R3 property, for each process $k$ {\sf update}$_n(b,b_n)$ has to occur. For R2 the {\sf update}$_m(b,b_n)$ event at some process $m$ has to be preceded by a {\sf receive}$_m(b,b_n)$ event at the same process $m$. Since by hypothesis not at all processes $m$ the {\sf receive}$_m(b,b_n)$ occurs then the property is violated, violating the Update Agreement properties, which are necessary conditions to implement BT-ADT that generates concurrent histories that satisfies the BT Eventual Consistency criterion, which concludes the proof.\\
\end{proof}

Finally, from Theorem \ref{th:strongImpliesEventual} and Theorem \ref{th:LRCNecessity} the next Corollary follows. 

\begin{corollary}\label{c:LRCNecessityStrong}
	The LRC abstraction is necessary to for any BT-ADT implementation that generates concurrent histories that satisfies the BT Strong Consistency criterion.
\end{corollary}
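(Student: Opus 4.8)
The plan is to obtain the statement as a direct composition of the two results already established, with essentially no new work beyond a bookkeeping argument about what \emph{necessity of an abstraction for an implementation} means. First I would recall that, by Theorem~\ref{th:strongImpliesEventual}, $\mathcal{H}_{SC} \subset \mathcal{H}_{EC}$: every concurrent history satisfying the BT Strong Consistency criterion also satisfies the BT Eventual Consistency criterion. Consequently, if $\mathcal{P}$ is any protocol implementing a BT-ADT such that all its admissible histories satisfy $SC$, then a fortiori all admissible histories of $\mathcal{P}$ satisfy $EC$; in other words, $\mathcal{P}$ is also a correct implementation of the BT-ADT under the $EC$ criterion.

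The second step is to invoke Theorem~\ref{th:LRCNecessity}, which states that the LRC abstraction is necessary for \emph{any} BT-ADT implementation generating concurrent histories satisfying the BT Eventual Consistency criterion. Applying this to the protocol $\mathcal{P}$ above — which, by the first step, is such an implementation — yields that $\mathcal{P}$ must rely on (equivalently, its histories must satisfy the properties of) the LRC abstraction. Since $\mathcal{P}$ was an arbitrary BT-ADT implementation producing $SC$-consistent histories, this establishes that LRC is necessary for any such implementation, which is exactly the statement of the corollary.

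The only point requiring a sentence of care — and the closest thing to an obstacle here, though it is really a formality — is to make sure the notion of ``necessity'' is monotone along the inclusion $\mathcal{H}_{SC}\subset\mathcal{H}_{EC}$: ``$X$ is necessary for implementing criterion $C$'' means every $C$-consistent implementation satisfies $X$, so restricting attention to the smaller class of $SC$-consistent implementations (a subclass of the $EC$-consistent ones) can only preserve, never weaken, a necessity claim. I would state this explicitly and then conclude. No delicate construction is needed, since the hard combinatorial content (building a history that violates Update Agreement, hence the Eventual Prefix property, when LRC fails) is already carried out in the proof of Theorem~\ref{th:LRCNecessity}.
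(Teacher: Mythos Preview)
Your proposal is correct and matches the paper's own reasoning exactly: the paper simply states that the corollary follows from Theorem~\ref{th:strongImpliesEventual} and Theorem~\ref{th:LRCNecessity}, without giving any further argument. Your write-up is a faithful (and more carefully spelled-out) version of that two-step composition.
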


\subsection{System model and hierarchy}

\textbf{Observation.}
Following our Oracle based abstraction (Section \ref{sec:composition}) we assume by definition that the synchronization on the block to append is oracle side and takes place during the append operation. It follows that when a process takes the token to append a block it can only use the LRC communication abstraction.

\begin{theorem}\label{t:noForkBeStrong}
	There does not exist an implementation of $\mathfrak{R}(\text{BT-ADT}_{ SC}, \Theta)$ with  $\Theta\neq \Theta_{F,k=1}$ that uses a LRC primitive and generates histories satisfying the BT Strong consistency.
\end{theorem}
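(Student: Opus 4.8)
The plan is to derive a contradiction by showing that an implementation of $\mathfrak{R}(\text{BT-ADT}_{SC}, \Theta)$ with $\Theta \neq \Theta_{F,k=1}$ using only the LRC primitive would let two correct processes append conflicting blocks at the same height onto the same predecessor, and then build a run in which two \textsf{read}$()$ operations return blockchains that diverge, violating Strong prefix. First I would observe that for any oracle $\Theta$ other than $\Theta_{F,k=1}$ — that is, $\Theta_P$ or $\Theta_{F,k}$ with $k>1$ — the $k$-Fork Coherence theorem (Theorem~\ref{t:kFork}) only guarantees an upper bound of $k \geq 2$ (or no bound at all) on the number of \textsf{append}$()$ operations that return $\top$ for the same token. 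Hence the oracle, by itself, permits two distinct valid blocks $b^{tkn_h}$ and $b'^{tkn_h}$ to be successfully appended to the same block $b_h = \text{last\_block}(f(bt))$ by two different processes $p_i$ and $p_j$.

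Next I would construct the offending execution explicitly. Since the synchronization on the block to append is entirely oracle-side (per the Observation preceding the theorem), once $p_i$ and $p_j$ each obtain a token for $b_h$, the append completes atomically and the only cross-process coordination available is the LRC communication abstraction. I would schedule $p_i$ and $p_j$ so that both invoke \textsf{append}$()$, both obtain tokens for $b_h$ (possible because the tape construction guarantees success with positive probability and the oracle grants up to $k\geq 2$ tokens), and both locally \textsf{update} their BlockTrees, so that $bt_i = \{b_0\}^\frown\cdots^\frown\{b_h\}^\frown\{b\}$ while $bt_j = \{b_0\}^\frown\cdots^\frown\{b_h\}^\frown\{b'\}$ with $b \neq b'$. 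I would arrange (or, if the score function allows ties, exploit) that $\textsf{score}$ of the two resulting chains is equal, or at any rate that neither is a prefix of the other. Crucially, LRC's Validity and Agreement properties only require that messages eventually get delivered to all correct processes; they impose no ordering and no agreement on \emph{which} block wins a fork. So I can let $p_i$ perform a \textsf{read}$()$ returning $bc_i$ (ending in $b$) immediately after its own update, and symmetrically $p_j$ perform a \textsf{read}$()$ returning $bc_j$ (ending in $b'$), before any reconciling messages have forced convergence.

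Then the two read response events $e_{rsp}(r_i):bc_i$ and $e_{rsp}(r_j):bc_j$ witness $\neg(bc_i \sqsubseteq bc_j) \wedge \neg(bc_j \sqsubseteq bc_i)$, directly contradicting the Strong prefix property that $\mathfrak{R}(\text{BT-ADT}_{SC}, \Theta)$ must satisfy. To make this airtight I also need to check the other $SC$ properties are not what rules out the run: Block validity holds because both $b$ and $b'$ came from tokens hence lie in $\mathcal{B}'$ and both were appended; Local monotonic read is about single-process sequences and is untouched; Ever growing tree can be satisfied by continuing the run with further appends extending one branch. The main obstacle I anticipate is handling the \textbf{score/selection-function} subtlety: if $\textsf{score}$ and $f$ are such that one of the two forked chains always strictly dominates and every correct process deterministically adopts it, one might worry Strong prefix could still hold. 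I would address this by noting that without oracle-side synchronization ($\Theta \neq \Theta_{F,k=1}$) and with only LRC (no agreement on message order), $p_i$ can read its chain $bc_i$ in the window \emph{before} it learns of $bc_j$; at that instant $p_i$'s local $bt_i$ simply does not contain $b'$, so $f(bt_i)$ must return a chain through $b$, regardless of how $f$ would rank $b$ versus $b'$ once both are known — and this already suffices for a divergent pair of reads, since $p_j$'s symmetric early read returns a chain through $b'$. Formalizing this "before reconciliation" window using only the guarantees LRC provides (which, unlike Update Agreement built on a reliable-broadcast-style primitive with stronger timing, gives no bound forcing early convergence) is the delicate part of the argument.
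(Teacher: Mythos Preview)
Your proposal is correct and follows essentially the same approach as the paper: exploit that any oracle $\Theta\neq\Theta_{F,k=1}$ allows two processes to successfully append distinct blocks to the same predecessor, then schedule a \textsf{read}$()$ at each process before the other's update has been received via LRC, so that the two returned blockchains diverge and Strong prefix fails. Your handling of the selection-function subtlety---observing that each process's local $bt$ simply does not yet contain the other branch at the moment of its read, so $f$ has no choice---is in fact slightly cleaner than the paper's version, which relies on a particular interleaving of \textsf{update} events to reach the same conclusion.
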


\begin{proof}
	Let us assume that there exist a BT-ADT implementation that satisfies the BT Strong consistency criterion refined with a $\Theta$-ADT different from $\Theta_{F,k=1}$, which implies that forks in the $bt$ can occur. Let us now construct the following history $H$ generated by the system execution at two correct processes $i$ and $j$. At the beginning $bt_i=bt_j=b_0$. 
	At the same time instant $t_0$ both processes invoke {\sf append}$(b_1)$ and {\sf append}$(b_2)$ operations respectively and $b_1, b_2 \in \mathcal{B'}$. By definition, the {\sf append}$()$ operation applies a selection function $f \in \mathcal{F}$ to select the block from the BlockTree to which the new block has to be appended, in this case such block is $f(bt_i)=f(bt_j)=f(b_0)=b_0$. 
	By construction, $b_i, b_j \in \mathcal{B'}$, let us assume that a fork occurs and both  {\sf append}$()$ operations take place and update events are triggered. Since an LRC primitive is used, each update is sent to the other processes. Since synchronous channels are employed, then by time $t_0+\delta$ the update events are delivered by $i$ and $j$.
	Let us consider that $H$ contains the following ordered events: $update_i(b_0,b_j) \mapsto update_i(b_0,b_i)$ and $update_j(b_0,b_i) \mapsto update_j(b_0,b_j)$. It follows that at a time instant $t<t_0+\delta$ it can occur that $bt_i=b_0 ^\frown b_j$ and $bt_j=b_0 ^\frown b_i$. Let us finally assume that at time $t$ both $i$ and $j$ issue a {\sf read}$()$ operation. By definition it returns the result of the selection function $f$ to the BlockTree. For both processes the BlockTree is a blockchain, thus the {\sf read}$()$ operations returns $b_0 ^\frown b_j$ at $i$ and $b_0 ^\frown b_i$ at $j$ violating the Strong Prefix property leading to a contradiction. Thus, there no exists an implementation of a BT-ADT refined with a $\Theta$-ADT different from $\Theta_{F,k=1}$ that generates histories satisfying the BT Strong consistency even in a fault-free environment.
\end{proof}

From Theorem \ref{t:noForkBeStrong} the next Corollary follows.
\begin{corollary}
	$\Theta_{F,k=1}$ is necessary for any implementation of any $\mathfrak{R}(\text{BT-ADT}_{ SC}, \Theta)$ that generates histories satisfying the BT Strong consistency.
\end{corollary}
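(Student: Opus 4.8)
The plan is to derive a contradiction by constructing a concurrent history in which two correct processes each obtain a token on the same block of the BlockTree, append concurrently (creating a fork), and then issue \textsf{read}$()$ operations that return incompatible blockchains, thereby violating the Strong prefix property. First I would assume, for contradiction, that an implementation of $\mathfrak{R}(\text{BT-ADT}_{SC},\Theta)$ with $\Theta\neq\Theta_{F,k=1}$ exists that uses an LRC primitive and whose histories satisfy BT Strong consistency. Since $\Theta$ is not the frugal oracle with $k=1$, by the $k$-Fork Coherence theorem (Theorem~\ref{t:kFork}) and the definition of the refinement, the oracle admits $k\geq 2$ (or $k=\infty$ for $\Theta_P$) distinct \textsf{consumeToken}$()$ invocations returning $\top$ for the same token, i.e., at least two concurrent \textsf{append}$()$ operations on the same block can both succeed; this is the fork that the construction exploits.

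The core of the argument is the explicit history. Take two correct processes $i,j$ with $bt_i=bt_j=b_0$ initially. At time $t_0$ both invoke \textsf{append}$(b_1)$ and \textsf{append}$(b_2)$ respectively with $b_1,b_2\in\mathcal{B}'$; the selection function gives $f(bt_i)=f(bt_j)=b_0$, so both new blocks are chained directly to the genesis block. Since $k\geq 2$, both token consumptions succeed and both \textsf{update} events fire, each followed by a \textsf{send} event as mandated by the LRC Validity property. I would then schedule the message deliveries so that, over the interval before the updates cross, process $i$ applies $\textsf{update}_i(b_0,b_i)$ to get $bt_i=b_0{}^\frown b_i$ and process $j$ applies $\textsf{update}_j(b_0,b_j)$ to get $bt_j=b_0{}^\frown b_j$ — i.e. each process has so far only incorporated its own block. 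At that instant each local BlockTree is a single chain, so a \textsf{read}$()$ at $i$ returns $b_0{}^\frown b_i$ and a \textsf{read}$()$ at $j$ returns $b_0{}^\frown b_j$. With $b_i\neq b_j$ neither chain prefixes the other, contradicting the Strong prefix property; hence no such implementation exists. Finally, I would note the contradiction already arises in a fault-free synchronous run, strengthening the statement.

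The main obstacle is making the timing argument airtight: I must argue that the implementation cannot preempt the fork before both \textsf{read}$()$ operations complete — in particular, that the LRC primitive gives the adversary/scheduler enough freedom to delay the cross-delivery of the two updates past the moment of the two reads, while still respecting LRC Validity and Agreement (both of which are eventual, not immediate, so they do not constrain the finite prefix of the history I construct). I would lean on the Observation preceding the theorem, which states that once a process holds a token it can only use the LRC abstraction, so there is no stronger synchronization available to the append path that could have prevented the two tokens from being issued on $b_0$ in the first place; the oracle's inability to bound forks at $k=1$ is exactly what Theorem~\ref{t:kFork} licenses. A secondary point to handle carefully is the case $\Theta=\Theta_{F,k}$ with finite $k>1$ versus $\Theta=\Theta_P$: both are covered uniformly since $k\geq 2$ suffices for the two-process construction, and the remaining $k-2$ potential forks are simply not exercised.
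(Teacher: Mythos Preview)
Your proposal is correct and follows essentially the same approach as the paper. In the paper the Corollary is stated as an immediate consequence of Theorem~\ref{t:noForkBeStrong}, and your write-up is in fact a proof of that theorem: a two-process fork construction in which concurrent \textsf{append}s on $b_0$ both succeed (because $k\geq 2$), followed by \textsf{read}s that return incomparable chains, violating Strong prefix. The only cosmetic difference is the ordering of local updates---the paper schedules each process to receive the \emph{other}'s block first (so $bt_i=b_0{}^\frown b_j$ and $bt_j=b_0{}^\frown b_i$ at read time), whereas you have each process apply its \emph{own} block first; both orderings yield two non-prefix-related chains and the contradiction goes through identically.
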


Thanks to Theorem \ref{t:consensusOracleReduction} the next Corollary also follows.

\begin{corollary}
	Consensus is necessary for any implementation of a BT-ADT that generates histories satisfying the BT Strong consistency.
\end{corollary}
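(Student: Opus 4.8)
The plan is to chain the two results that immediately precede this corollary. The corollary stated just above (itself a consequence of Theorem~\ref{t:noForkBeStrong}) tells us that $\Theta_{F,k=1}$ is \emph{necessary}: no refinement $\mathfrak{R}(\text{BT-ADT}_{SC},\Theta)$ with $\Theta\neq\Theta_{F,k=1}$ admits an implementation whose histories satisfy the BT Strong consistency criterion, because as soon as forks are permitted the construction in the proof of Theorem~\ref{t:noForkBeStrong} exhibits two {\sf read}$()$ operations returning incomparable blockchains, violating Strong prefix. On the other side, Theorem~\ref{t:consensusOracleReduction} tells us that $\Theta_{F,k=1}$ has Consensus number $\infty$, i.e.\ there is a wait-free reduction (the protocol $\mathcal{A}$ of Figure~\ref{fig:reduction}) from Consensus to the $\Theta_{F,k=1}$ oracle. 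So first I would recall these two facts explicitly, then compose them.

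The composition I would carry out contrapositively, which avoids having to ``open up'' a black-box BT-ADT implementation. Suppose, for contradiction, that there is an implementation of a BT-ADT producing only histories satisfying BT Strong consistency that does \emph{not} require Consensus — concretely, an implementation that runs over a system whose only synchronization primitives have Consensus number $1$ (e.g.\ atomic registers / atomic snapshot). By the preceding corollary, any such implementation must provide the synchronization power of $\Theta_{F,k=1}$ on the block appended to a given block: at most one {\sf append}$()$ may succeed per token, which is exactly the $k=1$ behaviour. But then this mechanism is a wait-free object with Consensus number $\geq 2$ (indeed $\infty$, by Theorem~\ref{t:consensusOracleReduction}) built out of objects of Consensus number $1$, contradicting the impossibility of boosting Consensus number in the Herlihy hierarchy~\cite{herlihy1991wait}. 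Hence every implementation of a BT-ADT meeting BT Strong consistency must itself be able to solve Consensus, which is the claim.

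The main obstacle, and the step that needs the most care in the write-up, is the link ``Strong-consistent BT-ADT implies an internal $\Theta_{F,k=1}$-equivalent object.'' The preceding corollary gives this at the level of the refinement $\mathfrak{R}(\text{BT-ADT}_{SC},\Theta)$, where the oracle is an explicit component; for a generic (non-refinement) implementation one has to argue that the no-fork guarantee on appends is precisely what $\Theta_{F,k=1}$ encapsulates, so that the $\Theta_{F,k=1}$-to-Consensus reduction of Figure~\ref{fig:reduction} still applies after inlining. Once that identification is granted, the remainder is just transitivity of reductions (BT-ADT$_{SC}$ solves $\Theta_{F,k=1}$ solves Consensus) together with Theorem~\ref{t:consensusOracleReduction}, and no further computation is required.
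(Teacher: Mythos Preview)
Your approach is essentially the same as the paper's: chain the preceding corollary (necessity of $\Theta_{F,k=1}$, from Theorem~\ref{t:noForkBeStrong}) with Theorem~\ref{t:consensusOracleReduction} ($\Theta_{F,k=1}$ has Consensus number~$\infty$). The paper's own justification is in fact a single sentence invoking Theorem~\ref{t:consensusOracleReduction}, so your write-up is considerably more detailed than what the paper offers; in particular, your contrapositive via the Herlihy hierarchy and your explicit flagging of the gap between ``refinement with an explicit oracle component'' and ``generic BT-ADT implementation'' go beyond what the paper spells out, but the underlying logical skeleton is identical.
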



As direct implication of the Theorem \ref{t:noForkBeStrong} we can eliminate from the hierarchy in Figure \ref{fig:BTHierarchy} both $\mathfrak{R}(\text{BT-ADT}_{ SC}, \Theta_{P})$ and $\mathfrak{R}(\text{BT-ADT}_{ SC}, \Theta_{F,k>1})$, since in both cases the $\Theta$-ADT employed allows forks, thus such enriched ADTs can not generate histories that satisfies the BT Strong consistency criterion. The resulting hierarchy is depicted in Figure \ref{fig:BTHierarchyPurged}.

%
%
%
%
%
%

\begin{figure}
	\centering
	\begin{tikzpicture}[scale=0.9]
	
	\node[draw](1) at (0,2) {$\mathfrak{R}(\text{BT-ADT}_{SC}, \Theta_{F,k=1})$};
	\node[draw](2) at (-2,-2) {$\mathfrak{R}(\text{BT-ADT}_{ E C}, \Theta_{F,k>1})$};
	
	\node[draw, gray!40](3) at (2.5,-1) {$\mathfrak{R}(\text{BT-ADT}_{ SC}, \Theta_{P})$};
	\node[draw,  gray!40](5) at (-2,0) { $\mathfrak{R}(\text{BT-ADT}_{ SC}, \Theta_{F,k>1})$};
	\node[draw](4) at (0,-4) {$\mathfrak{R}(\text{BT-ADT}_{ E C}, \Theta_{P})$};
	
	
	\draw[->]  (2) edge (4) (5) edge (2) 	(1) edge (3)
	(1) edge (5)	(3) edge (4) ;
	
	
	\node[] () at (-2, 1.3) {\scriptsize Theorem \ref{t:oracolsHierarchyF}};
	\node[] () at (-2.5, 0.7) {\scriptsize Theorem \ref{t:noForkBeStrong}};
	\node[] () at (2.5, 1) {\scriptsize Theorem \ref{t:oracolsHierarchy}};
	\node[] () at (3, 0) {\scriptsize Theorem \ref{t:noForkBeStrong}};	
	\node[] () at (-2.2, -3) {\scriptsize Theorem \ref{t:oracolsHierarchy}};
	\node[] () at (-3.2, -1) {\scriptsize Corollary \ref{c:oracolsHierarchySameOracle}};
	\node[] () at (3, -2.5) {\scriptsize Corollary \ref{c:oracolsHierarchySameOracle}};			
	\end{tikzpicture}
	\caption{$\mathfrak{R}(\text{BT-ADT}, \Theta)$ Hierarchy. In gray the combinations impossible in a message-passing system}\label{fig:BTHierarchyPurged}
\end{figure}
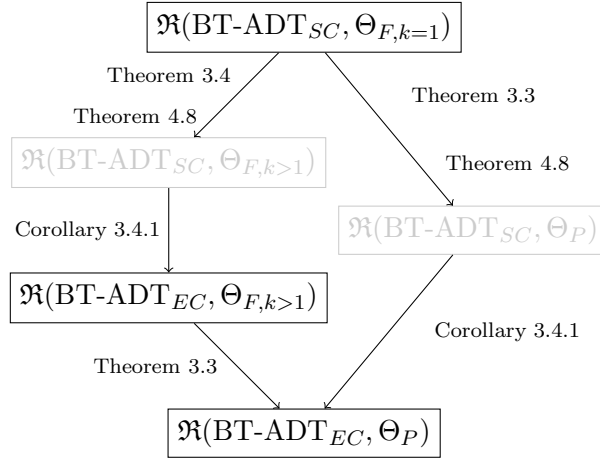

%

\section{Mapping with existing Blockchain-like systems}
\label{sec:mapp-with-exist}

This section completes this work by illustrating the mapping between different existing systems and the specifications and abstractions presented in this paper.  The following table summarizes the mapping between different existing systems and these abstractions. 
More details are given in the following sections. In those sections we refer to a permissionless system as a system where the cardinality of the process set is not a-priori known and each process can read and append into the blockchain. When we do not consider permissionless systems we explicitly state the differences.


\begin{table}[h]
	\centering
	\caption{Mapping of existing systems. Each of these systems assumes at least a light reliable communication.}
	\label{tab:mapping}
	\begin{tabular}{llllll}
		References  &  Refinement \\
		\hline
		Bitcoin~\cite{bitcoin} & \(\mathfrak{R}(BT$-$ADT_{E C}, \Theta_P)\) \\
		Ethereum~\cite{Ethereum}  &  \(\mathfrak{R}(BT$-$ADT_{E C}, \Theta_P)\) \\
		Algorand~\cite{gilad2017algorand}  &  \(\mathfrak{R}(BT$-$ADT_{S C}, \Theta_{F,k=1})\) $SC$ w.h.p      \\
		ByzCoin~\cite{BizCoin}   &  \(\mathfrak{R}(BT$-$ADT_{S C}, \Theta_{F,k=1})\) \\
		PeerCensus~\cite{PeerCensus} & \(\mathfrak{R}(BT$-$ADT_{S C}, \Theta_{F,k=1})\)\\
		Redbelly~\cite{redbelly17}    &  \(\mathfrak{R}(BT$-$ADT_{S C},\Theta_{F,k=1})\) \\
		Hyperledger~\cite{HyperLedger} &  \(\mathfrak{R}(BT$-$ADT_{S C}, \Theta_{F,k=1})\)  \\
	\end{tabular}
\end{table}

\subsection{Bitcoin}
Bitcoin~\cite{bitcoin} is the pioneer of blockchain systems. Any process \(p\in V\) is allowed to {\sf read} the BlockTree and {\sf append} blocks to the BlockTree. Processes are characterized by their computational power represented by \(\alpha_p\), normalized as \(\sum_{p\in V}\alpha_p = 1\). Processes communicate through reliable FIFO authenticated channels (implemented with TCP), which models a partially synchronous setting~\cite{jacm88}.  Valid blocks are flooded in the system. The  {\sf getToken} operation is implemented by a proof-of-work mechanism. The {\sf consumeToken} operation returns true for all valid blocks, thus there is no bounds on the number of consumed tokens. Thus Bitcoin implements a Prodigal Oracle.  The  \(f\) {\sf selects} returns the blockchain which has required the most computational work, guaranteeing that concurrent blocks can only refer to  the most recently appended blocks of the blockchain returned by a read() operation. Garay and al~\cite{GarayKL15} have shown, under a synchronous environment assumption, that Bitcoin ensures Eventual consistency criteria. The same conclusion applies as well for the FruitChain protocol \cite{PS17}, which proposes a protocol similar to BitCoin except for the rewarding mechanism.

\subsection{Ethereum}
Ethereum~\cite{Ethereum} is a permissionless blockchain. Processes are characterized by their merit parameter represented by \(\alpha_p\) (once normalized as \(\sum_{p\in V}\alpha_p = 1\)). Contrarily to Bitcoin, where this merit parameter is representative of a computational power, that is this ability to quickly compute hash functions, in Ethereum this merit is bounded by the ability to move data in memory.  This proof-of-work mechanism is especially designed for commodity hardware. Any process \(p\in V\) is allowed to {\sf read} the BlockTree and {\sf append} blocks to the BlockTree.  Processes communicate through reliable FIFO authenticated channels (implemented with TCP), which models a partially synchronous setting~\cite{jacm88}.  Valid blocks are flooded in the system. The  {\sf getToken} operation is implemented by a proof-of-work mechanism. The {\sf consumeToken} operation returns true for all valid blocks, thus there is no bounds on the number of consumed tokens. Thus Ethereum implements a Prodigal Oracle. The  \(f\) {\sf selects} returns the blockchain which has required the most work (see Section~10 of~\cite{Ethereum}), guaranteeing that concurrent blocks can only refer to  the most recently appended blocks of the blockchain returned by a {\sf read()} operation. This function is implemented through GHOST algorithm~\cite{GHOST}. Kiayias has shown~\cite{KiayiasP16}, under a synchronous environment assumption, that GHOST protocol enjoys both common prefix and chain growth properties. Ethereum thus ensures the Eventual consistency criteria.

\subsection{ByzCoin}
ByzCoin~\cite{BizCoin} is a permissionless blockchain. Processes are characterized by their computational power represented by \(\alpha_p\) (once normalized as \(\sum_{p\in V}\alpha_p = 1\)). Byzcoin assumes a semi synchronous environment, that is,
in every period of length \(b\) there must be a strongly synchronous period of length \(s < b\). The block creation process is separated from  the transaction validation one. The former one is realized by  a proof-of-work mechanism (similar to the Bitcoin's one), and the latter one is achieved by a Byzantine tolerant algorithm (i.e., a variant of PBFT~\cite{pbft}) which  creates micro blocks made of  transactions. 

The  {\sf getToken} operation is implemented by a proof-of-work mechanism.
Due to the PoW mechanism, several key blocks can be concurrently created. The {\sf consumeToken} operation guarantees that during  the   synchronous periods of the semi-synchronous setting  (those synchronous periods ensure  that everyone receives all the concurrent key blocks in a short period of time), a single key block  will be appended to the BlockTree by relying on a deterministic function \(f\) which {\sf selects} the key block whose digest (fingerprint) has the smallest least significant bits among the concurrent key blocks. Under those assumptions, Byzcoin is an implementation of a strongly consistent BlockTree composed with a Frugal Oracle, with $k=1$. 

Note that transactions do not belong to key blocks but to microblocks which are created by  a variant of PBFT where \emph{(i)} the committee members are the miners of the last $w$ appended key blocks in the BlockTree  as returned by a read() operation; \emph{(ii)}  each committee member receives a voting share for each block it has created blocks among these $w$ ones, and \emph{(iii)}  committee members are  organized on a tree rooted at the leader, and \emph{(iv)} this  leader is the process that invoked the last  successful {\sf consumeToken} operation.

\subsection{Algorand}

Algorand~\cite{gilad2017algorand} is an algorithm dedicated to permisionless blockchains. 
Users are characterized by the quantity of coins (stake) they own, represented by \(\alpha_p\) once normalized as \(\sum_{p\in V}\alpha_p = 1\). Algorand  assumes a synchronous setting (rounds) in order to ensure that \emph{(i)} with overwhelming probability all users agree on the same transactions (safety property) and \emph{(ii)}  new transactions are added to the blockchain (liveness property). Note that safety holds even in a semi synchronous environment.  Users  communicate among themselves through reliable communication channels (implemented via TCP).
Algorand algorithm relies on two main ingredients: a cryptographic sortition and a variant of a Byzantine agreement algorithm.
The cryptographic sortition implements the {\sf getToken} operation by selecting the block proposer.
This is achieved by selecting at random  a committee (that is a small fraction of users weighed by their currency balance \(\alpha_p\), which boils down to a proof-of-stake mechanism) and providing them a random priority, so that with high probability, the highest priority committee member will be in charge of proposing the new block for the current round. 
The variant of Byzantine agreement algorithm BA* implements the {\sf consumeToken} operation, that is the commitment to append this new valid block in the blockchain. 
BA* guarantees that in a favorable environment (strongly synchronous environment augmented with synchronized clocks), if all honest participants have received the same valid block, then this block will be appended to the blockchain (see Lemma 2~\cite{Algorand2017-techreport}). On the other hand, if there is no agreement on that block (because the highest priority committee member is malicious or the network is not strongly synchronous), then BA* may create forks with probability less than $10^{-7}$ (Theorem 2~\cite{Algorand2017-techreport}). This makes  Algorand
 a probabilistic implementation of a strongly consistent BlockTree composed with a Frugal Oracle, with $k=1$.

\subsection{PeerCensus}
\label{sec:peercensus}

PeerCensus~\cite{PeerCensus}  is a permissionless blockchain. Processes are characterized by their computational power represented by \(\alpha_p\) (once normalized as \(\sum_{p\in V}\alpha_p = 1\)). PeerCensus  assumes a semi synchronous environment, that is,
in every period of length \(b\) there must be a strongly synchronous period of length \(s < b\).  PeerCensus is not  strictly speaking a blockchain-based algorithm (as Bitcoin or Byzcoin), in the sense that it does not store a sequence of application transactions, but provides a secure and fully distributed timestamping service. This service is implemented by a 
dynamic Byzantine tolerant consensus algorithm which tracks the committee members of the consensus algorithm through the creation of chained key blocks. The {\sf getToken} operation is implemented by a proof-of-work mechanism, and the  {\sf consumeToken} operation, implemented by the Byzantine consensus, commits a single key block among the concurrent ones, that is returns true for a single token, as long as no more than a $1/3$ of the committees members are Byzantine (secure state). Theorem 1~\cite{PeerCensus} states that the secure state is reachable with high probability if the computational power owned by the adversary, \(\alpha_A\), is less than $1/3$. Thus under these assumptions PeerCensus  implements a strongly consistent BlockTree composed with a Frugal Oracle, with $k=1$.   Note however that in~\cite{ALLS16} the authors have analyzed the probability that PeerCensus reaches a secure  state by examing the composition of successive quorums, and have shown that this probability is decreasing as a function of \(\alpha_A\). For instance, if \(\alpha_A = 1/4\), then the probability that PeerCensus reaches a secure state is only equal to $1/3$.

\subsection{Red Belly}
Red Belly~\cite{redbelly17} is a consortium blockchain, meaning that any process \(p \in V\) is allowed to {\sf read} the BlockTree but a predefined subset \(M \subseteq V\) of processes are allowed to {\sf append} blocks. Each process \(p \in M\) as a merit parameter set to \(\alpha_p = 1/|M|\) while each process \(p \in V \setminus M\) has a merit parameter \(\alpha_p = 0\). Processes are asynchronous (i.e.,  there is  no assumption on their respective computational speed) and are connected with partially synchronous~\cite{jacm88} (i.e., messages are delivered in unknown but finite time), reliable and authenticated communication channels. 
Each process \(p \in M\) can invoke the {\sf getToken} operation with their new block and will receive a token. The  {\sf consumeToken} operation, implemented by a Byzantine consensus algorithm run by all the processes in \(V\), returns true for the uniquely decided  block. 
Thus Red Belly  BlockTree  contains a unique blockchain, meaning that the selection function \(f\) is the trivial projection function from \(\mathcal{BT} \mapsto \mathcal{BC}\) which associates to the BT-ADT its unique existing chain of the BlockTree.
 As a consequence Red Belly relies on a Frugal Oracle with $k=1$, and by the properties of Byzantine agreement implements a  strongly consistent BlockTree (see Theorem 3~\cite{redbelly17}). 
 
 \subsection{HyperLedger Fabric}
 HyperLedger Fabric~\cite{HyperLedger} is a system allowing to deploy and operate persmissioned blockchains. Any process \(p\in V\) is allowed to {\sf read} the BlockTree, however, only a subset of \(M \subseteq V\) is allowed to {\sf append} blocks to the BlockTree. Every process of \(M\) has the same merit parameter \(\alpha_M = 1/|M|\) while processes of \(V \setminus M\) have a null merit parameter. 
HyperLedger Fabric assumes eventual synchrony and reliable channels. Transactions are executed by a dedicated set of processes called endorsers. Executed transactions are then ordered through atomic broadcast primitive so as to gather them into a block. HyperLedger Fabric relies on a leader election to determine which process will generate the next block. Transactions are appended in a block until a stop condition is met. A stop condition refers either on a maximal number of transactions in a block or a maximal elapsed time since the first transaction included in the block. The block is then broadcasted and a new block is created to gather new incoming transactions.
By construction, HyperLedger Fabric ensures that a unique token (\(k=1\)) is consumed, thus HyperLedger Fabric implement a strongly consistent BlockTree.

\section{Conclusions and Future Work}
The paper presented an extended formal specification of blockchains and derived interesting conclusion on their implementability. 
Let us note that the presented work is intended to provide the groundwork  for  the  construction of a sound hierarchy of blockchain abstractions and correct implementations.  Future work will focus on several open issues, such as the solvability of Eventual Prefix in message-passing, the synchronization power of other oracle models, and fairness properties for oracles. 

\section*{Acknowledgment}
We are grateful to Mathieu Perrin and anonymous reviewers for their insightful comments on a previous version of the current paper.



\newpage
\bibliographystyle{plain}
\bibliography{btadt}

\end{document}